\newcommand{\bfx}{\mathbf{x}}
\newcommand{\bfy}{\mathbf{y}}
\newcommand{\bfe}{\mathbf{e}} 
\newcommand{\ttT}{\mathsf{T}}
\newcommand{\half}{\mathrm{half}}
\newcommand{\bbR}{\mathbb{R}}
\newcommand{\argmax}{\mbox{argmax}}
\newcommand{\cD}{\mathcal{D}} 
\newcommand{\cP}{\mathcal{P}} 
\newcommand{\cY}{\mathcal{Y}} 
\newcommand{\iid}{\mathrm{i.i.d.}} 
\newcommand{\N}{\mathbb{N}}
\newcommand{\var}{\mathrm{Var}}
\newcommand{\R}{\mathbb{R}}
\newcommand{\E}{\mathbb{E}}
\newcommand{\bone}{\mathbf{1}}
\newcommand{\maxprob}{\mathfrak{d}}
\newcommand{\tX}{\tilde{X}}
\newcommand{\multi}[2]{\mathrm{Mult}\left(#1, #2\right)}
\newcommand{\bino}[2]{\mathrm{Bin}\left(#1, #2\right)}
\newcommand{\ber}[1]{\mathrm{Ber}\left(#1\right)}
\newcommand{\bp}{\mathbf{p}}
\newcommand{\bx}{\mathbf{x}}
\newcommand{\bv}{\mathbf{v}}
\newcommand{\Z}{\mathbb{Z}}
\newcommand{\tr}{\tilde{r}}
\newcommand{\kl}[2]{\mathrm{D_{KL}}\left(#1\middle\|#2\right)}
\newcommand{\tv}[2]{\mathrm{D_{TV}}\left(#1, #2\right)}
\newcommand{\cE}{\mathcal{E}}
\newcommand{\bC}{\mathbf{C}}
\newcommand{\CEF}[1]{\bC^{\mathrm{EF},{#1}}}
\newcommand{\imaxa}[1]{i^*_a(#1)}
\newcommand{\cov}[1]{\mathrm{Cov}[#1]}
\newcommand{\cN}{\mathcal{N}}
\newcommand{\diag}{\mathrm{diag}}
\newcommand{\supp}{\mathrm{supp}}
\newcommand{\bzero}{\mathbf{0}}
\newcommand{\tSigma}{\tilde{\Sigma}}
\newcommand{\be}{\mathbf{e}}
\newcommand{\cA}{\mathcal{A}}
\newcommand{\cAbal}{\mathcal{A}_{\mathrm{bal}}}
\newcommand{\bbP}{\mathbb{P}}
\newcommand{\bu}{\mathbf{u}}
\newcommand{\of}{\overline{f}}
\newtheorem{theorem}{Theorem}
\newtheorem{lemma}{Lemma}
\newtheorem{observation}{Observation}
\newtheorem{definition}{Definition}
\title{Fair Allocation of Indivisible Goods with Variable Groups}
\author{
    Paul G\"{o}lz\textsuperscript{\rm 1},
    Ayumi Igarashi\textsuperscript{\rm 2},
    Pasin Manurangsi\textsuperscript{\rm 3},
    Warut Suksompong\textsuperscript{\rm 4}
}
\begin{document}

\maketitle

\begin{abstract}
We study the fair allocation of indivisible goods with \emph{variable groups}.
In this model, the goal is to partition the agents into groups of given sizes and allocate the goods to the groups in a fair manner.
We show that for any number of groups and corresponding sizes, there always exists an \emph{envy-free up to one good (EF1)} outcome, thereby generalizing an important result from the individual setting.
Our result holds for arbitrary monotonic utilities and comes with an efficient algorithm.
We also prove that an EF1 outcome is guaranteed to exist even when the goods lie on a path and each group must receive a connected bundle.
In addition, we consider a probabilistic model where the utilities are additive and drawn randomly from a distribution.
We show that if there are $n$ agents, the number of goods~$m$ is divisible by the number of groups~$k$, and all groups have the same size, then an \emph{envy-free} outcome exists with high probability if $m = \omega(\log n)$, and this bound is tight.
On the other hand, if $m$ is not divisible by~$k$, then an envy-free outcome is unlikely to exist as long as $m = o(\sqrt{n})$.
\end{abstract}

\section{Introduction}

Fairly allocating limited resources is a fundamental societal challenge, with applications ranging from dividing household supplies among families to distributing personnel among schools or other public institutions.
The problem of \emph{fair division} has accordingly received interest in several disciplines, including in computational social choice and multi-agent systems \citep{BouveretChMa16,Markakis17,Aziz20,Walsh20}. 

Most of the work in fair division assumes that each recipient of a bundle of resources is an \emph{individual} agent, represented by a single preference.
However, when distributing resources among families, schools, or institutions, each recipient in fact consists of multiple agents.
Although these agents share the same set of resources and derive full value from the resources in their set, they may have differing preferences over the resources.
This has motivated several researchers to study the \emph{(fixed-)group} model, where the agents are partitioned into groups and the aim is to allocate the resources in a fair manner among the groups \citep{ManurangsiSu17,ManurangsiSu24,GhodsiLaMo18,SegalhaleviNi19,KyropoulouSuVo20,SegalhaleviSu23,CaragiannisLaSh25,GolzYa26,ManurangsiMe25}.
For instance, \citet{ManurangsiSu22} investigated the notion of \emph{envy-freeness up to $c$ goods (EF$c$)}.
This means that no agent would prefer another group's bundle over her own group's if some set of at most $c$ goods were removed from the other group's bundle.
Manurangsi and Suksompong proved that when the number of groups is constant, there exists an EF$c$ allocation for $c = O(\sqrt{n})$, where $n$ denotes the total number of agents, and this bound is also asymptotically tight.

The fixed-group model is appropriate when membership in the groups is predetermined, as in the allocation among families or countries.
In other applications, however, the resource allocator can select the partition of agents into groups alongside the allocation.
This is the case, for example, when dividing workers in an organization into teams and assigning resources to these teams.
In light of this, \citet[Sec.~5]{KyropoulouSuVo20} proposed a \emph{variable-group} model, in which a partition of the agents into groups can be chosen along with an allocation of the resources.\footnote{For further motivation of the variable-group model, we refer to the papers by \citet{KyropoulouSuVo20} and \citet{SegalhaleviSu21}.}

When the resources consist of \emph{divisible} items such as time, \citet{SegalhaleviSu21} proved that an envy-free outcome always exists in the variable-group model.
On the other hand, if the resources contain \emph{indivisible} items such as books or gym equipment, \citet{KyropoulouSuVo20} showed that an EF1 outcome can be ensured in the case of two groups, and an outcome satisfying a relaxation of proportionality---which is fundamentally weaker than EF1---can be satisfied for any number of groups.
This differs from the fixed-group model, where even for two groups, the optimal EF$c$ guarantee deteriorates as the number of agents grows.

In this paper, we expand and deepen our understanding of fairness, particularly envy-freeness, when allocating indivisible goods in the variable-group model.
For instance, we study the following question: does an EF1 outcome exist for any number of groups and any corresponding sizes, or is it sometimes necessary to relax the notion to EF$c$ for some (possibly non-constant) $c$?
As we shall see, the flexibility provided by this model enables remarkably strong fairness guarantees to be made.

\subsection{Our Results}

In \Cref{sec:EF1-existence}, we answer the question above in the positive: for any desired group sizes, an EF1 outcome exists and, moreover, can be computed efficiently (\Cref{thm:ef1-existence}).
This result holds even for arbitrary monotonic utilities, and significantly generalizes the well-known result by \citet{LiptonMaMo04} that EF1 allocations always exist in the \emph{individual setting} where each group has size one.
Our algorithm is a careful extension of Lipton et al.'s classic \emph{envy cycle elimination algorithm}.
This positive result stands in contrast to the fixed-group setting, where $c = \Omega(\sqrt{n})$ is required to guarantee the existence of an EF$c$ allocation \citep{ManurangsiSu22}.

In \Cref{sec:EF1-connected}, we strengthen the previous result by showing that, if the goods lie on a path, EF1 allocations exist even when each group must receive a connected bundle (\Cref{thm:EF1:group}).
The connectivity requirement is natural when the goods correspond to time slots or offices along a corridor, and has been studied in several papers \citep{BouveretCeEl17,BeiIgLu22,BiloCaFl22,Igarashi23}.
This existence result also holds for monotonic utilities, and generalizes results from the individual setting by \citet{BiloCaFl22} and \citet{Igarashi23}.
However, like in the individual setting, the result does not come with an efficient algorithm.

Finally, in \Cref{sec:EF-asymptotic}, we consider a probabilistic model where the utilities are additive and each agent's utility for each good is drawn independently at random from a non-atomic distribution.\footnote{A distribution is called \emph{non-atomic} if it does not put a positive probability on any single point.}
We are interested in when an \emph{envy-free}\footnote{Envy-freeness corresponds to EF$c$ for $c = 0$. 
An envy-free outcome does not always exist, e.g., when there are two groups and only
one valuable good.} outcome exists with high probability (that is, with probability approaching~$1$ as the number of agents~$n$ grows), assuming that the number of groups~$k$ is fixed and all groups have the same size.
Interestingly, we show that the existence depends on whether the number of goods~$m$ is divisible by~$k$.
On the one hand, if $m$ is divisible by $k$, the transition occurs at merely $\Theta(\log n)$: an envy-free outcome is unlikely to exist if $m = o(\log n)$ (\Cref{thm:lb-div-asympt}), but likely to exist if $m = \omega(\log n)$ (\Cref{thm:existence-div}).
On the other hand, if $m$ is not divisible by $k$, such an outcome is unlikely to exist as long as $m = o(\sqrt{n})$ (\Cref{thm:lb-indiv-asympt}).\footnote{In contrast, a result in the fixed-group setting by \citet{ManurangsiSu17} implies that an envy-free outcome is likely to exist when $m = \Omega(n\log n)$.}

\subsection{Further Related Work}
\label{sec:related-work}

As mentioned earlier, a number of authors have investigated fair division among groups, mostly focusing on the fixed-group model.
Besides envy-freeness, \citet{ManurangsiSu22} also obtained bounds on \emph{proportionality} as well as \emph{consensus $1/k$-division}---the latter is even more stringent than envy-freeness, as it requires agents to value all bundles of goods equally.
While these bounds were already tight in terms of $n$,
\citet{CaragiannisLaSh25} and \citet{ManurangsiMe25} recently improved their dependence on $k$.
\citet{BuLiLi23}, \citet{BarmanEbLa25}, as well as \citet{KawaseRoSa25} studied settings that can be interpreted as special cases of the fixed-group model.
For example, Bu et al.'s setting corresponds to the fixed-group model when each group has size two.

The probabilistic model we consider in \Cref{sec:EF-asymptotic} falls under the framework of \emph{asymptotic fair division}.
This framework was introduced by \citet{DickersonGoKa14} and subsequently studied in several papers \citep{KurokawaPrWa16,Suksompong16,ManurangsiSu20,ManurangsiSu21,ManurangsiSu25,BaiGo22,BaiFeGo22,BenadeHaPs24,ManurangsiSuYo25,YokoyamaIg25}.
The motivation is that, since an allocation satisfying envy-freeness (or some other fairness notion) does not always exist, it is natural to ask \emph{when} an envy-free allocation is likely to exist if the utilities are drawn at random.
We highlight two relations between our results and known results from this line of work.
Firstly, \citet{ManurangsiSu17} examined the fixed-group model and showed that an envy-free allocation is unlikely to exist unless $m = \Omega(n)$.
This contrasts with our results, which show that existence is already likely in the variable-group model when $m = \omega(\log n)$.
Secondly, in the individual setting, \citet{ManurangsiSu20,ManurangsiSu21} proved that the threshold for the existence of envy-free allocations differs according to whether $m$ is divisible by~$n$.
However, the (multiplicative) gap in their case is only logarithmic (i.e., $\Theta(n)$ vs.~$\Theta(n\log n/\log\log n)$), whereas our gap is much larger (i.e., $\Theta(\log n)$ vs.~$\Omega(\sqrt{n})$).

\section{Preliminaries}

For any positive integer $t$, let $[t] := \{1,2,\dots,t\}$.
Let $k \ge 2$ and $n_1,\dots,n_k$ be positive integers, $N = [n]$ be a set of $n = n_1 + \dots + n_k$ agents, and $M = [m]$ be a set of $m$ goods.
A \emph{bundle} refers to a (possibly empty) set of goods.
Each agent $a\in N$ has a utility function $u_a$ over the sets of goods in $M$; for a single good $g\in M$, we sometimes write $u_a(g)$ instead of $u_a(\{g\})$.
We assume that the utilities are \emph{monotonic}, meaning that $u_a(M')\le u_a(M'')$ for any $a\in N$ and $M'\subseteq M''\subseteq M$, and \emph{normalized}, i.e., $u_a(\emptyset) = 0$.
The utilities are called \emph{additive} if $u_a(M') = \sum_{g\in M'} u_a(g)$.
When utilities are non-additive, we assume that an algorithm can query any agent's utility for any set of goods in constant time.
An \emph{instance} in the variable-group model consists of the set of agents $N$, the set of goods $M$, the agents' utility functions, and the desired group sizes $n_1,\dots,n_k$.

We would like to partition the $n$ agents into $k$ groups $C_1,\dots,C_k$ of sizes $n_1,\dots,n_k$, respectively, and allocate the $m$ goods among these groups.
We write $C = (C_1,\dots,C_k)$.
An \emph{allocation} $A = (A_1,\dots,A_k)$ consists of $k$ disjoint bundles, where bundle~$A_i$ is allocated to the $i$-th group; it is called \emph{complete} if $A_1\cup \dots\cup A_k = M$.
An \emph{outcome} consists of a partition of agents $C$ along with a complete allocation of goods $A$.
An outcome $(C, A)$ is
\begin{itemize}
\item \emph{envy-free up to $c$ goods (EF$c$)}, for a given non-negative integer~$c$, if for every $i,j\in [k]$ and every agent $a\in C_i$, there exists a set $B\subseteq A_j$ with $|B|\le c$ such that $u_a(A_i) \ge u_a(A_j\setminus B)$;
\item \emph{envy-free} if it is EF$0$.
\end{itemize}
The \emph{individual setting} refers to a special case of this model where $n_1 = \dots = n_k = 1$ (and therefore $k = n$).

\section{EF1: Existence and Computation}
\label{sec:EF1-existence}

Recall that in the individual setting, a classic result of \citet{LiptonMaMo04} states that for any instance with arbitrary monotonic utilities, an EF1 allocation exists.
Moreover, such an allocation can be found in time $O(mn^3)$ via the \emph{envy cycle elimination algorithm}.
Intuitively, this algorithm allocates one good at a time in an arbitrary order.
The algorithm maintains an ``envy graph'', which is a directed graph that captures the envy relations among the agents.
In particular, the vertices represent the agents, and there is an edge from one agent to another agent if and only if the former agent envies the latter agent.
Hence, an agent is unenvied exactly when the corresponding vertex has no incoming edge. 
Each good is assigned to an unenvied agent, and if the assignment causes a cycle to be formed in the envy graph, the cycle is eliminated by giving each agent's bundle to the preceding agent on the cycle.
Once all cycles have been eliminated, the next good can be assigned to an unenvied agent, and EF1 is maintained throughout the algorithm.

A priori, it may appear that the envy cycle elimination algorithm is not well-suited for the group setting.
Indeed, it is unclear when a group should be considered to ``envy'' another group, as different agents in the same group may have differing opinions about the groups' bundles.
Nevertheless, we show that adopting an alternative perspective on the algorithm allows for its generalization to the group setting.
Specifically, instead of moving the \emph{bundles} as in the typical interpretation, we can reinterpret the envy cycle elimination step as moving the \emph{agents} along the cycles instead.
This interpretation enables a generalization of the algorithm to accommodate groups, as it permits the reassignment of individual agents according to their envy relations while preserving group sizes, thereby leveraging the flexibility of the variable-group model.
In addition to extending the seminal result of \citet{LiptonMaMo04}, the following theorem also strengthens a variable-group guarantee due to \citet[Thm. 5.6]{KyropoulouSuVo20}, which holds under additive utilities for a much weaker notion than EF1. 

\begin{theorem} \label{thm:ef1-existence}
For any instance with arbitrary monotonic utilities, there exists an EF1 outcome.
Moreover, such an outcome can be computed in time $O(mn^3)$.
\end{theorem}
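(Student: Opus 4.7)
The approach is to adapt Lipton et al.'s envy cycle elimination algorithm, with the key reinterpretation that, in the variable-group model, cycles should be broken by \emph{reassigning agents} rather than by permuting bundles. I would set up an agent-level envy graph $G$ on $N$: given the current outcome $(C,A)$, draw an arc $a\to b$ whenever $a\in C_i$ and $b\in C_j$ with $i\neq j$ and $u_a(A_j)>u_a(A_i)$. Call $C_i$ \emph{unenvied} if no arc of $G$ points to any of its members. Starting from empty bundles and an arbitrary partition of agents respecting the sizes $n_1,\dots,n_k$, the algorithm processes goods one by one: before assigning each good $g$, it repeatedly picks a directed cycle $a_1\to a_2\to\cdots\to a_\ell\to a_1$ in $G$ and rotates, moving each $a_j$ into the group that currently contains $a_{j+1}$ (indices mod $\ell$). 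Once $G$ is acyclic, $g$ is assigned to an unenvied group.

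Two structural facts make the algorithm well-defined. First, a rotation preserves the group sizes, since for every $C_i$ both the numbers of leaving and entering agents equal $|\{j:a_j\in C_i\}|$. Second, whenever $G$ is acyclic, an unenvied group exists: any cycle in the quotient graph on groups (with edge $C_i\to C_j$ iff some agent in $C_i$ envies $A_j$) would lift to a cycle of $G$, so the quotient is acyclic and hence has a source.

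Correctness reduces to showing that EF1 is an invariant throughout. Initially all bundles are empty and EF1 holds trivially. When $g$ is added to an unenvied $C_i$: for any $a$ outside $C_i$, $u_a(A_{i(a)})\ge u_a(A_i)=u_a((A_i\cup\{g\})\setminus\{g\})$, so removing $g$ witnesses EF1 toward $C_i$, while agents inside $C_i$ only see their own bundle grow, preserving their prior EF1 witnesses by monotonicity. When a cycle is rotated, the bundles are unchanged and each moving agent $a_j$ strictly increases its own-group utility along the envy edge, so all of $a_j$'s previous EF1 witnesses to other groups remain valid, and EF1 toward its former group holds with $B=\emptyset$; agents that do not move are unaffected.

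The most delicate step, and where I expect the main technical effort, is the runtime bound of $O(mn^3)$. I would track the potential $\phi=|\{(a,C_j):u_a(A_j)>u_a(A_{i(a)})\}|$. Each rotation strictly reduces $\phi$ by at least $\ell\ge 2$: for every moving agent $a_j$ the threshold $u_{a_j}(A_{i(a_j)})$ strictly rises along the envy edge, so the set of groups $a_j$ envies can only shrink, and $a_j$ stops envying its new home (which was previously envied); no non-cycle agent sees any change. Each good addition raises $\phi$ by at most $n-1$, since only envies directed at the receiving group can newly appear. So the total number of rotations across the whole run is $O(mn)$. A single rotation can be executed in $O(n^2)$ time (DFS to find a cycle, followed by recomputing the envy of the at most $n$ moved agents toward the $k\le n$ groups), giving the overall bound $O(mn^3)$.
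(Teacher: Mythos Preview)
Your proposal is correct and follows essentially the same approach as the paper: allocate goods one at a time to an unenvied group and, in between, break envy cycles by reassigning the agents along them rather than permuting bundles, with the number of (agent, envied-group) envy pairs serving as the decreasing potential. The only cosmetic difference is that you build the envy graph on agents while the paper builds a multigraph on the $k$ groups (one parallel edge per envying agent); your agent cycles project to the paper's group cycles, and the EF1-invariant, termination, and $O(mn^3)$ runtime arguments coincide.
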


\begin{proof}
We use the following generalization of the envy cycle elimination algorithm.
\begin{enumerate}
\item Let $C$ be an arbitrary partition of the $n$ agents into groups of sizes $n_1,\dots,n_k$, and let $A$ be an empty allocation.
\item Construct an \emph{envy graph}, which is a directed graph with the $k$ groups as the vertices; this graph will be updated as the algorithm proceeds.
For each agent, add an edge from the agent's group to another group if the agent envies the latter group.
(Thus, the graph initially contains no edges.)
In particular, there can be multiple edges from one vertex to another vertex.
\item Take an arbitrary unallocated good, and allocate it to any group with no incoming edge in the envy graph.
Update the envy graph.
\item If there is at least one directed cycle in the envy graph, consider an arbitrary cycle.
For each edge in the cycle, move the agent associated with this edge to the group that the agent envies.
Update the envy graph.
If there is still a directed cycle in the envy graph, repeat this step.
\item If there is still an unallocated good, go back to Step~3.
Otherwise, output the current outcome $(C,A)$.
\end{enumerate}

We first show that the algorithm outputs a valid outcome.
To this end, we prove that each time we eliminate a cycle in Step~4, the number of edges in the envy graph decreases.
For each agent \emph{not} associated with an edge in the cycle, the agent's own bundle as well as all other bundles remain the same, so the number of envy edges from the agent also remains the same.
On the other hand, for each agent associated with an edge in the cycle, the agent is assigned to a better bundle in her view among the $k$ bundles, so the number of envy edges from the agent decreases by at least one.
Hence, the total number of envy edges decreases, which means that the process of eliminating cycles must terminate.
When the envy graph contains no cycle, there must exist a vertex with no incoming edge, and we can allocate the next good to the corresponding group.
It follows that all goods are allocated.
Moreover, the sizes of the $k$ groups remain $n_1,\dots,n_k$ throughout, so the algorithm outputs a valid outcome.

Next, we show that the outcome returned by the algorithm is EF1.
Specifically, we prove that at every point during the execution of the algorithm, the partition $C$ and the (possibly incomplete) allocation $A$ together yield EF1.
This is true at the beginning of the algorithm, as the allocation is empty.
When a good is allocated, it is allocated to a group with no incoming edge, so any envy towards the group can be eliminated by removing this good.
Moreover, when a cycle is eliminated, for each agent \emph{not} associated with an edge in the cycle, the agent's own bundle as well as all other bundles remain the same, so the EF1 invariant is maintained.
On the other hand, for each agent associated with an edge in the cycle, the agent is assigned to a better bundle in her view among the $k$ bundles.
Since any envy that the agent has towards another bundle can be eliminated by removing a good from the bundle before, the same remains true afterwards, and the EF1 invariant is again maintained.

Finally, we analyze the running time of the algorithm.
There are $m$ allocated goods, and each allocated good increases the number of envy edges by at most $n$.
Finding and eliminating a directed cycle can be done in time $O(k^2)$, and the elimination decreases the number of envy edges by at least one.
As the algorithm can query any agent's utility for any set of goods in constant time, updating the envy graph takes time $O(nk)$.
Since $k\le n$, the algorithm runs in time $O(mn^3)$.
\end{proof}

\section{EF1: Adding Connectivity Constraints}
\label{sec:EF1-connected}

Having established the general existence of EF1 outcomes, in this section, we impose an additional requirement in the form of connectivity.
Specifically, we assume that the goods lie on a path $1,2,\dots,m$.
An allocation $A$ is said to be {\em connected} if $A_j$ forms an interval on the path for each $j\in [k]$. 
An outcome $(C,A)$ is called {\em connected} if $A$ is connected.

In the individual setting, \citet{Igarashi23} proved that a connected EF1 allocation is guaranteed to exist, thereby strengthening an earlier result of \citet{BiloCaFl22}, which holds for EF2.
For their proofs, Bil\`{o} et al.~and Igarashi developed a discretization approach using \emph{Sperner's lemma} \citep{Sperner28}. 
This idea was originally due to \citet{Su99}, who provided an elegant proof that an envy-free allocation of a cake (i.e., a heterogeneous divisible good) always exists. 
Su's proof encodes possible divisions as points in the standard simplex and employs a triangulation of the simplex along with a coloring of each vertex of the triangulation. By applying Sperner's lemma, Su showed the existence of an elementary simplex (i.e., a simplex not composed of smaller simplices in the triangulation) whose vertices correspond to allocations that are similar to one another, where each allocation ensures that a different agent is envy-free. 
An infinite sequence of such simplices, with their sizes progressively shrinking, converges to an envy-free allocation.

Our main result of this section generalizes the results of \citet{BiloCaFl22} and \citet{Igarashi23} to the group setting.
The result is formally stated as follows.

\begin{theorem}\label{thm:EF1:group}
For any instance with arbitrary monotonic utilities such that the goods lie on a path, there exists a connected EF1 outcome.
\end{theorem}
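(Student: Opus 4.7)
The plan is to generalize the Sperner-lemma argument of \citet{Igarashi23} from the individual setting to the variable-group setting.

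First, I would parametrize connected allocations by the vector of bundle sizes $x = (x_1, \ldots, x_k)$ lying in the simplex $\Delta = \{x \in \R^k_{\ge 0} : \sum_{i} x_i = m\}$. An integer point $x$ defines a connected allocation $A(x) = (A_1(x), \ldots, A_k(x))$ whose $i$-th bundle consists of the $x_i$ consecutive goods starting at position $x_1 + \cdots + x_{i-1} + 1$; non-integer points are handled by rounding. For each agent $a$ and each cut vector $x$, let $B_a(x) \subseteq [k]$ denote the set of bundle indices $i$ for which $u_a(A_i(x)) \ge u_a(A_j(x) \setminus \{g\})$ holds for every $j \in [k]$ and some $g \in A_j(x)$; this set is nonempty, since it contains any favorite bundle of $a$.

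Next, I would reformulate the goal as a matching question: a connected EF1 outcome with cut vector $x$ exists if and only if there is an assignment $\mu \colon N \to [k]$ with $\mu(a) \in B_a(x)$ for all $a$ and $|\mu^{-1}(i)| = n_i$ for all $i$. By Hall's theorem for many-to-one bipartite matchings, such a $\mu$ exists precisely when, for every $T \subseteq [k]$, the number of agents $a$ with $B_a(x) \subseteq T$ is at most $\sum_{i \in T} n_i$. My goal is thus to exhibit an integer $x$ at which this Hall condition is met. Following Igarashi, I would fix a fine triangulation of $\Delta$ with integer vertices, label each vertex $v$ with a pair consisting of an agent $\alpha(v)$ and an index $\ell(v) \in B_{\alpha(v)}(v)$, and invoke Sperner's lemma to extract an elementary simplex whose vertices collectively realize every label in $[k]$. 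Because any two integer cut vectors of an elementary simplex differ only by moving a single cut by one unit, the $k$ corresponding allocations differ in at most one good each, so the agent-to-group assignment read off this rainbow simplex can be pulled back to a genuine connected EF1 outcome.

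The principal obstacle is designing the labeling $v \mapsto (\alpha(v), \ell(v))$ so that at the rainbow simplex produced by Sperner's lemma, the agents $\alpha(v_0), \ldots, \alpha(v_{k-1})$ together with their indices $\ell(v_0), \ldots, \ell(v_{k-1})$ certify the Hall condition for the group sizes $(n_1, \ldots, n_k)$. In the individual setting ($n = k$), Igarashi uses a bijective, Latin-square-inspired labeling for which every rainbow simplex automatically assigns a distinct agent per index; for general groups, the labeling must respect the multiplicities $n_i$ while still satisfying Sperner's boundary condition on $\partial \Delta$. I would approach this either by building a balanced labeling from a fixed partition of $N$ into blocks of sizes $n_1, \ldots, n_k$ and cycling through the blocks along the combinatorial structure of the triangulation, or by appealing to a colorful/polytopal variant of Sperner's lemma tailored to this setting. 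Verifying the boundary condition in the degenerate case where some coordinate $x_i$ equals zero, where the monotonicity of $u_a$ must be used to exclude inappropriate labels, is expected to be the most delicate part of the argument.
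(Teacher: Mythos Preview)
Your plan correctly identifies the main difficulty, but the proposed resolution has a genuine gap. A Sperner-type rainbow simplex yields $k$ vertices, each carrying a single (agent, index) label; this gives you information about only $k$ agents, whereas you must place all $n$ agents into groups of prescribed sizes. Your Hall condition is a statement about \emph{every} agent's set $B_a(x)$ at a common cut vector~$x$, and it is not clear how data from $k$ agents at $k$ nearby vertices would certify it. The two fixes you sketch---a ``balanced labeling'' that cycles through blocks of sizes $n_1,\dots,n_k$, or an unspecified colorful Sperner variant---do not, as stated, bridge this gap: cycling the owner label still produces a single agent per vertex, and colorful Sperner lemmas typically require one coloring per \emph{color class}, not one per agent with multiplicities.

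The paper's argument avoids Sperner's lemma altogether at this step. For each agent~$a$ one takes the proper coloring $\lambda_a$ (from Igarashi's virtual utilities), extends it affinely to a continuous map $\of^{(a)}:S_m\to\Delta^{k-1}$, and averages over all agents to obtain $\of=\tfrac{1}{n}\sum_a \of^{(a)}$. A topological argument (every face of the simplex is mapped into itself) shows that $\of$ is \emph{surjective}, so there is a point $\bfx^*$ with $\of(\bfx^*)=(n_1/n,\dots,n_k/n)$. The barycentric coefficients of $\bfx^*$ in its elementary simplex then give, for each agent, a fractional assignment to indices in $[k]$ whose column sums are exactly $n_1,\dots,n_k$; a rounding lemma of Igarashi--Meunier converts this into an integral assignment $\pi$ with $|\pi^{-1}(i)|=n_i$ and $\pi(a)\in\{\lambda_a(\bfx^*_h):h\in[k]\}$. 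Finally, Igarashi's rounding lemma (sandwiching virtual utilities between true and up-to-one utilities) turns the elementary simplex into a single connected allocation that is EF1 for every agent. The two ingredients you are missing are thus (i) the passage from discrete colorings to a continuous averaged map and its surjectivity, and (ii) the fractional-to-integral matching lemma that respects the target group sizes.
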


Note that unlike \Cref{thm:ef1-existence}, this theorem does not come with efficient computation.
Indeed, the question of whether a connected EF1 allocation can be computed efficiently is open even in the individual setting \citep{Igarashi23}.

We start by recalling some basic notions of combinatorial topology. 
For any positive integer~$k$, a \emph{$(k-1)$-simplex}~$S$ is the convex hull of $k$ \emph{main vertices} $\bfx_1,\bfx_2,\ldots,\bfx_k$. 
We use the notation $S=\langle \bfx_1,\bfx_2,\ldots,\bfx_k \rangle$. 
The \emph{$(k-1)$-standard simplex} $\Delta^{k-1}$ is the $(k-1)$-simplex whose main vertices are given by the $j$-th unit vectors $\bfe^{j} \in \{0,1\}^k$ for $j \in [k]$, where $e^j_h=1$ if $h=j$ and $e^j_h=0$ otherwise. 

A {\em triangulation} $\ttT$ of a $(k-1)$-simplex $S$ is a collection of smaller ($k-1$)-simplices $S_1,S_2,\ldots,S_h$ such that the union of simplices $S_j$ for $j \in [h]$ is $S$, and for each $i \neq j$, the intersection $S_i \cap S_j$ is either empty or a face common to $S_i$ and $S_j$. 
We call $S_1,S_2,\ldots,S_h$ {\em elementary simplices}, and write $V(\ttT)$ for the set of vertices of a triangulation $\ttT$. 

For a triangulation $\ttT$ of a $(k-1)$-simplex $S$, a {\em coloring} is a mapping $\lambda \colon V(\ttT) \rightarrow [k]$ that assigns a color in $[k]$ to each vertex $\bfx \in V(\ttT)$. 
A coloring $\lambda$ is called {\em proper} if we can write $S=\langle \bfx_1,\bfx_2,\ldots,\bfx_k \rangle$ in such a way that if a vertex $\bfx \in V(\ttT)$ is colored with color~$j$ (i.e., $j =\lambda (\bfx)$), then $\bfx_j$ is a vertex of the minimal face containing $\bfx$. 
For example, if $k = 3$, then a $(k-1)$-simplex can be viewed as a triangle.
In a proper coloring of the simplex, the three main vertices $\bfx_1,\bfx_2,\bfx_3$ are colored $1,2,3$, respectively, and each vertex $\bfx$ on an edge between $\bfx_i$ and $\bfx_j$ is colored either $i$ or $j$.

In our proof, the space of connected allocations is encoded by the positions of $k-1$ ``knives''. 
Like \citet{BiloCaFl22} and \citet{Igarashi23}, we consider a triangulation of this space, where each knife is at either a vertex or an edge of the path.
More precisely, consider the following simplex: 
\[
S_m := \left\{ \bfx \in \bbR^{k-1} \,\middle|\, \frac{1}{2} \le x_1 \le \dots \le x_{k-1} \le m +\frac{1}{2}  \right\}.
\]
Let $\ttT_{\half}$ be a {\em Kuhn's triangulation} of $S_m$ \citep{DengQiSa12,BiloCaFl22,Igarashi23}. 
The vertices of this triangulation are given by 
\[
V(\ttT_{\half})= \left\{  \bfx \in S_m  \,\middle|\, x_i  \in  \left\{\frac{1}{2},1, \ldots, m +\frac{1}{2} \right\}, \forall i \right\},
\]
where each elementary simplex $S=\langle \bfx_1,\bfx_2,\ldots,\bfx_k \rangle$ of $\ttT_{\half}$ satisfies the property that there exists a permutation $\phi \colon [k] \rightarrow [k]$ such that 
$
\bfx_{\phi(i+1)}=\bfx_{\phi(i)}+\frac{1}{2} \bfe^{\phi(i)}~\mbox{for each}~i \in [k-1] 
$.
Each vertex $\bfx \in V(\ttT_{\half})$ yields a partial allocation $A({\bfx})=(A_1(\bfx),A_2(\bfx),\ldots,A_k(\bfx))$ such that for each $j \in [k]$, the $j$-th bundle is given by
\[
A_j(\bfx) = \,\{ y \in [m]  \mid x_{j-1} < y < x_j \}, 
\]
with $x_0 = \frac12$ and $x_k = m+\frac12$.

With appropriate coloring and rounding, \citet{Igarashi23} demonstrated that a desired elementary simplex, guaranteed by Sperner's lemma, can be rounded to produce a connected EF1 allocation. 
The proof relies on the notion of a {\em virtual utility}, $\hat{u}_a(\bfx, j)$, defined for each vertex $\bfx$ of the triangulation and each index $j \in [k]$. 
This virtual utility determines agent~$a$'s most preferred bundle in a partial allocation corresponding to the vertex $\bfx$; in particular, $\hat{u}_a(\bfx, j) = 0$ if $A_j(\bfx) = \emptyset$, and $\hat{u}_a(\bfx, j) \ge 0$ otherwise.
\citet[Alg.~1]{Igarashi23} presented an algorithm that, given an elementary simplex $S=\langle \bfx_1,\bfx_2,\ldots,\bfx_k \rangle$ of $\ttT_{\half}$, produces an allocation $A=(A_1,A_2,\ldots,A_k)$ with the following property: each agent's estimate of the $j$-th bundle based on the virtual utility is upper-bounded by her true utility of $A_j$ and lower-bounded by the her ``up-to-one utility'' of $A_j$, as defined next.
For any connected subset of goods $I$, the \emph{up-to-one utility} $u^-_a(I)$ of agent $a$ is defined as
\begin{align*}
	u^-_a(I) :=
	\begin{cases}
		0 & \text{if $I = \emptyset$}; \\
		\min \big\{u_a(I\setminus\{g\}) \mid g\in I \\
		\text{ such that } I\setminus\{g\}\text{ is connected} \big\} & \text{if $I \neq \emptyset$.}
	\end{cases}
\end{align*}

The following lemma is stated as Lemma~3.2 in the work of \citet{Igarashi23}.\footnote{Although Igarashi's Lemma~3.2 is stated for the case $k = n$, the same proof also works when $k\ne n$.}

\begin{lemma}[\citealt{Igarashi23}]\label{lem:ef1:approx-correct}
	Consider the triangulation $\ttT_{\half}$ of $S_m$. There exists an algorithm that, given any elementary simplex $S=\langle \bfx_1,\bfx_2,\ldots,\bfx_k \rangle$ of $\ttT_{\half}$, returns a connected allocation $(A_1,A_2,\ldots,A_k)$ such that for every agent $a \in N$ and every pair of indices $i, j\in [k]$, we have
	$u_a(A_j) \ge \hat u_a(\bfx_i, j) \ge u_a^-(A_j)$.
\end{lemma}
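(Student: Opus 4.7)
The plan is to construct the allocation by a careful rounding of the knife positions and then verify the sandwich inequality using the chain structure of the elementary simplex. I would first analyze the structure of $S = \langle \bfx_1, \ldots, \bfx_k\rangle$. By the defining property of Kuhn's triangulation $\ttT_{\half}$, the vertices of $S$ can be ordered as $\bfy_1, \ldots, \bfy_k$ so that $\bfy_{i+1} = \bfy_i + \frac{1}{2}\bfe^{\pi(i)}$ for some permutation $\pi$ of $[k-1]$. Consequently, each coordinate $x^{(j)}$ takes exactly two values in $S$, namely $v_j := y_1^{(j)}$ and $v_j + \frac{1}{2}$; since these differ by $\frac{1}{2}$, exactly one is an integer in $[m]$ and corresponds to a potential ``boundary good'' $g_j$ that is literally on the knife at some vertices of $S$ and strictly inside a bundle at the others. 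Let $t_j := \pi^{-1}(j)$ be the step in the chain at which coordinate $j$ flips.

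I would then specify the rounding algorithm. For each knife $j \in [k-1]$ with a boundary good $g_j$, the algorithm decides whether $g_j$ joins $A_j$ or $A_{j+1}$, thereby pinning down a complete connected allocation $A = (A_1, \ldots, A_k)$. In most situations, the ``natural'' rounding---sending $g_j$ to whichever side already contains it in some partial allocation at $\bfy_i$---respects the sandwich. The subtle case is when a bundle's two adjacent knives both push their boundary goods toward it; here the algorithm uses the comparison of $t_{j-1}$ and $t_j$ to choose which of the two contested goods is kept in $A_j$ and which is reassigned to a neighbor, in a globally consistent way that preserves connectivity of every bundle.

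For the first inequality $u_a(A_j) \ge \hat{u}_a(\bfx_i, j)$, I would show that $A_j$ contains the set of goods underlying $\hat{u}_a(\bfx_i, j)$; monotonicity of $u_a$ then yields the inequality for free. For the second inequality $\hat{u}_a(\bfx_i, j) \ge u^-_a(A_j)$, I would argue that $A_j$ exceeds this underlying set by at most one good, located at an endpoint of the interval $A_j$. Removing that good preserves connectivity, so by monotonicity and the definition of $u^-_a$ we get $u^-_a(A_j) \le u_a(A_j \setminus \{g\}) \le \hat{u}_a(\bfx_i, j)$ for the appropriate endpoint good $g$.

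The main obstacle is the ``double boundary'' configuration: at a vertex $\bfx_i$ where both knives bordering bundle $j$ land on integers, the partial bundle $A_j(\bfx_i)$ excludes both adjacent boundary goods, so a naive union-style construction would overshoot $A_j(\bfx_i)$ by two goods and break the second inequality. Overcoming this requires a case analysis on the relative order of $t_{j-1}$ and $t_j$: the chain structure guarantees that whenever bundle $j$ is at risk of receiving two boundary goods simultaneously, exactly one of them can be safely handed to an adjacent bundle without creating an analogous violation there. Verifying this global consistency and then confirming the sandwich for every $(\bfx_i, j)$ pair is the combinatorial heart of Igarashi's proof, and the same argument goes through verbatim when $k \neq n$ because the structural properties of $\ttT_{\half}$ used above do not depend on the number of agents.
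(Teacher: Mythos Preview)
The paper does not prove this lemma at all: it is quoted as Lemma~3.2 of \citet{Igarashi23}, and the only argument the paper supplies is a one-line footnote observing that Igarashi's original proof, stated for $k=n$, goes through verbatim when $k\ne n$. Your proposal is therefore not comparable to ``the paper's own proof''---you have written a multi-paragraph sketch of the rounding and case analysis from Igarashi's paper, whereas the present paper simply outsources the entire argument by citation. Your closing sentence (``the same argument goes through verbatim when $k\neq n$ because the structural properties of $\ttT_{\half}$ used above do not depend on the number of agents'') is exactly the content of the paper's footnote and is all that is required here.
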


Our main lemma of this section is as follows. 

\begin{lemma} \label{lem:main-coloring-to-assignment}
Let $\lambda_1, \dots, \lambda_n: V(\ttT_\half) \to [k]$ be any proper colorings. Then, there exist an elementary simplex $S^* =\langle \bfx^*_1,\bfx^*_2,\ldots,\bfx^*_k \rangle$ of $\ttT_{\half}$ and $\pi: N \to [k]$ such that
\begin{enumerate}
\item $|\pi^{-1}(i)| = n_i$ for each $i \in [k]$, and
\item $\pi(a) \in \{\lambda_a(\bfx^*_h) \mid h \in [k]\}$ for each $a \in N$.
\end{enumerate}
\end{lemma}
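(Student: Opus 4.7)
The plan combines a KKM-type topological step (producing a point in $S_m$ where the aggregated colorings match the group-size vector) with a transportation-polytope argument (converting the resulting fractional data into an integer assignment $\pi$).

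First, I would define a continuous map $v:S_m \to \Delta^{k-1}$ by setting $v(\bfx) := \frac{1}{n}\sum_{a\in N}\bfe^{\lambda_a(\bfx)}$ at each vertex $\bfx\in V(\ttT_\half)$ and extending piecewise-linearly over each elementary simplex of $\ttT_\half$. Using that each $\lambda_a$ is proper with respect to the common enumeration $\bfx_1,\dots,\bfx_k$ of the main vertices of $S_m$, I would verify that $v$ is face-preserving: for every face $F_T$ of $S_m$ spanned by $\{\bfx_j : j\in T\}$, we have $v(F_T)\subseteq \mathrm{conv}\{\bfe^j : j\in T\}$, since each summand lands in this face.

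Next, I would apply the KKM lemma on $S_m$ to the closed sets $C_i := \{\bfx\in S_m : v_i(\bfx)\ge n_i/n\}$ for $i\in[k]$. Face-preservation of $v$, together with $\sum_i n_i/n = 1$, yields the KKM covering condition $F_T \subseteq \bigcup_{i\in T} C_i$ for every face $F_T$. So there exists $\bfx^*\in\bigcap_i C_i$; since $\sum_i v_i(\bfx^*) = \sum_i n_i/n = 1$ while $v_i(\bfx^*)\ge n_i/n$ for every $i$, all these inequalities are equalities and $v(\bfx^*) = (n_1/n,\dots,n_k/n)$. The point $\bfx^*$ lies in some elementary simplex $S^* = \langle \bfx^*_1,\dots,\bfx^*_k\rangle$, and writing $\bfx^* = \sum_h \alpha_h \bfx^*_h$ in barycentric coordinates gives, via linearity of $v$ on $S^*$, the identity
\[
n_i \;=\; \sum_h \alpha_h\,\bigl|\{a\in N : \lambda_a(\bfx^*_h)=i\}\bigr| \quad \text{for each } i\in[k].
\]

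The remainder is combinatorial. I would define the fractional assignment $y(a,i) := \sum_{h : \lambda_a(\bfx^*_h)=i}\alpha_h$, which satisfies $\sum_i y(a,i) = 1$ for each $a \in N$, $\sum_a y(a,i) = n_i$ for each $i \in [k]$, and $y(a,i) > 0$ only if $i \in \{\lambda_a(\bfx^*_h) : h \in [k]\}$. These constraints define a bipartite transportation polytope with integer supplies and demands; since this polytope is non-empty (witnessed by $y$) and its vertices are integer, it admits an integer feasible point, which is precisely the desired $\pi:N\to[k]$.

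The hard part is the topological step: establishing face-preservation of $v$ uniformly across all $n$ colorings, checking the KKM covering condition, and handling the case where $\bfx^*$ lies on the boundary between several elementary simplices (in which case $S^*$ can be chosen as any containing simplex and some $\alpha_h$ vanish, with the argument unaffected). Once $\bfx^*$ and the fractional $y$ are in hand, the conversion to an integer $\pi$ by transportation-polytope integrality is routine.
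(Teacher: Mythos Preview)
Your proposal is correct and closely parallels the paper's proof, with two minor differences in the tools invoked. First, for the topological step, the paper defines the same piecewise-linear map $v=\of$ and argues that it is \emph{surjective} onto $\Delta^{k-1}$ (using a lemma of Meunier--Su that any continuous face-preserving self-map of a polytope is surjective), then simply takes a preimage of $(n_1/n,\dots,n_k/n)$; you instead apply KKM to the sets $C_i=\{v_i\ge n_i/n\}$ and use the equality-of-sums trick to pin down $v(\bfx^*)$ exactly. Both arguments rest on the same face-preservation property, and KKM and the surjectivity lemma are essentially equivalent topological inputs here; your route is slightly more direct since it targets the single point needed rather than full surjectivity. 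Second, for the rounding step, the paper quotes a lemma of Igarashi--Meunier (\Cref{lem:assignment}) to convert the fractional weights into an integral assignment, whereas you invoke integrality of vertices of the transportation polytope restricted to the support of $y$; these are the same argument under different names. The fractional weights $y(a,i)=\sum_{h:\lambda_a(\bfx^*_h)=i}\alpha_h$ you write down are exactly the paper's $\of^{(a)}(\bfx^*)_i$, and your marginal computations match the paper's verification of the hypotheses of \Cref{lem:assignment}.
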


To establish \Cref{lem:main-coloring-to-assignment}, we will use the following result shown by \citet{IgarashiMe24} as Lemma 4 in their work. We state a simplified version of their lemma below.\footnote{We only state the case where the numbers $n_i$ are integers, but their original lemma handles the non-integer case too.}

\begin{lemma}[\citealt{IgarashiMe24}] \label{lem:assignment}
Consider a bipartite graph $H = (N, [k]; E)$ together with non-negative edge weights $(w_e)_{e \in E}$. 
Let $\delta_H(v)$ denote the set of edges incident to vertex~$v$ in $H$.
Suppose that the following holds:
\begin{align*}
\sum_{e \in \delta_H(a)} w_e &= 1 & &\forall a \in N; \\
\sum_{e \in \delta_H(i)} w_e &= n_i & &\forall i \in [k].
\end{align*}
Then, there exists an assignment $\pi: N \to [k]$ such that
\begin{enumerate}
\item $|\pi^{-1}(i)| = n_i$ for each $i \in [k]$, and
\item $(a, \pi(a)) \in E$ for each $a \in N$.
\end{enumerate}
\end{lemma}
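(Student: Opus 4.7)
The plan is to combine a KKM-style topological argument with \Cref{lem:assignment}. For each vertex $\bfx \in V(\ttT_\half)$ and each agent $a \in N$, I would place a unit mass on the color $\lambda_a(\bfx)$, i.e., define $p_\bfx^a \in \Delta^{k-1}$ by $p_\bfx^a(i) = \bone[\lambda_a(\bfx) = i]$, and extend $p^a$ piecewise-linearly over each elementary simplex of $\ttT_\half$ by barycentric interpolation. Summing over agents gives a continuous column-marginal map $q(\bfx) := \sum_{a \in N} p_\bfx^a$ on $S_m$ with $\sum_{i \in [k]} q_i(\bfx) = n$ everywhere; moreover, at every point $\bfx$ inside an elementary simplex $S = \langle \bfx_1, \ldots, \bfx_k \rangle$, each $p_\bfx^a$ is supported on $\{\lambda_a(\bfx_h) : h \in [k]\}$. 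The goal is to find $\bfx^* \in S_m$ satisfying $q(\bfx^*) = (n_1, \ldots, n_k)$, because its barycentric weights in any enclosing elementary simplex $S^*$ then supply exactly the edge weights that \Cref{lem:assignment} will round to the desired $\pi$.

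To locate $\bfx^*$, I would apply the classical KKM lemma to $S_m$ viewed as a $(k-1)$-simplex with main vertices $\bv_1, \ldots, \bv_k$ (where $\bv_j$ is the degenerate allocation that gives all of $M$ to group $j$), using the closed sets $G_i := \{\bfx \in S_m : q_i(\bfx) \geq n_i\}$. The KKM covering condition, namely that for each nonempty $T \subseteq [k]$ the face $F_T := \langle \bv_j : j \in T \rangle$ is contained in $\bigcup_{j \in T} G_j$, is exactly where the proper-coloring hypothesis enters: any triangulation vertex lying in $F_T$ has its minimal face of $S_m$ contained in $F_T$, so each $\lambda_a$ assigns a color in $T$ to it, forcing $p_\bfx^a(i) = 0$ whenever $i \notin T$ at those vertices and, by linear interpolation, throughout $F_T$. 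Consequently $\sum_{j \in T} q_j(\bfx) \equiv n \geq \sum_{j \in T} n_j$ on $F_T$, so some $j \in T$ must satisfy $q_j(\bfx) \geq n_j$, establishing the covering. KKM then supplies $\bfx^* \in \bigcap_i G_i$, and the identity $\sum_i q_i(\bfx^*) = n = \sum_i n_i$ promotes each inequality $q_i(\bfx^*) \geq n_i$ to an equality.

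Given $\bfx^*$, pick any elementary simplex $S^* = \langle \bfx_1^*, \ldots, \bfx_k^* \rangle$ of $\ttT_\half$ containing $\bfx^*$, with barycentric coordinates $(\mu_1, \ldots, \mu_k)$, and build the bipartite graph $H = (N, [k]; E)$ with edge set $E := \{(a, i) : i = \lambda_a(\bfx_h^*) \text{ for some } h\}$ and weights $w_{(a,i)} := \sum_{h : \lambda_a(\bfx_h^*) = i} \mu_h$. The identities $\sum_i w_{(a,i)} = \sum_h \mu_h = 1$ and $\sum_a w_{(a,i)} = q_i(\bfx^*) = n_i$ are immediate, so \Cref{lem:assignment} yields an integer assignment $\pi \colon N \to [k]$ with $|\pi^{-1}(i)| = n_i$ and $(a, \pi(a)) \in E$, i.e., $\pi(a) \in \{\lambda_a(\bfx_h^*) : h \in [k]\}$, as required.

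The main obstacle is the topological step: designing the right closed cover and extracting the KKM face condition from the proper-coloring definition. Once this is in place, the rounding is essentially a one-line appeal to \Cref{lem:assignment}. A subtlety I would pin down at the outset is that the $n$ colorings $\lambda_1, \dots, \lambda_n$ are proper with respect to a common labeling of the main vertices of $S_m$ (so that color $j$ uniformly denotes group $j$); this is implicit in the intended variable-group fair-division setting, but is needed for the KKM argument to have consistent main-vertex labels and for the conclusion $\pi(a) \in \{\lambda_a(\bfx_h^*) : h \in [k]\}$ to be coherent across agents.
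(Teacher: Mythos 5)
There is a fundamental mismatch: the statement you were asked to prove is \Cref{lem:assignment} itself---the rounding of a nonnegative fractional weighting with unit agent-marginals and integral group-marginals $n_i$ into an integral assignment $\pi$ supported on $E$---yet your argument invokes \Cref{lem:assignment} as a black box in its final step and instead establishes \Cref{lem:main-coloring-to-assignment}. In other words, you have proved (a variant of) the lemma that the paper derives \emph{from} the target statement, while the target statement's content is never addressed; relative to the assigned statement the argument is circular. The KKM construction, the column-marginal map $q$, and the proper-coloring face condition are all machinery for locating the point $\bfx^*$ with $q(\bfx^*)=(n_1,\dots,n_k)$, which is irrelevant to \Cref{lem:assignment}: that lemma is a purely combinatorial claim about an arbitrary bipartite graph with prescribed fractional degrees, with no simplices, triangulations, or colorings in sight. (As an aside, your KKM route is a reasonable alternative to the paper's proof of \Cref{lem:main-coloring-to-assignment}, which instead shows surjectivity of an averaged affine extension via a polytope self-map lemma of Meunier and Su---but that is not the statement under review. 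The paper itself gives no proof of \Cref{lem:assignment}; it cites Lemma~4 of Igarashi and Meunier.)

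What a correct proof must supply is the fractional-to-integral rounding itself. One standard route: the weights $w$ are a point of the polytope $P=\{x\in\R^{E}_{\ge 0} : \sum_{e\in\delta_H(a)}x_e=1\ \forall a\in N,\ \sum_{e\in\delta_H(i)}x_e=n_i\ \forall i\in[k]\}$, which is nonempty and bounded; its constraint matrix is the vertex--edge incidence matrix of a bipartite graph and hence totally unimodular, so $P$ has an integral vertex. At such a vertex each agent $a$ has weight $1$ on exactly one incident edge $(a,\pi(a))\in E$, and the group-marginal constraints give $|\pi^{-1}(i)|=n_i$. Equivalently, one can argue by cycle-canceling (the support of a vertex of $P$ contains no cycle, and a forest support with these marginals forces integrality) or via integrality of bipartite $b$-matchings/flows. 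None of these ideas appear in your proposal, so the essential content of the lemma remains unproven.
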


\begin{proof}[Proof of \Cref{lem:main-coloring-to-assignment}]
For each $a \in N$ and $\bfx \in V(\ttT_{\half})$, we define $f^{(a)}(\bfx) = \bfe^{\lambda_a(\bfx)}$. 
We then extend it in an affine manner to $\of^{(a)}: S_m \to \Delta^{k-1}$ as follows. 
For each $\bfx \in S_m$, let $S = \left<\bfx_1, \dots, \bfx_k\right>$ be any elementary simplex of $\ttT_{\half}$ containing it. Then, write $\bfx = \sum_{h \in [k]} \alpha_h \bfx_h$, and let
\begin{align*}
\of^{(a)}(\bfx) = \sum_{h\in[k]} \alpha_h f^{(a)}(\bfx_h).
\end{align*}
The affine extension on each simplex yields a continuous function on the entire~$S_m$ since adjacent simplices share faces and vertex values agree on those faces.
Finally, we define $\of: S_m \to \Delta^{k-1}$ by letting
\begin{align*}
\of(\bfx) = \frac{1}{n} \sum_{a \in N} \of^{(a)}(\bfx) & &\forall \bfx \in S_m.
\end{align*}
It can be shown that $\of$ is surjective. 
We defer the full proof of this fact to \Cref{app:surjective} (see also, e.g., \citealt{MeunierSu19}). 
Since $\of$ is surjective, there exists $\bfx^* \in S_m$ such that 
\begin{align*}
\of(\bfx^*)_i = \frac{n_i}{n} & &\forall i \in [k].
\end{align*}
Let $S^* =\langle \bfx^*_1,\bfx^*_2,\ldots,\bfx^*_k \rangle$ be any elementary simplex of $\ttT_{\half}$ containing $\bfx^*$. 
We define the graph $H = (N, [k]; E)$ as follows:
\begin{itemize}
\item $(a, i) \in E$ if and only if $\of^{(a)}(\bfx^*)_i > 0$, and
\item $w_{(a, i)} = \of^{(a)}(\bfx^*)_i$.
\end{itemize}
By definition, we have 
\begin{align*}
\sum_{e \in \delta_H(a)} w_e = \sum_{i \in [k]} \of^{(a)}(\bfx^*)_i = 1 & & \forall a \in N
\end{align*} and
\begin{align*}
\sum_{e \in \delta_H(i)} w_e = \sum_{a \in N} \of^{(a)}(\bfx^*)_i = n \cdot \of(\bfx^*) = n_i & &\forall i \in [k].
\end{align*}
Thus, we can apply \Cref{lem:assignment} to obtain $\pi: N \to [k]$ satisfying the two conditions of the lemma. 
The first condition of \Cref{lem:assignment} ensures the first condition of \Cref{lem:main-coloring-to-assignment}. 
Furthermore, the second condition of \Cref{lem:assignment} states that $(a, \pi(a)) \in E$ for every $a \in N$. 
By definition of $H$, this implies that $\of^{(a)}(\bfx^*)_{\pi(a)} > 0$. 
By definition of $\of^{(a)}$, this in turn ensures that $\pi(a) = \lambda_a(\bfx^*_h)$ for some $h \in [k]$. 
Hence, the second condition of \Cref{lem:main-coloring-to-assignment} is also satisfied. 
\end{proof}

We now show how to use \Cref{lem:main-coloring-to-assignment} to prove \Cref{thm:EF1:group}. 

\begin{proof}[Proof of Theorem \ref{thm:EF1:group}]
Based on the virtual utilities, define a coloring ${\lambda}_a \colon V(\ttT_\half) \to [k]$ for each $a \in N$ such that
\[ 
{\lambda}_a(\bfx) \in \argmax \{ {\hat u}_a(\bfx, j) \mid j \in [k]~\mbox{such that}~A_j(\bfx) \neq \emptyset \}. 
\]
These colorings were shown to be proper by \citet{Igarashi23}.

\Cref{lem:main-coloring-to-assignment} then yields an elementary simplex $S^* = \langle \bfx^*_1,\bfx^*_2,\ldots,\bfx^*_k \rangle$ of $\ttT_{\half}$ and $\pi: N \to [k]$ that satisfy the two conditions of the lemma. 
Let $C_i = \pi^{-1}(i)$ for each $i \in [k]$; the first condition of the lemma ensures that $|C_i| = n_i$.
Next, we construct the allocation $A^* = (A^*_1, \dots, A^*_k)$ by invoking \Cref{lem:ef1:approx-correct} on $S^*$. 
We claim that $(C, A^*)$ is an EF1 outcome. 
To see this, consider any agent $a \in N$ (belonging to $C_{\pi(a)}$) and any $j \in [k]$. 
By the second condition of \Cref{lem:main-coloring-to-assignment}, $\pi(a) = \lambda_a(\bfx^*_h)$ for some $h \in [k]$. 
We thus have
\begin{alignat*}{2}
u_a(A^*_{\pi(a)})
&\ge \hat u_a (\bfx^*_h, \pi(a)) &&\text{(by Lemma~\ref{lem:ef1:approx-correct})}
\\
&\ge \hat u_a (\bfx^*_h, j) &&\text{(since $\pi(a) = \smash{\lambda_{a}}(\bfx^*_{h})$)}
\\
&\ge u_a^-(\smash{A^*_{j}}) \qquad \quad &&\text{(by Lemma~\ref{lem:ef1:approx-correct})}.
\end{alignat*}
Hence, $(C, A^*)$ is an EF1 outcome, as desired.
\end{proof}

\section{Envy-Freeness: Asymptotic Existence}
\label{sec:EF-asymptotic}

In this section, we turn our attention to the asymptotic existence of envy-free outcomes.
Specifically, we consider a setting where the utilities are additive and, for each $a\in N$ and $g\in M$, the utility $u_a(g)$ is drawn independently from a given distribution $\cD$.
This distribution is assumed to be non-atomic (i.e., does not put a positive probability on any single point) 
and has a support contained in the interval\footnote{If the support is contained in $[0,t]$ for some $t > 1$, we can scale down all utilities by $t$.} $[0, 1]$. 
We also assume throughout this section that all groups have the same size, that is, $n_1 = \cdots = n_k = n/k$---in particular, $n$ is divisible by~$k$.
We fix $k$ and investigate the asymptotic (non-)existence of EF outcomes as $n$ grows.
We say that an event occurs \emph{with high probability} if the probability that it occurs approaches~$1$ as $n\rightarrow\infty$.

\subsection{Divisible Case}

We first consider the case where the number of goods~$m$ is divisible by~$k$. 
Intuitively, this is an ``easier'' case, as it is possible to give every group the same number of goods.
For this case, we show that the phase transition occurs at $m = \Theta(\log n)$, as stated in the following two theorems.

\begin{theorem} \label{thm:lb-div-asympt}
For any fixed $k \geq 2$, if $m \leq \frac{\ln n}{4}$, then with high probability, no envy-free outcome exists.
\end{theorem}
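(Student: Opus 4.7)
My plan is a union bound over labeled allocations. Fix an ordered allocation $A = (A_1, \ldots, A_k)$; there are exactly $k^m$ such. Because $\cD$ is non-atomic, $u_a(A_i) - u_a(A_j)$ has a continuous distribution whenever $A_i, A_j$ are disjoint and not both empty, so almost surely every agent $a$ has a unique favorite $i^*(a) := \argmax_{i \in [k]} u_a(A_i)$ as long as some $A_i$ is non-empty. Moreover, any allocation mixing empty and non-empty bundles cannot be EF under any partition: the group receiving an empty bundle would contain $n/k \ge 1$ agents, each of whom strictly envies any non-empty bundle almost surely. Hence for $(C, A)$ to be EF, every agent must sit in her favorite bundle's group, so the counts $X_i := |\{a \in N : i^*(a) = i\}|$ must all equal $n/k$.

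Since the agents' utility vectors are i.i.d., $(X_1, \ldots, X_k) \sim \multi{n}{p_1(A), \ldots, p_k(A)}$ where $p_i(A) := \Pr[i^*(a) = i]$. The probability of the balanced profile is
\begin{align*}
\Pr\!\left[X_i = n/k \text{ for all } i\right] = \binom{n}{n/k, \ldots, n/k} \prod_{i=1}^k p_i(A)^{n/k}.
\end{align*}
Subject to $\sum_i p_i(A) = 1$, AM-GM maximizes $\prod_i p_i^{n/k} = \left(\prod_i p_i\right)^{n/k}$ at $p_i \equiv 1/k$, giving the bound $k^{-n}$. Combined with the standard Stirling estimate $\binom{n}{n/k, \ldots, n/k} = \Theta(k^n \cdot n^{-(k-1)/2})$ (with constants depending on the fixed value of $k$), this yields $\Pr[A \text{ extends to an EF outcome}] = O(n^{-(k-1)/2})$.

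A union bound over the $k^m \le k^{(\ln n)/4} = n^{(\ln k)/4}$ labeled allocations then gives
\begin{align*}
\Pr[\text{some EF outcome exists}] = O\!\left(n^{(\ln k)/4 - (k-1)/2}\right),
\end{align*}
and since $\ln k \le k - 1 < 2(k-1)$ for every $k \ge 2$ (the first inequality follows from $\ln(1+x) \le x$ at $x = k-1$), the exponent is strictly negative and this probability tends to $0$. I expect the main obstacle to be careful bookkeeping of the corner cases --- ties, empty bundles, and the identification of labeled versus unordered allocations --- so that the favorite-bundle reduction is valid simultaneously for all $k^m$ allocations almost surely. The rest is a routine mix of multinomial-mode concentration, AM-GM, and Stirling; the quantitative heart of the argument is simply that $k^m$ stays polynomial in $n$ in this regime while the per-allocation extension probability decays polynomially, with the constants in the exponents aligning thanks to $\ln k < 2(k-1)$.
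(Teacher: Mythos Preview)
Your proposal is correct and takes essentially the same approach as the paper: a first-moment (union) bound over the $k^m$ ordered allocations, with the per-allocation probability bounded by $O(n^{-(k-1)/2})$ via the multinomial maximum (AM--GM plus Stirling), and the final exponent check $\ln k < 2(k-1)$. The paper packages the multinomial bound through a KL-divergence identity rather than AM--GM directly, but the content is identical.
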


\begin{theorem} \label{thm:existence-div}
For any fixed $k \geq 2$, utility distribution $\cD$, and $\beta \in (0,1)$, there exists a constant $C_{k, \beta, \cD}$ such
that, for any sufficiently large $n$ and any $m \geq C_{k, \beta, \cD} \cdot \ln n$ with the property that $m$ is divisible by $k$,  
there is a polynomial-time algorithm that computes an envy-free outcome with probability at least $1 - \beta$.
\end{theorem}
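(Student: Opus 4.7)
The plan is to establish both existence and efficient computation via a second-moment argument applied to a rich, structured family of balanced partitions. I would divide the $m$ goods into $m/k$ blocks of $k$ consecutive goods each, and define the family $\cP$ of ``block-permutation'' partitions in which, for each block $b$, a permutation $\sigma_b$ of $[k]$ is specified and the $j$-th good of the block is placed into bundle $\sigma_b(j)$. Every $P \in \cP$ is automatically balanced, and $|\cP| = (k!)^{m/k}$, which for $m = C \ln n$ equals $n^{C \ln(k!)/k}$---polynomial in $n$ for fixed $k$ and $C$.

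For any $P \in \cP$ and any agent $a$, the values $u_a(A_1^P), \ldots, u_a(A_k^P)$ are sums over $k$ disjoint equal-size subsets of the i.i.d.\ sequence $(u_a(g))_{g \in M}$; hence they are themselves i.i.d., and by non-atomicity of $\cD$ almost surely have a unique maximum. By symmetry each bundle is agent $a$'s favorite with probability exactly $1/k$, and since agents' utility vectors are independent, the favorite-count vector $(N_1^P, \ldots, N_k^P)$ follows $\multi{n}{(1/k, \ldots, 1/k)}$ exactly. Stirling's formula then gives $\Pr[N_1^P = \cdots = N_k^P = n/k] = \Theta(n^{-(k-1)/2})$. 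Letting $X$ count the partitions $P \in \cP$ for which the favorite counts are exactly balanced, this yields $\E[X] = \Theta(n^{C \ln(k!)/k - (k-1)/2})$, which grows polynomially once $C > k(k-1)/(2 \ln k!)$.

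The key step will be showing $\E[X^2] \leq (1 + o(1))\, \E[X]^2$, after which the Paley--Zygmund inequality yields $\Pr[X \geq 1] \geq 1 - \beta$ for $C = C_{k, \beta, \cD}$ large enough. I would organize the second-moment computation by the Hamming distance $h$ between $P$ and $P'$---the number of blocks on which their permutations differ. For large $h$, a local-CLT argument should show that the bundle sums under $P$ and $P'$ share few enough summands that the two balance events are nearly independent, contributing essentially $\Pr[P \text{ balanced}]^2$ per pair. For small $h$, pairs are combinatorially rare, and one bounds their inflated joint probability by observing that altering one block's permutation changes each agent's bundle sums by $O(1)$, so only the agents whose top-two bundles have utility gap $O(h)$ can flip their favorite; this limits how far the joint exact-balance event can stretch. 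Once this variance bound is established, the algorithm is straightforward: enumerate $\cP$ (of polynomial size), compute favorite counts for each $P$ in $O(nmk)$ time, and output the first $P^\star$ with balanced counts, placing each agent in the group of their almost-surely unique favorite bundle under $P^\star$.

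The hard part will be the near-diagonal contribution to the second moment. When $P$ and $P'$ differ on only a handful of blocks, the joint distribution of their favorite-count vectors concentrates on a lower-dimensional diagonal where counts are strongly correlated, so the joint probability of exact balance can substantially exceed the product of the marginals. Bounding these terms requires a careful quantitative local central limit theorem---sensitive to the non-atomicity and dispersion properties of $\cD$---that tracks exactly how a single block perturbation disturbs the multinomial favorite-count vector, together with a combinatorial bound showing that the number of low-Hamming-distance pairs is small enough that their contribution to $\E[X^2]$ remains $o(\E[X]^2)$. This is precisely where the constant $C_{k, \beta, \cD}$ inherits its dependence on $\cD$.
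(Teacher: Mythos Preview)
Your overall strategy—a second-moment argument over a family of balanced allocations, using exact multinomial symmetry for the first moment—is the same as the paper's. The paper differs in two ways: it \emph{samples} $n^{2k}$ uniformly random balanced allocations rather than enumerating your structured block-permutation family, and it decomposes pairs not by Hamming distance but by whether they are \emph{$\delta$-intersection balanced} ($\delta$-IB), meaning $|A_i \cap A'_j| \in [(1-\delta)m/k^2,\,(1+\delta)m/k^2]$ for all $i,j$. It first conditions on all $n^{2k}$ sampled allocations being pairwise $\delta$-IB (this holds with high probability once $m \ge \Theta_{k,\delta}(\ln n)$), and then shows via multivariate Berry--Esseen that for any $\delta$-IB pair the joint favorite probabilities $\Pr[i \text{ favorite under } A \wedge i' \text{ favorite under } A']$ are all within $O(\delta + m^{-1/2})$ of $1/k^2$, from which a local-limit-type bound gives $\Pr[\cE_{A'} \mid \cE_A] \le (1+o(1))\Pr[\cE_A]$.

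Your Hamming-distance decomposition does not track the right invariant. For $k \ge 3$, take $P'$ obtained from $P$ by composing every block permutation with a fixed $\tau \ne \mathrm{id}$: the Hamming distance is maximal ($h = m/k$), yet the bundles of $P'$ are a relabeling of those of $P$, so $\cE_P = \cE_{P'}$ and $\Pr[\cE_{P'} \mid \cE_P] = 1$. More generally, large $h$ does not force the intersection sizes $|A_i^P \cap A_j^{P'}|$ close to $m/k^2$, and it is this balance—not $h$—that drives near-independence. The correct split is $\delta$-IB versus not; the non-$\delta$-IB pairs (which include both your small-$h$ pairs and the relabelings above) form an $n^{-\Omega_\delta(C)}$ fraction of $\cP \times \cP$ and are dispatched by the trivial bound once $C$ is large, so the genuinely hard step—and the only place $\cD$ enters—is the Berry--Esseen computation for $\delta$-IB pairs, not the near-diagonal. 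Separately, enumerating all of $\cP$ is polynomial only when $m = \Theta(\log n)$, whereas the theorem must cover every $m \ge C\ln n$; sampling $n^{O(k)}$ allocations (as the paper does) fixes this.
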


To facilitate the proofs, we introduce some definitions.
\begin{itemize}
\item Let $\cA$ denote the set of all allocations.
\item An allocation $A = (A_1, \dots, A_k) \in \cA$ is called \emph{balanced} if $|A_1| = \cdots = |A_k| = m/k$. 
Let $\cAbal$ be the set of all balanced allocations.
\item For $A \in \cA$ and $a \in N$, let $\imaxa{A} = \argmax_{i \in [k]} u_a(A_i)$, ties broken arbitrarily. 
For each $i \in [k]$, let $p^{A}_i = \Pr[\imaxa{A} = i]$, where the probability is taken over the randomness of the utilities.
Let $\bp^{A} = (p^{A}_1, \dots, p^{A}_k)$.
\item Let $\CEF{A}$ be the partition of $N$ where each agent $a\in N$ is assigned to $\imaxa{A}$. 
Note that the outcome $(\CEF{A}, A)$ is always envy-free. 
\item Let $\bbP(N)$ denote the set of all partitions $(C_1, \dots, C_k)$ of~$N$ such that $|C_1| = \cdots = |C_k| = n/k$.
\item For $A \in \cA$, let $\cE_{A}$ denote the event that $\CEF{A} \in \bbP(N)$.
\end{itemize}

Since $\cD$ is non-atomic, a tie in utilities (i.e., $u_a(B) = u_a(B')$ for some agent $a$ and distinct bundles $B, B'$) occurs with probability zero. 
We thus have the following.

\begin{observation} \label{obs:simplified-prob}
The probability that an envy-free outcome exists is equal to $\Pr[\bigvee_{A \in \cA}  \cE_{A}]$.
\end{observation}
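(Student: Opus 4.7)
The plan is to show that the events ``an envy-free outcome exists'' (call it $F$) and $G := \bigvee_{A \in \cA} \cE_A$ coincide up to a probability-zero event, after which $\Pr[F] = \Pr[G]$ is immediate. The inclusion $G \subseteq F$ is deterministic: if $\cE_A$ holds, then the partition $\CEF{A}$ has all groups of size $n/k$ (by definition of $\bbP(N)$), and every agent is assigned by $\imaxa{A}$ to a bundle they weakly prefer to every other $A_j$, so $(\CEF{A}, A)$ is a valid envy-free outcome.

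For the reverse inclusion, the idea is to argue that $F \setminus G$ sits inside a null event. Let $T$ denote the event that $u_a(B) = u_a(B')$ for some agent $a \in N$ and some pair of distinct bundles $B, B' \subseteq M$. On the complement $\overline{T}$, any envy-free outcome $(C, A)$ must satisfy $u_a(A_i) \ge u_a(A_j)$ for every $a \in C_i$ and every $j \in [k]$, with strict inequality whenever $j \ne i$ (using $\overline{T}$), which pins down $\imaxa{A} = i$ uniquely. This forces $C = \CEF{A}$, and in particular $\CEF{A} \in \bbP(N)$, so $\cE_A$ holds. Hence $F \cap \overline{T} \subseteq G$, which gives $\Pr[F] = \Pr[F \cap \overline{T}] \le \Pr[G] \le \Pr[F]$.

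The only step requiring actual calculation is verifying $\Pr[T] = 0$. For each fixed triple $(a, B, B')$ with $B \ne B'$, additivity makes $u_a(B) - u_a(B')$ a nontrivial $\pm 1$-weighted sum of the independent random variables $\{u_a(g) : g \in (B \setminus B') \cup (B' \setminus B)\}$. Conditioning on all but one of these (the set is nonempty because $B \ne B'$), the event that the sum equals zero reduces to a single-point event for the remaining $u_a(g)$, which has probability zero by non-atomicity of $\cD$; integrating gives unconditional probability zero. A union bound over the finitely many triples $(a, B, B')$ then yields $\Pr[T] = 0$. This is the one place where non-atomicity is used, and no deeper obstacle is anticipated.
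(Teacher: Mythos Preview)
Your proposal is correct and follows essentially the same approach as the paper: the paper simply notes that non-atomicity of $\cD$ makes ties in utilities a probability-zero event and states the observation as an immediate consequence, while you spell out both inclusions and the null-set argument explicitly. Your write-up is a faithful expansion of the one-line justification the paper gives.
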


We continue with further preliminaries on probabilities.

\begin{itemize}
\item For a distribution $\cP$, we write $\supp(\cP)$ to denote its support. 
For $\Lambda \subseteq \supp(\cP)$, we write $\cP(\Lambda)$ as a shorthand for the measure (i.e., probability) of $\Lambda$ under $\cP$. 
When $\Lambda = \{\lambda\}$ has size one, we simply write $\cP(\lambda)$.
\item For $\Sigma \in \R^{d \times d}$, we write $\cN(\bzero, \Sigma)$ to denote the centered (multivariate) Gaussian distribution supported on $\R^d$ with covariance $\Sigma$.
\item Let $\Delta^{k-1}_n$ denote the discrete $(k-1)$-simplex $\{\bx \in \Z_{\geq 0}^k \mid \sum_{i \in [k]} x_i = n\}$.
\item For $n \in \N$ and $\bp \in \Delta^{k-1}$, the multinomial distribution $\multi{n}{\bp}$ is a distribution supported on $\Delta^{k-1}_n$ where $\multi{n}{\bp}\{\bx\} := \binom{n}{\bx} \prod_{i \in [k]} p_i^{x_i}$ for each $\bx \in \Delta^{k-1}_n$.
\end{itemize}

We do not include every proof here; all missing proofs can be found in the appendix.

\subsubsection{Non-Existence via First Moment Method.}

We begin by proving the non-existence result (\Cref{thm:lb-div-asympt}) via the first moment method. 
Specifically, in light of \Cref{obs:simplified-prob}, we compute $\Pr[\cE_{A}]$ for any allocation $A$ and apply the union bound. 
We start with a simple formula.
\begin{observation} \label{obs:event-prob-simplified}
For any allocation~$A$, we have $\Pr[\cE_{A}] = \multi{n}{\bp^{A}}\left\{\frac{n}{k} \cdot \bone\right\}$.
\end{observation}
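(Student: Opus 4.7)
The plan is to reduce the event $\cE_A$ to a count on independent categorical variables, then recognize the multinomial. Concretely, for each agent $a \in N$, I would define the random variable $X_a := \imaxa{A} \in [k]$. By the definition of $p^A_i$, we have $\Pr[X_a = i] = p^A_i$ for every $i \in [k]$, so each $X_a$ is categorically distributed with parameter $\bp^A$.

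Next I would argue that the $X_a$'s are mutually independent. Since every $u_a(g)$ is drawn independently across both $a$ and $g$, the entire utility function $u_a(\cdot)$ of agent $a$ is, for fixed $A$, a function of variables independent of those determining $u_{a'}(\cdot)$ for $a' \neq a$. Because $X_a$ is determined solely by $(u_a(g))_{g \in M}$ (and the fixed allocation $A$), independence of the $X_a$'s follows immediately.

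Now consider the count vector $\bv := \bigl(|\{a : X_a = 1\}|, \dots, |\{a : X_a = k\}|\bigr)$. Being the vector of counts of $n$ independent $\bp^A$-distributed categorical draws, $\bv \sim \multi{n}{\bp^A}$ by the very definition of the multinomial distribution. The event $\cE_A$ says that the partition $\CEF{A}$, whose $i$-th part has size $|\{a : X_a = i\}|$, lies in $\bbP(N)$, i.e., that $|\{a : X_a = i\}| = n/k$ for all $i \in [k]$. Equivalently, $\bv = \frac{n}{k} \cdot \bone$, and therefore
\[
\Pr[\cE_A] \;=\; \Pr\bigl[\bv = \tfrac{n}{k}\cdot\bone\bigr] \;=\; \multi{n}{\bp^A}\!\left\{\tfrac{n}{k}\cdot\bone\right\}.
\]

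The only mild subtlety is the tie-breaking clause in the definition of $\imaxa{A}$: different tie-breaking rules could in principle yield different $X_a$'s. However, since $\cD$ is non-atomic and the bundles $A_1,\dots,A_k$ are disjoint, the event that $u_a(A_i) = u_a(A_j)$ for some $i \neq j$ has probability zero, as was already noted preceding \Cref{obs:simplified-prob}. Excluding this null event, $X_a$ is unambiguous and the argument above goes through; I expect no real obstacle beyond acknowledging this point.
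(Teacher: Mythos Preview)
Your proposal is correct and follows essentially the same approach as the paper's own proof, which simply states that independence of the agents' utilities makes the vector of group sizes in $\CEF{A}$ multinomial with parameter $\bp^{A}$, whence the formula follows. Your version just spells out the details (the categorical variables $X_a$, their independence, and the tie-breaking caveat) that the paper compresses into two sentences.
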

\begin{proof}
Since the agents' utilities are independent, the distribution of the number of agents in $\CEF{A}$ is $\multi{n}{\bp^{A}}$.
The formula follows from the definition of~$\cE_{A}$.
\end{proof}

A standard bound on $\multi{n}{\bp^{A}}\left\{\frac{n}{k} \cdot \bone\right\}$ then yields the following lemma.

\begin{lemma} \label{lem:first-moment-main}
For any $A \in \cA$, we have $\Pr[\cE_{A}] \leq \frac{k^{k/2}}{(2\pi n)^{\frac{k-1}{2}}}$.
Moreover, for any $A \in \cAbal$, 
$\Pr[\cE_{A}] 
\geq \frac{1}{e^{k^2/n}} \cdot \frac{k^{k/2}}{(2\pi n)^{\frac{k-1}{2}}}$.
\end{lemma}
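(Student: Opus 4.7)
The plan is to start from Observation~\ref{obs:event-prob-simplified}, which gives $\Pr[\cE_A] = \binom{n}{n/k, \dots, n/k} \prod_{i \in [k]} (p_i^A)^{n/k}$, and then split into the upper-bound and lower-bound arguments.

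For the upper bound, I would first observe that since $\sum_{i \in [k]} p_i^A = 1$, the weighted AM--GM inequality (or Lagrange multipliers) yields
\[
\prod_{i \in [k]} (p_i^A)^{n/k} \;\le\; \left(\frac{1}{k}\right)^{n},
\]
with equality iff $p_i^A = 1/k$ for all $i$. Hence $\Pr[\cE_A] \le \binom{n}{n/k, \dots, n/k} \cdot k^{-n}$. Applying Stirling's approximation in the form $n! = \sqrt{2\pi n}\,(n/e)^n \, e^{\lambda_n}$ where $0 < \lambda_n \le \tfrac{1}{12n}$, one obtains
\[
\binom{n}{n/k,\dots,n/k} \;=\; \frac{\sqrt{2\pi n}(n/e)^n e^{\lambda_n}}{(2\pi n/k)^{k/2}(n/(ke))^{n} e^{k \lambda_{n/k}}} \;=\; \frac{k^{k/2}\, k^n}{(2\pi n)^{(k-1)/2}}\, e^{\lambda_n - k\lambda_{n/k}}.
\]
Dropping the $e^{\lambda_n - k\lambda_{n/k}}$ factor using $\lambda_n \le \tfrac{1}{12n}$ gives the desired upper bound (modulo the negligible exponential factor, which one can handle with a slightly sharper Stirling form or absorb; the cleanest route is to use the one-sided bound $n! \le e\sqrt{n}(n/e)^n$ together with the matching lower bound for $(n/k)!$, but the two-sided form above gives both halves of the lemma at once).

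For the lower bound, the key observation is that when $A \in \cAbal$, we have $|A_1| = \dots = |A_k| = m/k$, so by additivity of utilities, the utilities $u_a(A_1), \dots, u_a(A_k)$ are i.i.d.\ (each being a sum of $m/k$ i.i.d.\ draws from $\cD$). Non-atomicity of $\cD$ makes ties measure-zero, so by symmetry $p_i^A = 1/k$ for every $i \in [k]$, and therefore $\prod_i (p_i^A)^{n/k} = k^{-n}$ exactly. Plugging into the Stirling expansion above yields
\[
\Pr[\cE_A] \;=\; \frac{k^{k/2}}{(2\pi n)^{(k-1)/2}} \, e^{\lambda_n - k \lambda_{n/k}}.
\]
It remains to bound $\lambda_n - k\lambda_{n/k} \ge -k^2/n$. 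Since $\lambda_n > 0$ and $\lambda_{n/k} \le \frac{1}{12(n/k)} = \frac{k}{12n}$, we get $\lambda_n - k\lambda_{n/k} \ge -\frac{k^2}{12n} \ge -\frac{k^2}{n}$, proving the claim.

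The only substantive step is the symmetry argument giving $p_i^A = 1/k$ for balanced~$A$; everything else is AM--GM plus a careful bookkeeping of the Stirling error terms. The main thing to watch is ensuring the two-sided Stirling bound is strong enough that the combined error is absorbed by $e^{-k^2/n}$, which the estimate $\lambda_{n/k} \le k/(12n)$ handles comfortably.
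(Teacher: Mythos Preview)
Your proposal is correct and follows essentially the same approach as the paper. The paper packages the Stirling computation into a standalone lemma (their \Cref{cor:lem:stirling-mult-equip}) that expresses $\multi{n}{\bp}\{\tfrac{n}{k}\cdot\bone\}$ as $\tfrac{k^{k/2}}{(2\pi n)^{(k-1)/2}}\cdot e^{-\tr_{n,k}}\cdot e^{-n\cdot\kl{\bone/k}{\bp}}$ with $\tr_{n,k}\in[0,k^2/n]$, and then invokes non-negativity of the KL divergence for the upper bound and the symmetry argument $\bp^A=\tfrac{1}{k}\bone$ (hence KL $=0$) for the lower bound---which is exactly your AM--GM step and your balanced-allocation symmetry step, respectively.

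One small point worth tightening: to get the clean upper bound you need $\lambda_n - k\lambda_{n/k}\le 0$, and for that you should invoke the \emph{lower} Robbins bound $\lambda_{n/k}>\tfrac{1}{12(n/k)+1}$ as well, not just $\lambda_n\le\tfrac{1}{12n}$; then $k\lambda_{n/k}>\tfrac{k^2}{12n+k}\ge\tfrac{1}{12n}>\lambda_n$ for $k\ge2$ and $n\ge k$. You gesture at this (``the two-sided form above''), but writing it out removes the ``modulo'' hedge.
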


We can now prove \Cref{thm:lb-div-asympt} by taking the union bound.

\begin{proof}[Proof of \Cref{thm:lb-div-asympt}]
Using \Cref{obs:simplified-prob} and \Cref{lem:first-moment-main} and taking the union bound over all $A \in \cA$, we find that the probability that an envy-free outcome exists is at most
\begin{align*}
&k^m \cdot \frac{k^{k/2}}{(2\pi n)^{\frac{k-1}{2}}} \leq n^{\frac{\ln k}{4}} \cdot \frac{k^{k/2}}{n^{\frac{k-1}{2}}} = \frac{k^{k/2}}{n^{\frac{2k - \ln k - 2}{4}}},
\end{align*}
where we use the assumption that $m\le \frac{\ln n}{4}$.
The last quantity is $o(1)$ for any fixed $k \geq 2$.
\end{proof}

\subsubsection{Existence via Second Moment Method.}
Next, we turn to the existence result (\Cref{thm:existence-div}).
Recall that we have estimated the first moment $\Pr[\cE_{A}]$ in \Cref{lem:first-moment-main}. 
One might expect that when the sum of $\Pr[\cE_{A}]$ over all $A \in \cA$ far exceeds~$1$, our desired event $\bigvee_{A \in \cA}  \cE_{A}$ occurs with high probability.
However, this is not necessarily true for dependent events. 
To overcome this issue, we employ the \emph{second moment method}, which leverages the fact that if the events are nearly independent, in the sense that $\Pr[\cE_{A} \wedge \cE_{A'}] \leq (1 + o(1)) \Pr[\cE_{A}]\cdot\Pr[\cE_{A'}]$, then we can reach the desired conclusion. 
Unfortunately, such an inequality need not hold for all pairs of allocations $A, A'$. 
Indeed, when $A$ and~$A'$ are ``close'', $\cE_{A}$ and $\cE_{A'}$ can be highly correlated. 
Therefore, we will only apply the second-moment method on a set of allocations that are ``far away'' from one another. 
The exact definition of ``far away'' that we use is given below in \Cref{def:delta-ib}. 
Observe that if $A, A'$ are drawn independently at random from $\cAbal$, then $\E[|A_i \cap A'_j|] = m/k^2$ for any $i,j\in[k]$. 
Thus, the condition in \Cref{def:delta-ib} requires that $|A_i \cap A'_j|$ is close to its expectation.

\begin{definition} \label{def:delta-ib}
For $\delta\in[0,1]$, two balanced allocations $A = (A_1, \dots, A_k)$ and $A' = (A'_1, \dots, A'_k)$ are said to be \emph{$\delta$-intersection balanced ($\delta$-IB}) if $\frac{m(1 - \delta)}{k^2} \leq |A_i \cap A'_j| \leq \frac{m(1 + \delta)}{k^2}$ for all $i,j\in[k]$. 
We say that balanced allocations $A^{(1)}, \dots, A^{(D)}$ are \emph{$\delta$-IB} if every pair among them is $\delta$-IB.
\end{definition}

A standard concentration argument shows that if $m \geq \Theta_{k, \delta}(\log n)$, then $n^{2k}$ random balanced allocations satisfy $\delta$-IB with high probability, as stated below.

\begin{lemma} \label{lem:random-balanced}
For any $\delta \in (0, 1)$ and any $m \geq \frac{4k^4}{\delta^2} \cdot \ln n$ such that $m$ is divisible by $k$, let $A^{(1)}, \dots, A^{(n^{2k})}$ be balanced allocations drawn uniformly and independently at random. 
Then, with high probability, these allocations are $\delta$-IB.
\end{lemma}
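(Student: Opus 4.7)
The plan is to identify each pairwise intersection size as a hypergeometric random variable, apply a standard Hoeffding-type concentration bound, and then take a union bound over all pairs of allocations and index pairs.

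First I would fix indices $s \neq t$ in $[n^{2k}]$ and $i, j \in [k]$, and analyze the distribution of $X := |A^{(s)}_i \cap A^{(t)}_j|$. Conditioning on $A^{(s)}$, the allocation $A^{(t)}$ is still a uniformly random balanced allocation independent of $A^{(s)}$, so by symmetry its bundle $A^{(t)}_j$ is distributed as a uniformly random $(m/k)$-subset of $M$. Hence $X$ counts how many elements of the fixed $(m/k)$-subset $A^{(s)}_i$ land in an independent uniform $(m/k)$-subset of $M$, which is exactly a hypergeometric random variable with population size $m$, $m/k$ successes, and $m/k$ draws. In particular, $\E[X] = (m/k)^2 / m = m/k^2$.

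Next I would apply Hoeffding's inequality for sampling without replacement (the tail bounds for the hypergeometric distribution carry the same constants as in the i.i.d.\ Bernoulli case) to obtain
\[
\Pr\!\left[\,\left| X - m/k^2 \right| \geq \delta m/k^2\,\right] \;\leq\; 2\exp\!\left(-\frac{2\,(\delta m/k^2)^2}{m/k}\right) \;=\; 2\exp\!\left(-\frac{2\delta^2 m}{k^3}\right).
\]
Plugging in the hypothesis $m \geq (4k^4/\delta^2)\ln n$ makes the exponent at least $8k \ln n$, so this probability is at most $2 n^{-8k}$.

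Finally I would take a union bound over at most $\binom{n^{2k}}{2} \leq n^{4k}$ unordered pairs $\{s, t\}$ and the $k^2$ index pairs $(i,j)$. The total failure probability is bounded by $k^2 \cdot n^{4k} \cdot 2 n^{-8k} = 2k^2\, n^{-4k}$, which tends to $0$ as $n \to \infty$. The only mildly technical point is the identification of $X$ as hypergeometric (together with invoking the Hoeffding/Serfling bound for sampling without replacement); after that, the argument is a routine concentration-plus-union-bound calculation, so I do not foresee any substantial obstacle.
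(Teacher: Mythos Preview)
Your proposal is correct and essentially identical to the paper's proof: both identify $|A^{(s)}_i \cap A^{(t)}_j|$ as a hypergeometric variable with mean $m/k^2$, apply the Hoeffding/Serfling bound for sampling without replacement to get the $2\exp(-2\delta^2 m/k^3)$ tail, and then take a union bound over the at most $n^{4k}$ pairs and $k^2$ index pairs to obtain a $2k^2 n^{-4k}$ failure probability.
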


The remainder of this section is largely devoted to bounding the second moment as stated in the following lemma, where $\sigma^2 > 0$ is the variance of $\cD$.

\begin{lemma} \label{lem:second-moment-main}
For any fixed $\beta \in (0, 1)$, if $\delta \leq \frac{\beta}{12k^4}$ and $m \geq \frac{48000 k^{13}}{\beta^2 \sigma^6}$, then for any pair of $\delta$-IB balanced allocations $A$ and $A'$, we have
\[
\Pr[\cE_{A'} \mid \cE_{A}] \leq \frac{1}{1 - \beta/2 - o(1)} \cdot \frac{k^{k/2}}{(2\pi n)^{\frac{k-1}{2}}}.\]

\end{lemma}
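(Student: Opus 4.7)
The plan is to condition on $\cE_A$ and rewrite $\Pr[\cE_{A'} \mid \cE_A]$ as the probability that a sum of $k$ independent multinomial vectors hits a specific lattice point, then control this via a local central limit theorem. For each agent $a$, set $X_a := \imaxa{A}$ and $X'_a := \imaxa{A'}$, and let $q_{ij} := \Pr[X_a = i, X'_a = j]$ and $r_{j \mid i} := q_{ij}/p^A_i$. Because both allocations are balanced and the $u_a(g)$ are i.i.d., symmetry across equal-sized disjoint bundles forces $p^A_i = p^{A'}_j = 1/k$. Conditioning on $\cE_A$ splits the agents into $k$ classes of exactly $n/k$ elements according to $X_a$, and within each class the $X'_a$ remain i.i.d.\ with distribution $(r_{j \mid i})_j$; hence the column counts $M_j := |\{a : X'_a = j\}|$ decompose as $M = \sum_{i=1}^{k} Y^{(i)}$ with $Y^{(i)} \sim \multi{n/k}{(r_{j \mid i})_j}$ independent, and $\Pr[\cE_{A'} \mid \cE_A] = \Pr[M = (n/k)\bone]$.

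The next step is to show that $r_{j\mid i}$ is close to $1/k$. In the hypothetical case $|A_i \cap A'_j| = m/k^2$ for every $(i,j)$, the joint law of $(u_a(A_i))_i, (u_a(A'_j))_j$ is invariant under independently relabeling the $A$- and $A'$-indices (all group sizes and overlap sizes are then fully symmetric), so $q_{ij} = 1/k^2$ identically. For the actual $\delta$-IB case I would apply a multivariate Berry--Esseen bound to the $2k$-dimensional sum over the $m$ goods $(u_a(A_i), u_a(A'_j))_{i,j}$, which is close to a Gaussian whose covariance entries $|A_i \cap A'_j|\sigma^2$ differ from the fully balanced Gaussian by factors in $[1-\delta,1+\delta]$. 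Since $\cD$ is non-atomic and $\sigma^2 > 0$, the argmax probabilities of the limiting Gaussian depend Lipschitz-continuously on its covariance, which yields $q_{ij} = (1 + O(\delta))/k^2$ (hence $r_{j\mid i} = 1/k + O(\delta/k)$) provided the Berry--Esseen error is dominated, which holds under the assumed lower bound $m \geq 48000 k^{13}/(\beta^2\sigma^6)$.

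Finally, I would apply a local CLT to $M$. Its mean is exactly $(n/k)\bone$ (by the marginal symmetry) and its covariance is $\Sigma_M = (n/k)\sum_i\bigl(\diag(r_{\cdot\mid i}) - r_{\cdot\mid i}r_{\cdot\mid i}^\top\bigr)$, which in the perfectly balanced case equals $(n/k)(I - J/k)$ and has determinant $(n/k)^{k-1}$ when restricted to $\bone^\perp$. Under the $O(\delta/k)$ perturbation of $r_{\cdot\mid i}$ from the previous step, the restricted determinant changes only by a factor $1 + O(\delta\cdot\mathrm{poly}(k))$. A standard local CLT for sums of lattice-valued random vectors (with the $\sqrt{k}$ Jacobian of the hyperplane $\sum x_i = n$) then gives
\[
\Pr[M = (n/k)\bone] = \frac{k^{k/2}}{(2\pi n)^{(k-1)/2}}\bigl(1 + O(\delta\cdot\mathrm{poly}(k)) + o_n(1)\bigr),
\]
and taking $\delta \leq \beta/(12 k^4)$ makes the multiplicative error at most $1/(1 - \beta/2 - o(1))$, as required.

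The main obstacle will be the quantitative perturbation analysis in the middle step: a clean bound on $|q_{ij} - 1/k^2|$ requires carefully combining a multivariate Berry--Esseen estimate (whose rate introduces powers of $1/\sigma$ and $k$) with Lipschitz sensitivity of Gaussian argmax probabilities to covariance perturbations, and then propagating these errors through the determinant computation in the local CLT. Tracking these dependencies is what forces the precise numerical hypotheses $m \geq 48000 k^{13}/(\beta^2\sigma^6)$ and $\delta \leq \beta/(12 k^4)$ in the statement; verifying that the local CLT remainder is uniformly $o_n(1)$ is comparatively routine since $n/k \to \infty$ at fixed $k$.
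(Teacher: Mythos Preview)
Your proposal is correct and tracks the paper's proof through its first two stages: the paper also rewrites $\Pr[\cE_{A'}\mid\cE_A]$ as the probability that a sum of $n$ independent $\{\be_1,\dots,\be_k\}$-valued variables (grouped into $k$ blocks of size $n/k$ according to $\imaxa{A}$) hits $(n/k)\bone$, and it also bounds $|q_{ij}-1/k^2|$ via multivariate Berry--Esseen plus a Gaussian perturbation argument. One technical difference there: the paper works in $\R^{k^2}$, indexing by the intersection cells $A_i\cap A'_j$, which makes the limiting covariance diagonal and lets it replace your Lipschitz-sensitivity step by a clean total-variation bound between Gaussians with nearby diagonal covariances. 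The substantive divergence is your last step. You invoke a local CLT for the non-i.i.d.\ lattice sum $M$ and track the covariance determinant under the $O(\delta)$ perturbation; the paper instead avoids any local limit theorem via a mixture trick. Once each coordinate probability lies within $\zeta/k$ of $1/k$, each $X^a$ is realized as a $(1-\zeta)$-mixture of the uniform law on $\{\be_1,\dots,\be_k\}$ and a residual, so that by Hoeffding at least $(1-2\zeta)n$ of the summands are \emph{exactly} uniform; their partial sum is then an honest multinomial whose maximum point mass is bounded by $k^{k/2}/(2\pi r)^{(k-1)/2}$ via log-convexity of $\Gamma$, and the remaining summands are absorbed by the elementary inequality $\maxprob(X+Y)\le\maxprob(X)$. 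Your route is conceptually clean and yields a two-sided asymptotic, but requires a citable quantitative local CLT for independent non-identically-distributed lattice vectors with explicit remainder; the paper's mixture argument is entirely elementary and self-contained, at the price of being more ad hoc.
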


Before we prove \Cref{lem:second-moment-main}, let us first show how to use it to establish \Cref{thm:existence-div}.

\begin{proof}[Proof of \Cref{thm:existence-div}]
Let $C_{k, \beta, \cD} = \max\left\{\frac{4k^4}{\delta^2}, \frac{1920000k^{13}}{\beta^2\sigma^6} \right\}$, where $\delta = \frac{\beta}{12k^4}$.
The algorithm works as follows.
\begin{itemize}
\item Choose balanced allocations $A^{(1)}, \dots, A^{(n^{2k})}$ independently and uniformly at random.\footnote{For example, we can take a random permutation of the goods, and let each block of $m/k$ goods form a bundle.}
\item For $w \in [n^{2k}]$, check whether $\CEF{A^{(w)}} \in \bbP(N)$.
If so, output $(\CEF{A^{(w)}}, A^{(w)})$.
\end{itemize}
The algorithm runs in time $n^{O(k)} \cdot m$. 
To bound its success probability, note that by \Cref{lem:random-balanced}, $A^{(1)}, \dots, A^{(n^{2k})}$ are $\delta$-IB with high probability. 
We condition on this event henceforth.

Our algorithm succeeds when the event $\bigvee_{w \in [n^{2k}]} \cE_{A^{(w)}}$ occurs. 
Let $Z = |\{w \in [n^{2k}] \mid \cE_{A^{(w)}}\}|$ and $q = \Pr[\cE_{A^{(w)}}]$; note that $q$ does not depend on $w$ due to symmetry. 
Using the second moment method, the probability that the algorithm succeeds can be bounded as follows:
\begin{align*}
&\Pr\left[\bigvee_{w \in [n^{2k}]} \cE_{A^{(w)}}\right]
= \Pr[Z > 0] 
\geq \frac{\left(\E[Z]\right)^2}{\E[Z^2]} \\
&= \frac{1}{\frac{1}{n^{4k}} \sum_{w \in [n^{2k}]} \left(\frac{1}{q} + \sum_{v \in [n^{2k}] \setminus \{w\}} \frac{\Pr[\cE_{A^{(v)}} \mid \cE_{A^{(w)}}]}{q}\right)}.
\end{align*}
Applying \Cref{lem:first-moment-main,lem:second-moment-main}, the denominator is at most
\begin{align*}
&\frac{1}{n^{4k}} \sum_{w \in [n^{2k}]} \left(O\left(n^{\frac{k-1}{2}}\right) + \sum_{v \in [n^{2k}] \setminus \{w\}} \frac{1}{1 - \beta/2 -o(1)}\right) \\
&= \frac{1}{1 - \beta/2 - o(1)}.
\end{align*}
Thus, our algorithm succeeds with probability $1 - \beta/2 - o(1)$, which is at least $1 - \beta$ for any sufficiently large $n$.
\end{proof}
 
We now proceed to prove \Cref{lem:second-moment-main}.
Fix two balanced allocations $A, A'$ that are $\delta$-IB. 
For $i, i' \in [k]$, we let\footnote{$p_{(i, i')}$ is the same for every agent~$a$, so we omit~$a$ from the notation.} $p_{(i, i')} = \Pr[\imaxa{A} = i \wedge \imaxa{A'} = i']$.

\begin{lemma} \label{lem:cond-as-sum}
For each $a\in N$, let $X^a$ be a random variable on $\{\be_1, \dots, \be_k\}$ such that $\Pr[X^a = \be_{i'}] = k \cdot p_{(\lceil ka / n \rceil, i')}$ for each $i' \in [k]$.
Then, $\Pr[\cE_{A'} \mid \cE_{A}] = \Pr\left[X^1 + \cdots + X^n = \frac{n}{k}\cdot\bone\right]$.
\end{lemma}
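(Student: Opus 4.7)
My plan is to reduce the conditional probability to an unconditional one via a symmetry argument. For each agent $a \in N$, I set $Y^a = (\imaxa{A}, \imaxa{A'})$. Since the utilities $(u_a(g))_{a \in N, g \in M}$ are jointly independent, the random vectors $(Y^a)_{a \in N}$ are i.i.d.\ with joint law $\Pr[Y^a = (i, i')] = p_{(i, i')}$. A preliminary observation is that, because $A$ is balanced, the bundles $A_1, \dots, A_k$ have equal size, so the random variables $u_a(A_1), \dots, u_a(A_k)$ are exchangeable, hence $\Pr[\imaxa{A} = i] = 1/k$ for each $i \in [k]$. It follows that the conditional law $\Pr[\imaxa{A'} = i' \mid \imaxa{A} = i]$ equals $k \cdot p_{(i, i')}$, which matches the law of $X^a$ precisely when $i = \lceil k a / n \rceil$.

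The heart of the argument is an exchangeability step. Let $\bC^* \in \bbP(N)$ denote the canonical balanced partition in which agent $a$ is assigned to group $\lceil k a / n \rceil$. I will show
\[
\Pr[\cE_{A'} \mid \cE_A] \;=\; \Pr\bigl[\cE_{A'} \bigm| \CEF{A} = \bC^*\bigr].
\]
Given any $\bC, \bC' \in \bbP(N)$, pick a permutation $\pi$ of $N$ with $\pi(\bC') = \bC$ and apply the invariance of the law of $(Y^a)_{a \in N}$ under relabeling by $\pi$: this shows $\Pr[\CEF{A} = \bC, \cE_{A'}] = \Pr[\CEF{A} = \bC', \cE_{A'}]$ and similarly $\Pr[\CEF{A} = \bC] = \Pr[\CEF{A} = \bC']$, where I use that $\cE_{A'}$ depends only on the multiset $\{\imaxa{A'} : a \in N\}$ and is thus preserved by relabeling. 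Hence $\Pr[\cE_{A'} \mid \CEF{A} = \bC]$ takes a common value across all $\bC \in \bbP(N)$, and averaging over $\bC$ with weights $\Pr[\CEF{A} = \bC]/\Pr[\cE_A]$ shows that this common value equals $\Pr[\cE_{A'} \mid \cE_A]$.

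Finally, having fixed $\CEF{A} = \bC^*$, i.e., $\imaxa{A} = \lceil k a / n \rceil$ for every $a$, the independence of the $Y^a$'s yields that the variables $(\imaxa{A'})_{a \in N}$ are conditionally independent with $\imaxa{A'}$ distributed exactly as $X^a$. The event $\cE_{A'}$ then asserts that the count of agents with $\imaxa{A'} = i'$ equals $n/k$ for every $i' \in [k]$, which is precisely the statement $\sum_{a \in N} \be_{\imaxa{A'}} = (n/k) \cdot \bone$, i.e., $X^1 + \cdots + X^n = (n/k) \cdot \bone$. Combining with the equality of the previous paragraph yields the lemma.

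I expect the exchangeability reduction to $\bC^*$ to be the most delicate point: the underlying probabilistic content is routine, but the bookkeeping of permutations—together with the observation that $\cE_{A'}$ is invariant under permutations of agent labels while the conditioning event is not—needs to be executed carefully so that the canonical conditioning agrees with the original one.
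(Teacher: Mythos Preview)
Your proposal is correct and follows essentially the same approach as the paper: reduce by symmetry to the canonical conditioning $\CEF{A} = \bC^*$, then use independence of agents and the balancedness of $A$ to identify the conditional law of $\imaxa{A'}$ with that of $X^a$. The only difference is that you spell out the exchangeability reduction in detail, whereas the paper simply asserts ``due to the symmetry across agents and since $A$ is balanced, we may assume that $\CEF{A}$ assigns agent $a$ to group $\lceil ka/n\rceil$.''
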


\begin{proof}
By definition, $\Pr[\cE_{A'} \mid \cE_{A}] = \Pr[\CEF{A'} \in \bbP(N) \mid \CEF{A} \in \bbP(N)]$.
Due to the symmetry across agents and since $A$ is balanced, we may assume that $\CEF{A}$ assigns agent $a \in N$ to group $\lceil ka 
/ n \rceil$. 
Conditioned on this, we can take $X^a$ to be the random variable such that $X^a = \be_{\imaxa{A'}}$. 
Indeed, we have that 
\begin{align*}
\Pr[X^a = \be_{i'}] &= \Pr[\imaxa{A'} = i' \mid \imaxa{A} = \lceil ka / n\rceil] \\ &= \frac{p_{(\lceil ka / n\rceil, i')}}{\Pr[\imaxa{A} = \lceil ka / n\rceil]} = k \cdot p_{(\lceil ka / n \rceil, i')},
\end{align*}
where we use Bayes' law and the symmetry across groups, respectively. 
Finally, observe that $\CEF{A'} \in \bbP(N)$ is equivalent to $X^1 + \cdots + X^n = \frac{n}{k} \cdot \bone$.
\end{proof}

The rest of the proof of \Cref{lem:second-moment-main} can be divided into two parts: (i) showing that the values $p_{(i, i')}$ are all close to $1/k^2$, and (ii) bounding $\Pr\left[X^1 + \cdots + X^n = \frac{n}{k}\cdot\bone\right]$.

\paragraph{Part (i): Bounding $p_{(i, i')}$.}
We need the following multivariate generalization of the Berry–Esseen theorem, which is stated as Theorem~1.1 in the work of \citet{Raic19}.\footnote{
The version stated here follows from Theorem~1.1 of \citet{Raic19} by letting $X_i = \Sigma^{-\frac{1}{2}} W^i$, so that $\sum_{i=1}^T\var(X_i) = I_d$.
Also, we use the term $60d^{1/4}$, which is weaker than the one used by \citet{Raic19}.}

\begin{lemma}[\citealt{Raic19}]
\label{lem:mult-berry-esseen}
Let $W^1, \dots, W^T$ be independent random variables in $\R^d$ with mean zero, $H := W^1 + \cdots + W^T$, and $\Sigma := \cov{H} \in \R^{d \times d}$ be the covariance matrix of $H$. 
For any convex $\Lambda \subseteq \R^d$, it holds that $\left|\Pr[H \in \Lambda] - \cN(\bzero, \Sigma)\{\Lambda\}\right|
\leq 60 d^{1/4} \sum_{i \in [T]} \E[\|\Sigma^{-\frac{1}{2}} W^i\|_2^3].$
\end{lemma}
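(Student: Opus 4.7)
The plan is to prove the multivariate Berry--Esseen inequality via Stein's method, following the strategy refined by \citet{Raic19}. As a preliminary normalization, I would first apply the linear map $\bx \mapsto \Sigma^{-1/2} \bx$ to each summand $W^i$ and to the set $\Lambda$ (which remains convex under linear transformation), reducing to the case where $H$ has covariance $I_d$. The third-moment expression on the right-hand side is invariant under this rescaling, so it remains to show $|\Pr[H \in \Lambda] - \cN(\bzero, I_d)\{\Lambda\}| \le 60 d^{1/4} \sum_i \E[\|W^i\|_2^3]$ under the assumption $\sum_i \cov{W^i} = I_d$.

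The core of the argument uses the Stein equation for $\cN(\bzero, I_d)$: for a smooth test function $h$, one seeks $f_h$ with $\Delta f_h(\bx) - \langle \bx, \nabla f_h(\bx)\rangle = h(\bx) - \cN(\bzero, I_d)(h)$. The Ornstein--Uhlenbeck semigroup yields the explicit solution $f_h(\bx) = -\int_0^1 \frac{1}{2s}\bigl(\E[h(\sqrt{s}\,\bx + \sqrt{1-s}\,Z)] - \cN(\bzero, I_d)(h)\bigr)\, ds$ with $Z \sim \cN(\bzero, I_d)$. The test function of interest, $h = \bone_\Lambda$, is not smooth, so I would smooth it to $h_\epsilon := \bone_\Lambda * \phi_\epsilon$ for a Gaussian kernel $\phi_\epsilon$ of scale $\epsilon > 0$, controlling the smoothing error via the Gaussian perimeter of the convex body $\Lambda$; Ball's theorem bounds this perimeter by $O(d^{1/4})$, which is where the dimension dependence in the final estimate originates.

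For the Stein swap applied to $h_\epsilon$, I would decompose $H = H^{(i)} + W^i$ with $H^{(i)} := H - W^i$ independent of $W^i$, and Taylor-expand $\nabla f_{h_\epsilon}$ around $H^{(i)}$. The zeroth-order term vanishes because $\E[W^i] = \bzero$, and the second-order contributions cancel the Laplacian $\Delta f_{h_\epsilon}$ by means of $\sum_i \cov{W^i} = I_d$; this leaves a remainder controlled by the $L^\infty$ norm of the third Fr\'echet derivative of $f_{h_\epsilon}$ times $\sum_i \E[\|W^i\|_2^3]$. A careful semigroup integration bounds this third derivative by $O(1/\epsilon)$ rather than the na\"ive $O(1/\epsilon^2)$. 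Balancing the smoothing error $O(d^{1/4}\epsilon)$ against the Stein error $O(\epsilon^{-1} \sum_i \E[\|W^i\|_2^3])$ and optimizing in $\epsilon$ yields a bound of the form $O(d^{1/4} \sum_i \E[\|W^i\|_2^3])$, with the explicit constant $60$ recovered by tracking the precise numerical constants in each step.

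The main obstacle is extracting the sharp $d^{1/4}$ dimension dependence, which requires two delicate inputs acting in concert: Ball's bound (later refined by Nazarov) on the Gaussian surface area of an arbitrary convex body in $\R^d$, and the $O(1/\epsilon)$ derivative bound on the smoothed Stein solution obtained through a semigroup integration that exploits the smoothing already present in $h_\epsilon$. Classical multivariate Berry--Esseen proofs that do not exploit both the convexity of $\Lambda$ and the semigroup structure typically yield $d^{1/2}$ or worse, so both ingredients are essential to the statement in Lemma \ref{lem:mult-berry-esseen}. Bookkeeping of the constant $60$ and the fact that $\sum_i \cov{W^i} = I_d$ (rather than the weaker $\sum_i \var(X_i)=I_d$ with $X_i = \Sigma^{-1/2}W^i$) should be straightforward once the two key estimates are in hand.
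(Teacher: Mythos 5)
First, note what the paper actually does here: \Cref{lem:mult-berry-esseen} is not proved in the paper at all---it is imported verbatim from \citet{Raic19} (their Theorem~1.1), and the only ``proof content'' is the normalization recorded in the footnote, namely applying Raič's theorem to $X_i = \Sigma^{-1/2}W^i$ so that $\sum_i \cov{X_i} = I_d$, observing that $\Pr[H \in \Lambda] = \Pr[\Sigma^{-1/2}H \in \Sigma^{-1/2}\Lambda]$, that $\cN(\bzero,\Sigma)\{\Lambda\} = \cN(\bzero,I_d)\{\Sigma^{-1/2}\Lambda\}$, that $\Sigma^{-1/2}\Lambda$ is still convex, and weakening Raič's constant to $60d^{1/4}$. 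Your first paragraph reproduces exactly this reduction, which is the only part of the statement the paper is responsible for; up to that point you are aligned with the paper.

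The rest of your proposal, however, attempts to reprove Raič's theorem itself, and as a standalone argument it has a genuine gap. The sketch balances a ``smoothing error $O(d^{1/4}\epsilon)$'' against a Stein error $O(\epsilon^{-1}\sum_i \E\|W^i\|_2^3)$, but the Ball--Nazarov bound of order $d^{1/4}$ controls only the \emph{Gaussian} measure of the $\epsilon$-neighborhood of $\partial\Lambda$. To convert $|\E h_\epsilon(H) - \E h_\epsilon(Z)|$ into a bound on $|\Pr[H\in\Lambda] - \cN(\bzero,I_d)\{\Lambda\}|$ one also needs the boundary-layer probability $\Pr[H \in (\partial\Lambda)^{\epsilon}]$ for the non-Gaussian sum $H$, and this is not a priori of order $d^{1/4}\epsilon$---it is essentially the quantity being proven. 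Bentkus and Raič resolve this circularity with a recursive/inductive scheme (an implicit inequality in which the unknown uniform distance over convex sets appears on both sides, or an induction on the number of summands), and without that device the naive optimization in $\epsilon$ does not deliver the linear dependence on $\sum_i \E\|W^i\|_2^3$ with the $d^{1/4}$ factor. Similarly, the claimed $O(1/\epsilon)$ bound on the third derivative of the smoothed Stein solution and the recovery of the explicit constant $60$ are precisely the delicate, paper-length parts of \citet{Raic19}; asserting them is not proving them. For the purposes of this paper the correct move is what the authors do: cite the result and verify only the covariance renormalization, which your opening paragraph already accomplishes.
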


Let us now give the proof overview for this part. 
We let $d = k^2$ and implicitly associate\footnote{For example, $(j, j')$ can be associated with $k(j - 1) + j'$.} tuples $(j, j') \in [k]^2$ with elements in $[k^2]$ when writing the indices for readability.

Let $\mu$ denote the mean of $\cD$. 
For each $g \in M$, we define the random variable $W^g \in \R^{d}$ such that
\begin{align*}
W^g_{(j, j')} =
\begin{cases}
\left(u_a(g) - \mu\right) \cdot \frac{k}{\sigma\sqrt{m}} &\text{ if } g \in A_{j} \cap A'_{j'}; \\
0 &\text{ otherwise}
\end{cases}
\end{align*}
for all $j, j' \in [k]$; note that this distribution is the same regardless of~$a$.
Then, let $H = \sum_{g \in M} W^g$.
Moreover, for all $i, i' \in [k]$, let $\Lambda_{i,i'} \subseteq \R^d$ be the set of all vectors $\bv \in \R^d$ that satisfy the following constraints:
\begin{align*}
\sum_{\ell \in [k]} v_{(i,\ell)} &\geq \sum_{\ell \in [k]} v_{(j,\ell)} &\forall j \in [k]; \\ 
\sum_{\ell \in [k]} v_{(\ell,i')} &\geq \sum_{\ell \in [k]} v_{(\ell,j')} &\forall j' \in [k].
\end{align*}
One can check that when there are no ties in utilities, $\imaxa{A} =i$ and $\imaxa{A'} = i'$ if and only if $H \in \Lambda_{i, i'}$. 
In other words, $\Pr[H \in \Lambda_{i, i'}]$ is exactly $p_{(i, i')}$.
Since $\Lambda_{i,i'}$ is convex, \Cref{lem:mult-berry-esseen} implies that $p_{(i,i')}$ is close to $\cN(\bzero, \Sigma)\{\Lambda_{i, i'}\}$ for $\Sigma = \cov{H}$. 
Using the fact that $A, A'$ are $\delta$-IB and balanced, we can show that $\Sigma$ is also close to the identity matrix $I_{d}$. 
By applying a standard total variation distance bound between Gaussians, this also implies that $\cN(\bzero, \Sigma)\{\Lambda_{i, i'}\}$ and $\cN(\bzero, I_{d})\{\Lambda_{i, i'}\}$ are close. 
Due to symmetry, the latter is simply $\frac{1}{k^2}$. Thus, we can conclude that $p_{(i, i')}$ itself is close to $\frac{1}{k^2}$, as stated more formally below.
\begin{lemma} \label{lem:prob-close}
Let $\gamma \in (0, 1)$. Suppose that $\delta \le \frac{\gamma}{3k}$ and $m \geq \frac{30000 k^7}{\gamma^2 \sigma^6}$. Then, for all $i, i' \in [k]$, we have $\left|p_{(i, i')} - \frac{1}{k^2}\right| \le \gamma$.
\end{lemma}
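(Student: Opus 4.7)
The plan follows the four-step recipe sketched just before the lemma statement: (i) interpret $p_{(i,i')}$ as the probability that $H \in \Lambda_{i,i'}$; (ii) apply \Cref{lem:mult-berry-esseen} to compare this to $\cN(\bzero,\Sigma)\{\Lambda_{i,i'}\}$ for $\Sigma = \cov{H}$; (iii) use a Gaussian total-variation bound to replace $\cN(\bzero,\Sigma)$ by $\cN(\bzero,I_d)$; and (iv) evaluate $\cN(\bzero,I_d)\{\Lambda_{i,i'}\}$ by symmetry. I would begin by computing $\Sigma$ explicitly. Each $W^g$ has a single nonzero coordinate, indexed by the unique pair $(j,j')$ with $g \in A_j \cap A'_{j'}$, and the $W^g$ are independent across $g$; hence $\Sigma$ is diagonal with $\Sigma_{(j,j'),(j,j')} = \frac{k^2}{m}|A_j \cap A'_{j'}|$. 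The $\delta$-IB hypothesis pins every diagonal entry to $[1-\delta,1+\delta]$, so $\Sigma$ is a mild perturbation of $I_{k^2}$. Step (i) is essentially tautological: since $A$ and $A'$ are balanced, all $|A_j|$ and $|A'_{j'}|$ equal $m/k$, so the common additive offsets $|A_j|\mu$ cancel when comparing $u_a(A_j)$ across $j$ (and similarly across $j'$), and rescaling by $k/(\sigma\sqrt{m})$ transforms the comparisons into exactly the linear inequalities defining $\Lambda_{i,i'}$; ties occur with probability zero by non-atomicity of $\cD$.

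For step (ii), I would apply \Cref{lem:mult-berry-esseen} with $T=m$ and $d=k^2$. The third-moment quantity $\sum_{g \in M} \E[\|\Sigma^{-1/2} W^g\|_2^3]$ is controlled as follows: $W^g$ is supported on a single axis with $|W^g_{(j,j')}| \leq k/(\sigma\sqrt{m})$ since $|u_a(g)-\mu| \leq 1$, and the diagonal entries of $\Sigma^{-1}$ are at most $1/(1-\delta) \leq 2$. Hence each summand is at most $O(k^3/(\sigma^3 m^{3/2}))$, and summing over $g$ and multiplying by the Berry--Esseen prefactor $60 d^{1/4} = 60\sqrt{k}$ yields a total error of order $k^{7/2}/(\sigma^3\sqrt{m})$. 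The assumed lower bound $m \geq 30000 k^7/(\gamma^2 \sigma^6)$ is tailored to make this a sufficiently small fraction of $\gamma$. I expect this step to be the main technical obstacle: the constant $30000$ is chosen to absorb the $60\sqrt{k}$ prefactor together with the slack from bounding $\|\Sigma^{-1/2}\|_{\mathrm{op}}$, and the entire argument hinges on keeping $\Sigma$ well-conditioned (which fails without the $\delta$-IB hypothesis).

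For step (iii), I would invoke a standard TV bound between centered Gaussians, for instance $\mathrm{TV}\bigl(\cN(\bzero,\Sigma),\cN(\bzero,I_d)\bigr) \leq \tfrac{3}{2}\|\Sigma - I_d\|_F$, valid when $\Sigma$ is sufficiently close to $I_d$ (which holds since $\delta \leq \gamma/(3k) < 1$). Because $\Sigma - I_d$ is diagonal with entries of magnitude at most $\delta$ and $d = k^2$, we get $\|\Sigma - I_d\|_F \leq \delta k$, so the hypothesis $\delta \leq \gamma/(3k)$ bounds this TV distance by a small constant multiple of $\gamma$. Finally, for step (iv), the coordinates of $H \sim \cN(\bzero,I_{k^2})$ are i.i.d.\ standard normals, so the joint distribution is invariant under independent permutations of the row index $j$ and the column index $j'$. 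Consequently, the joint distribution of $\bigl(\argmax_j \sum_\ell H_{(j,\ell)},\, \argmax_{j'} \sum_\ell H_{(\ell,j')}\bigr)$ is uniform on $[k]^2$, which gives $\cN(\bzero,I_d)\{\Lambda_{i,i'}\} = 1/k^2$ for every $(i,i')$. Combining the three error contributions via the triangle inequality yields the desired bound $|p_{(i,i')} - 1/k^2| \leq \gamma$, completing the proof.
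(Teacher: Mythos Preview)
Your proposal is correct and follows essentially the same approach as the paper: interpret $p_{(i,i')}$ as $\Pr[H\in\Lambda_{i,i'}]$, apply the multivariate Berry--Esseen bound (\Cref{lem:mult-berry-esseen}), pass from $\cN(\bzero,\Sigma)$ to $\cN(\bzero,I_{k^2})$ via the Gaussian TV bound $\tfrac{3}{2}\|\Sigma-I_{k^2}\|_F\le\tfrac{3}{2}k\delta$, and conclude by symmetry. The only minor difference is that the paper computes $\Sigma^{-1/2}W^g$ coordinate-wise (obtaining $\|\Sigma^{-1/2}W^g\|_2\le 1/(\sigma\sqrt{|A_j\cap A'_{j'}|})$) rather than bounding $\|\Sigma^{-1/2}\|_{\mathrm{op}}$ and $\|W^g\|_2$ separately as you do; your cruder estimate $1/(1-\delta)\le 2$ would need to be sharpened to, say, $1/(1-\delta)\le 6/5$ (valid since $\delta\le\gamma/(3k)\le 1/6$) for the constant $30000$ to deliver the required $\gamma/2$ in step~(ii), but this is cosmetic.
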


\paragraph{Part (ii): Bounding $\Pr\left[X^1 + \cdots + X^n = \frac{n}{k}\cdot\bone\right]$.} We reinterpret $X^i$ so that it is a mixture distribution between the uniform distribution on $\{\be_1, \dots, \be_k\}$ and a ``leftover'' distribution. 
In other words, for each $X^i$, we can toss a (biased) coin and, based on the outcome, sample $X^i$ either from the uniform distribution or from the leftover distribution.
By a concentration bound, we show that $X^i$ is drawn from the uniform distribution for most indices $i$. 
For these $X^i$, their sum exactly follows the multinomial distribution, which we have a very good estimate on. 
From this, we can derive the following bound.

\begin{lemma} \label{lem:maxprob-main}
Let $\zeta \in (0, 0.5)$ and $\tX^1, \dots, \tX^n$ be any independent random variables on $\{\be_1, \dots, \be_k\}$ such that $|\Pr[\tX^a = \be_i] - \frac{1}{k}| \leq \frac{\zeta}{k}$ for all $a \in N$ and $i \in [k]$. 
If $n \geq \frac{k}{1 - 2\zeta}$, then $\Pr\left[\tX^1 + \cdots + \tX^n = \frac{n}{k}\cdot\bone\right]$ is at most
\begin{align*}
\exp\left(-2\zeta^2 n\right) + \frac{k^{k/2}}{(2\pi(1 - 2\zeta)n)^{\frac{k-1}{2}}}.
\end{align*}
\end{lemma}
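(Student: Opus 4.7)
The plan is to couple each $\tX^a$ with a genuinely uniform sample via a mixture decomposition, and then to control the probability of hitting the specific target $\frac{n}{k}\bone$ by combining a tail bound on ``how many samples turn out uniform'' with the mode probability of a symmetric multinomial. Concretely, write $\Pr[\tX^a = \be_i] = \frac{1}{k} + \epsilon_{a,i}$, so that the hypothesis gives $|\epsilon_{a,i}| \le \zeta/k$ and $\sum_i \epsilon_{a,i} = 0$. Define a distribution $L_a$ on $\{\be_1,\dots,\be_k\}$ by $\Pr[L_a = \be_i] := \frac{1}{k} + \frac{\epsilon_{a,i}}{\zeta}$; non-negativity is exactly the hypothesis $|\epsilon_{a,i}| \le \zeta/k$, which is the only place this bound is used tightly, and the probabilities sum to $1$ because $\sum_i \epsilon_{a,i} = 0$. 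A direct check shows that $(1-\zeta)\cdot\mathrm{Unif}\{\be_1,\dots,\be_k\} + \zeta \cdot L_a$ coincides with the law of $\tX^a$. So, independently across $a$, I introduce a Bernoulli indicator $B^a$ with $\Pr[B^a = 1] = 1 - \zeta$, a uniform sample $U^a$ on $\{\be_1,\dots,\be_k\}$, and a sample drawn from $L_a$, and I may assume $\tX^a$ equals $U^a$ when $B^a = 1$ and the $L_a$-sample otherwise.

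Next I would let $T := \sum_a B^a$; by construction $\E[T] = n(1-\zeta)$. Hoeffding's inequality applied to the bounded variables $B^a \in [0,1]$ gives
\[
\Pr[T < n(1 - 2\zeta)] \;\le\; \Pr\bigl[T \le \E[T] - n\zeta\bigr] \;\le\; \exp(-2\zeta^2 n),
\]
which accounts for the first term of the stated bound. On the complementary event $T \ge n(1-2\zeta)$, I would condition on the whole realization of $(B^a)_{a\in N}$ and on the leftover samples associated with those $a$ for which $B^a = 0$. The only remaining randomness is the independent family $(U^a)_{a:B^a = 1}$, whose sum is distributed as $\multi{T}{\bone/k}$. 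The target event $\sum_a \tX^a = \frac{n}{k}\bone$ fixes this sum to a single vector, so its conditional probability is at most the mode probability of $\multi{T}{\bone/k}$.

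The final step is to bound this mode probability by $\frac{k^{k/2}}{(2\pi T)^{(k-1)/2}}$, which is exactly the symmetric, tight case of the standard Stirling/AM-GM estimate already invoked in the proof of \Cref{lem:first-moment-main}. Since this expression is decreasing in $T$ for $k \ge 2$, on the event $T \ge n(1 - 2\zeta)$ it is at most $\frac{k^{k/2}}{(2\pi n(1 - 2\zeta))^{(k-1)/2}}$; the hypothesis $n \ge k/(1 - 2\zeta)$ is precisely what ensures $T \ge k$ on this event, so the mode bound is meaningful. Combining via the law of total probability yields the claimed inequality. There is no serious obstacle: the only genuinely delicate calculation is the verification that the mixture is valid with leftover weight $\zeta$, which is exactly what the hypothesis $|\Pr[\tX^a = \be_i] - 1/k| \le \zeta/k$ was tailored to guarantee; everything else is a routine Hoeffding bound plus a multinomial mode estimate already used earlier in the paper.
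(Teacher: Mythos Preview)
Your proposal is correct and follows essentially the same approach as the paper: the mixture decomposition with leftover weight $\zeta$, the Hoeffding bound on the number of ``uniform'' coordinates, and the multinomial mode estimate (the paper's \Cref{lem:multi-max-prob}) are exactly the ingredients used there. The only cosmetic difference is that you condition directly on the leftover samples whereas the paper packages this step as a separate inequality $\maxprob(X+Y)\le\maxprob(X)$ for independent $X,Y$; these are the same argument.
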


Finally, combining Lemmas~\ref{lem:cond-as-sum}, \ref{lem:prob-close}, and \ref{lem:maxprob-main} yields \Cref{lem:second-moment-main}.

\subsection{Non-Divisible Case}
\label{sec:non-divisible}

Next, we consider the case where $m$ is not divisible by~$k$.
For this case, we prove that an envy-free outcome cannot exist with high probability unless $m = \Omega(\sqrt{n})$. 
This differs markedly from the divisible case, where $m = \omega(\log n)$ suffices for existence. 

\begin{theorem} \label{thm:lb-indiv-asympt}
For any fixed $k \geq 2$, if $m = o(\sqrt{n})$ and $m$ is not divisible by $k$, then with high probability, no envy-free outcome exists.
\end{theorem}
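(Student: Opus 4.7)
The plan is to apply the first moment method (as in the proof of \Cref{thm:lb-div-asympt}), but with a substantially sharper per-allocation bound that exploits non-divisibility. By \Cref{obs:simplified-prob,obs:event-prob-simplified}, the probability that an envy-free outcome exists is at most $\sum_{A \in \cA} \multi{n}{\bp^A}\{(n/k)\cdot\bone\}$. Since there are only $k^m = \exp(o(\sqrt{n}))$ allocations when $m = o(\sqrt{n})$, it suffices to establish the uniform per-allocation bound $\multi{n}{\bp^A}\{(n/k)\cdot\bone\} \le \exp(-\omega(\sqrt{n}))$. The key structural fact is that, because $k \nmid m$, every allocation must contain at least two bundles of unequal size, so there exist $i, j \in [k]$ with $|A_j| \ge |A_i| + 1$.

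The centerpiece is the following key lemma: there exists a constant $c = c(\cD, k) > 0$ such that every allocation $A$ (with $k\nmid m$) satisfies $\|\bp^A - \tfrac{1}{k}\bone\|_\infty \ge c/\sqrt{m}$. Given this lemma, Pinsker's inequality yields $\kl{\tfrac{1}{k}\bone}{\bp^A} \ge \tfrac{1}{2}\|\bp^A - \tfrac{1}{k}\bone\|_1^2 \ge c^2/(2m)$, and the standard multinomial upper bound $\multi{n}{\bp}\{n\bq\} \le \exp(-n \cdot \kl{\bq}{\bp})$ (a direct consequence of $\binom{n}{n\bq} \le \exp(nH(\bq))$) then gives $\Pr[\cE_A] \le \exp(-c^2 n/(2m))$. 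A union bound over all $k^m$ allocations yields total probability at most $\exp(m \ln k - c^2 n/(2m))$, which tends to $0$ whenever $m = o(\sqrt{n})$, since $n/m = \omega(\sqrt{n})$ dominates $m \ln k = o(\sqrt{n})$.

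I would prove the key lemma by a case split on the spread of bundle sizes. If some $|A_\ell|$ differs from $m/k$ by $\omega(\sqrt m)$, then a direct CLT/Chernoff argument for $u_a(A_\ell)$ forces $|p^A_\ell - 1/k|$ to be bounded away from $0$, immediately giving the lemma. In the complementary nearly-balanced regime, where all $|A_\ell|$ lie within $O(\sqrt m)$ of $m/k$, fix $i, j$ with $|A_j| \ge |A_i| + 1$, and for a fixed agent $a$ let $X = u_a(A_i)$, $B = \max_{\ell \ne i,j} u_a(A_\ell)$, and $u_a(A_j) = Y' + V$ where $Y' = u_a(A_j \setminus \{g\})$ for a fixed $g \in A_j$ and $V = u_a(g) \sim \cD$; by additivity and i.i.d.\ goods, $Y'$ has the same distribution as $X$ and is independent of $V$ and $B$. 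The distributional swap $X \leftrightarrow Y'$ together with $V \ge 0$ expresses $p^A_j - p^A_i$ as $\Pr[E_1 \setminus E_2]$, where $E_2 = \{Y' \ge X + V,\ Y' \ge B\} \subseteq E_1 = \{Y' + V \ge X,\ Y' + V \ge B\}$; lower-bounding by the subevent $\{Y' \le X \le Y' + V\} \cap \{Y' + V \ge B\}$ and invoking local anti-concentration $f_{X - Y'}(0) = \Theta(1/\sqrt m)$ for the centered $\Theta(m)$-fold i.i.d.\ sum $X - Y'$, one obtains $\Pr[0 \le X - Y' \le v \mid V = v] \ge c' v/\sqrt{m}$. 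In this nearly-balanced regime, a Gaussian approximation (all bundle-utility means differ by $O(\sqrt m)$ while standard deviations are $\Theta(\sqrt m)$) shows that $\Pr[Y' + V \ge B]$ is bounded below by a $k$-dependent constant, and averaging over $V$ (whose mean $\mu > 0$ by non-atomicity of $\cD$ on $[0,1]$) delivers $p^A_j - p^A_i = \Omega(1/\sqrt{m})$. The main obstacle is the local anti-concentration estimate $f_{X - Y'}(0) \ge c'/\sqrt m$: since non-atomicity of $\cD$ does not by itself furnish a density for $\cD$ (and hence for $X - Y'$), one must invoke a smoothing or Esseen-style anti-concentration argument that leverages $|\phi_{\cD}(t)| < 1$ for $t \ne 0$ (which follows from non-atomicity) to secure the uniform $1/\sqrt{m}$ lower bound.
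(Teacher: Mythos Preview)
Your high-level plan coincides with the paper's: first moment via \Cref{obs:simplified-prob}, the key lemma $\|\bp^A - \tfrac{1}{k}\bone\|_\infty \ge c/\sqrt{m}$ whenever $k\nmid m$ (this is exactly \Cref{lem:prob-smallest-group}), then Pinsker to get a KL lower bound, the multinomial exponential bound (packaged as \Cref{lem:prob-single-alloc-ef}), and a union bound over $k^m$ allocations. At that level nothing differs.

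Where you diverge is in proving the key lemma, and your flagged ``main obstacle'' is a genuine gap. Non-atomicity of $\cD$ gives $|\phi_\cD(t)| < 1$ for $t \ne 0$, but it does \emph{not} give Cram\'er's condition $\limsup_{|t|\to\infty}|\phi_\cD(t)| < 1$; for singular continuous $\cD$ (e.g.\ Cantor-type on $[0,1]$) the $m$-fold convolution need not admit a density at all, so the pointwise estimate $f_{X-Y'}(0) = \Theta(1/\sqrt m)$ is unavailable in general. Two smaller loose ends: your swap requires $X$ and $Y'$ to be identically distributed, so you should let $Y'$ be the utility of a size-$|A_i|$ subset of $A_j$ and $V$ the utility of the remaining $|A_j|-|A_i|\ge 1$ goods (not just one); and your lower bound for $\Pr[\{Y' \le X \le Y'+V\}\cap\{Y'+V\ge B\}]$ treats the two factors as if independent, whereas conditioning on $\{Y'+V\ge B\}$ biases $Y'$ upward, potentially out of the bulk of $X$. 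The paper's argument sidesteps all three issues at once: take $A_1$ to be a smallest bundle and size-$|A_1|$ subsets $A'_j \subseteq A_j$ for \emph{every} $j$, so that the $Z_j := u_a(A'_j)$ are exactly i.i.d.\ and $\Pr[Z_1=\max_j Z_j]=1/k$ on the nose; then the deficit $\Pr[Z_1=\max_j Z_j \text{ and } u_a(A_k\setminus A'_k) > Z_1-Z_k]$ is lower-bounded by $\Omega_{k,\cD}(1/\sqrt m)$ using only a Chebyshev quantile bound, a conditional-variance estimate, and an elementary pigeonhole/Cauchy--Schwarz anti-concentration for i.i.d.\ pairs (\Cref{lem:bounded-diff,lem:diff-main-lb}). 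No case split, no Gaussian approximation, no Fourier analysis, and non-atomicity enters only to ensure $Q_\cD(1/2)>0$.
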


The proof of this non-existence result, like the divisible case (\Cref{thm:lb-div-asympt}), uses the first moment method. 
The key distinguishing property between the two cases is encapsulated in the lemma below, which states that for any allocation~$A$, there exists $i \in [k]$ such that $p_i^{A}$ is noticeably smaller than $\frac{1}{k}$; in particular, we may take an index~$i$ corresponding to a bundle with the smallest size.
This is in contrast to the divisible case, for which a balanced allocation gives $p_i^{A} = \frac{1}{k}$ for all $i \in [k]$.

\begin{lemma} \label{lem:prob-smallest-group}
If $m$ is not divisible by $k$, then for any $A \in \cA$, there exists $i \in [k]$ such that
$p_i^{A} \leq \frac{1}{k} - \frac{\alpha}{k^2\sqrt{m}},$
where $\alpha > 0$ is a constant depending only on $\cD$.
\end{lemma}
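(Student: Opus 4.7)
The plan is to take $i$ to be an index minimizing $|A_i|$ and $j$ an index maximizing $|A_j|$; non-divisibility of $m$ by $k$ forces $d := |A_j| - |A_i| \geq 1$. For each $l$ with $|A_l| \geq |A_i|$, split $A_l = A_l^{(1)} \sqcup A_l^{(2)}$ with $|A_l^{(1)}| = |A_i|$, and let $U := u_a(A_i)$, $V := u_a(A_l^{(1)})$, $W := u_a(A_l^{(2)})$. The iid exchangeability $U \leftrightarrow V$ combined with $W \geq 0$ a.s.\ yields $p_i^A \leq p_l^A$ for every $l \in [k]$. Summing these $k$ inequalities, $1 - k\, p_i^A \geq p_j^A - p_i^A$, so the lemma reduces to showing $p_j^A - p_i^A \geq \alpha'(\cD)/(k\sqrt{m})$ for some constant $\alpha'(\cD) > 0$; then $1/k - p_i^A \geq \alpha'(\cD)/(k^2\sqrt{m})$.

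For the pairwise gap, apply the splitting with $l = j$ and set $M := \max_{l \neq i,j} u_a(A_l)$, so $(U, V, W, M)$ is mutually independent, $U \stackrel{d}{=} V$, $Y_i = U$, $Y_j = V + W$. The $U \leftrightarrow V$ swap rewrites $p_j^A = \Pr[U + W \geq V,\ U + W \geq M]$ and $p_i^A = \Pr[V \geq U + W,\ V \geq M]$. Since $\{V \geq U + W\}$ implies $\{V + W \geq U\}$ and $\{V \geq M\}$ implies $\{V + W \geq M\}$, a short case analysis yields
\[
p_j^A - p_i^A \;\geq\; \Pr\bigl[\,|U - V| < W,\ U + W \geq M\,\bigr]
\;=\; E\bigl[\bone\{U + W \geq M\}\cdot \Pr[V \in (U - W, U + W)\mid U, W]\bigr],
\]
using independence of $V$ from $(U, W, M)$. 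I plan to lower-bound this by a product of three factors: (i)~a small-ball inequality $\Pr[V \in (U - W, U + W)\mid U, W] \geq c_\cD W/\sqrt{|A_i|}$ when $U$ lies in the typical $O(\sqrt{|A_i|})$-band around $|A_i|\mu$; (ii)~$E[W \cdot \bone\{W \text{ typical}\}] = \Omega_\cD(1)$ together with $|A_i| \leq m/k$, contributing $\Omega_\cD(\sqrt{k/m})$; and (iii)~$\Pr[U + W \geq M] \geq 1/(k-1)$, which follows by a monotonicity argument: $U + W$ is an independent copy of $Y_j$ and $Y_j$, having the largest bundle size, stochastically dominates every $Y_l$ for $l \neq i, j$, so replacing all those $Y_l$ by iid copies of $Y_j$ only decreases the probability, and in that symmetric case the probability is exactly $1/(k-1)$. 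Multiplying these three factors produces the claimed $\Omega_\cD(1/(k\sqrt{m}))$ pairwise gap.

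The main obstacle is the anti-concentration step~(i). Because $\cD$ is assumed only non-atomic---rather than absolutely continuous---the sum $V$ of $|A_i|$ iid draws need not possess a density, so the classical local CLT is not directly available. My plan is to establish the required bound instead via Esseen's smoothing inequality together with the characteristic-function estimate $|\widehat\cD(t)| < 1$ for every $t \neq 0$ (a consequence of $\cD$ being non-atomic with positive variance, since non-atomicity of $\cD$ rules out $\cD$ being concentrated on an arithmetic progression), transferring enough Gaussian mass onto the small window $(U - W, U + W)$ in the typical region to secure the required $\Omega_\cD(W/\sqrt{|A_i|})$ estimate.
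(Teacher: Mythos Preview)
Your reduction to the pairwise gap is sound: the exchangeability argument for $p_i^A \leq p_l^A$, the inequality $1 - k\,p_i^A \geq p_j^A - p_i^A$, and the bound $p_j^A - p_i^A \geq \Pr[\,|U-V| < W,\ U+W \geq M\,]$ all check out. The trouble is in the final lower bound, where two of your three ingredients do not combine as you claim.

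The first gap is the factorization. Factor~(i) holds only when $U$ sits in its central $O(\sqrt{|A_i|})$-band, but the event $\{U+W \geq M\}$ in~(iii) pushes $U$ in the opposite direction: $M$ is the maximum of $k-2$ independent variables each stochastically at least as large as $U$, so on $\{U+W \geq M\}$ the value of $U$ is typically of order $\sqrt{\ln k}$ standard deviations above its mean, precisely where the small-ball estimate~(i) degrades by a factor polynomial in~$k$. You give no mechanism to decouple these two pieces, and simply multiplying the three probabilities is not justified; since your reduction needs the strong target $p_j^A - p_i^A \geq \alpha'/(k\sqrt{m})$, losing extra powers of $k$ here is fatal. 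The paper sidesteps this tension by truncating \emph{every} bundle to size $|A_i|$, obtaining i.i.d.\ variables $Z_1,\dots,Z_k$; it then conditions on $Z_1$ and $Z_k$ both exceeding the $(1-\tfrac{1}{2k})$-quantile---which already places them above $Z_2,\dots,Z_{k-1}$ with probability $\Omega(1)$---and applies anti-concentration to the \emph{conditional} distribution, whose variance it controls explicitly.

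The second gap is the anti-concentration step itself. The pointwise estimate $|\widehat\cD(t)| < 1$ for $t \neq 0$ is far weaker than Cram\'er's condition and does not, via Esseen's smoothing inequality, yield a usable lower bound on $\Pr[V \in (x-W,x+W)]$: the Berry--Esseen error $O(1/\sqrt{|A_i|})$ is the \emph{same order} as the Gaussian mass $\Theta(W/\sqrt{|A_i|})$ you are trying to capture when $W=O(1)$, so the lower bound can be swallowed entirely. Non-atomic singular distributions (Cantor-type) show that no quantitative local limit theorem follows from non-atomicity alone. The paper's anti-concentration is instead purely variance-based: for i.i.d.\ non-atomic $Y_1,Y_2$ with standard deviation $\sigma'$, a pigeonhole over $O(\sigma'/\xi)$ subintervals between the quartiles gives $\Pr[|Y_1-Y_2|<\xi]\geq \min\{1/8,\ \xi/(32\sigma')\}$, requiring nothing beyond non-atomicity and finite variance. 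You could import this device, but you would still have to resolve the first gap.
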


\Cref{lem:prob-smallest-group} implies the following upper bound on $\Pr[\cE_{A}]$.

\begin{lemma} \label{lem:prob-single-alloc-ef}
If $m$ is not divisible by $k$, then for any $A \in \cA$, we have $\Pr[\cE_{A}] \leq \sqrt{k} \cdot \exp\left(-\frac{\alpha^2}{k^4} \cdot \frac{n}{m}\right)$,
where $\alpha$ is the constant from \Cref{lem:prob-smallest-group}.
\end{lemma}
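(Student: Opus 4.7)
The plan is to reduce the multinomial point probability to a one-dimensional binomial tail bound. By \Cref{obs:event-prob-simplified}, $\Pr[\cE_A] = \multi{n}{\bp^A}\left\{\frac{n}{k}\cdot\bone\right\}$. Since $m$ is not divisible by $k$, \Cref{lem:prob-smallest-group} supplies an index $i \in [k]$ with $p_i^A \le \frac{1}{k} - \frac{\alpha}{k^2\sqrt{m}}$, so the mean of the $i$-th coordinate is bounded away from $n/k$.

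Next, I would marginalize. The $i$-th coordinate of $\multi{n}{\bp^A}$ is distributed as $\bino{n}{p_i^A}$, and the event $\cE_A$ forces this coordinate to equal exactly $n/k$. Hence
\[
\Pr[\cE_A] \;\le\; \Pr\!\left[\bino{n}{p_i^A} = \tfrac{n}{k}\right] \;\le\; \Pr\!\left[\bino{n}{p_i^A} \ge \tfrac{n}{k}\right],
\]
and the required upward deviation from the mean is at least $\Delta := \frac{n}{k} - n p_i^A \ge \frac{\alpha n}{k^2 \sqrt{m}}$.

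The final step applies a standard upper-tail concentration inequality for a sum of $[0,1]$-valued independent Bernoullis (Hoeffding's inequality), giving
\[
\Pr\!\left[\bino{n}{p_i^A} \ge \tfrac{n}{k}\right] \;\le\; \exp\!\left(-\tfrac{2\Delta^2}{n}\right) \;\le\; \exp\!\left(-\tfrac{2\alpha^2 n}{k^4 m}\right).
\]
Since $\sqrt{k}\ge 1$ and $\exp(-\alpha^2 n/(k^4 m))\le 1$, this already implies the claimed bound $\sqrt{k}\cdot \exp(-\alpha^2 n/(k^4 m))$; in fact the factors $\sqrt{k}$ and $2$ are slack.

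There is essentially no substantive obstacle: all of the work is already packaged in \Cref{lem:prob-smallest-group}. The only thing to be careful about is projecting the joint multinomial event onto a single coordinate before applying the concentration bound, so that the ``gap'' in $p_i^A$ identified by \Cref{lem:prob-smallest-group} can be exploited on a one-dimensional problem where Hoeffding directly applies.
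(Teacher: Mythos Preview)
Your argument is correct. By marginalizing the multinomial onto the ``deficient'' coordinate $i$ supplied by \Cref{lem:prob-smallest-group}, you reduce to a binomial upper-tail event with gap $\Delta \ge \alpha n /(k^2\sqrt{m})$, and Hoeffding gives $\exp(-2\Delta^2/n) \le \exp\!\big(-2\alpha^2 n/(k^4 m)\big)$, which is stronger than the stated bound (the $\sqrt{k}$ and the missing factor~$2$ are indeed slack).

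This is a genuinely different route from the paper's. The paper stays in $k$ dimensions: it uses the Stirling-based multinomial point-mass formula (\Cref{cor:lem:stirling-mult-equip}) to write $\Pr[\cE_A] \le \sqrt{k}\cdot e^{-n\,\kl{\bone/k}{\bp^A}}$, and then lower-bounds the KL divergence via Pinsker's inequality and the TV gap $|p_i^A - 1/k| \ge \alpha/(k^2\sqrt{m})$. Your approach bypasses both the Stirling estimate and Pinsker by projecting to one coordinate first; this is more elementary and yields a slightly sharper constant. The paper's route has the advantage of reusing machinery (\Cref{cor:lem:stirling-mult-equip}) already developed for \Cref{lem:first-moment-main}, but for this lemma alone your marginalize-then-Hoeffding argument is cleaner.
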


We now finish the proof in a similar manner as \Cref{thm:lb-div-asympt}.

\begin{proof}[Proof of \Cref{thm:lb-indiv-asympt}]
Using \Cref{obs:simplified-prob} and \Cref{lem:prob-single-alloc-ef} and taking the union bound over all $A \in \cA$, the probability that an envy-free outcome exists is at most $k^m \cdot \sqrt{k} \cdot \exp\left(-\frac{\alpha^2}{k^4} \cdot \frac{n}{m}\right)$,
which is $o(1)$ because $m = o(n/m)$.
\end{proof}

When $m = \Omega(n \log n)$, an envy-free outcome exists with high probability due to the result in the fixed-group setting by \citet{ManurangsiSu17}. 
Tightening the gap between $o(\sqrt{n})$ and $\Omega(n\log n)$ is an interesting question.

\section{Conclusion}

In this paper, we have studied fairness in the allocation of indivisible goods with variable groups, where a partition of the agents into groups can be chosen along with an allocation of the goods.
We demonstrated that the flexibility afforded by this model allows strong envy-freeness guarantees to be made in both worst-case and average-case scenarios.

Besides closing the asymptotic gap in the non-divisible case (\Cref{sec:non-divisible}) and extending the asymptotic analysis in \Cref{sec:EF-asymptotic} to the case where groups may have differing sizes, an intriguing direction for future work is to make connections between the variable-group model and the setting of \emph{hedonic games} \citep{AzizSa16}.
Specifically, in hedonic games, agents derive utilities from other agents assigned to the same group, and the objective is to find a desirable partition of the agents into groups; variants where the group sizes are fixed have also been considered \citep{BiloMoMo22,LiMiNi23}.
A generalization of both models would therefore be to permit preferences both over agents as well as over goods---this can reflect, e.g., group projects where resources are assigned to each group.\footnote{A similar model has been studied under the name \emph{generalized group activity selection problem} \citep{BiloFaFl19,FlamminiVa22}.}
It would be interesting to investigate fairness guarantees that can be made in this general setting.

\section*{Acknowledgments}

This work was partially supported by the Singapore Ministry of Education under grant number MOE-T2EP20221-0001, by JST FOREST Grant Number JPMJFR226O, and by an NUS Start-up Grant.
We would like to thank Erel Segal-Halevi for helpful discussions at the early stage, and the anonymous reviewers for their valuable comments.

\bibliography{aaai2026}

\appendix

\section{Surjectivity of $\of$ in the proof of \Cref{lem:main-coloring-to-assignment}}
\label{app:surjective}

We now show the surjectivity of $\of$ as defined in the proof of \Cref{lem:main-coloring-to-assignment}.
To this end, we will use the following lemma, which is stated as Lemma~2.8 in the work of \citet{MeunierSu19}.

\begin{lemma}[\citealt{MeunierSu19}] \label{lem:surjective}
Any continuous map $q$ from a polytope to itself, which satisfies $q(F) \subseteq F$ for
every face $F$ of the polytope, is surjective.
\end{lemma}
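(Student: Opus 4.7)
The plan is to proceed by induction on the dimension $d$ of the polytope $P$. The base case $d = 0$ is immediate since $P$ is a single point. For the inductive step, I would first dispense with the boundary: every proper face $F \subsetneq P$ is itself a polytope of dimension less than $d$ whose faces are all faces of $P$, and the restriction $q|_F : F \to F$ inherits the face-preservation hypothesis; by the induction hypothesis, $q|_F$ is surjective onto $F$. Hence every point of $\partial P$ already lies in $q(P)$, and it remains only to show that every point of $\mathrm{int}(P)$ does.

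For the interior, I would use the straight-line homotopy $q_t := (1-t)\,q + t \cdot \mathrm{id}_P$ for $t \in [0,1]$. Since each face $F$ of $P$ is convex and contains both $x$ and $q(x)$ whenever $x \in F$, the map $q_t$ also satisfies $q_t(F) \subseteq F$; in particular $q_t(\partial P) \subseteq \partial P$. Thus $q_t$ is a continuous homotopy of pairs $(P, \partial P) \to (P, \partial P)$ connecting $q$ to $\mathrm{id}_P$, so the induced map $q_* : H_d(P, \partial P) \to H_d(P, \partial P)$ equals the identity. Since $P$ is homeomorphic to a closed $d$-ball, $H_d(P, \partial P) \cong \mathbb{Z}$ is nontrivial, and in particular $q_* \neq 0$.

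Now suppose, for contradiction, that some $p^* \in \mathrm{int}(P)$ is not contained in $q(P)$. Then $q$ factors through the inclusion $P \setminus \{p^*\} \hookrightarrow P$. Since $P \setminus \{p^*\}$ deformation retracts onto $\partial P$, this inclusion induces the zero map on $H_d(\cdot, \partial P)$, so $q_* = 0$, contradicting the previous paragraph. This completes the induction.

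I expect the main subtlety to lie in the low-dimensional bookkeeping around faces (making sure that the face-of-a-face relationship and the improper face $P$ itself are treated consistently) rather than in the topological core of the argument. A more hands-on alternative to the homology step would be to argue via the topological degree of $q|_{\partial P} : \partial P \to \partial P \cong S^{d-1}$: the same homotopy shows this degree equals $1$, and if $q$ missed some interior point then $q|_{\partial P}$ would extend continuously over the ball $P$ and hence be null-homotopic, again a contradiction.
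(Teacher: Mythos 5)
Your argument is correct, but note that the paper does not prove this statement at all: it is quoted verbatim as Lemma~2.8 of \citet{MeunierSu19} and used as a black box, so there is no in-paper proof to compare against. Your write-up is a sound self-contained proof of the cited fact. The induction on dimension correctly handles the relative boundary (faces of a face are faces of $P$, so the hypothesis restricts), and the interior step is fine: the straight-line homotopy $q_t=(1-t)q+t\,\mathrm{id}$ stays inside each face by convexity, so it is a homotopy of pairs $(P,\partial P)\to(P,\partial P)$ from $q$ to the identity, whence $q_*=\mathrm{id}$ on $H_d(P,\partial P)\cong\mathbb{Z}$; missing a relative-interior point would force $q_*$ to factor through $H_d(P\setminus\{p^*\},\partial P)=0$, a contradiction. (Two cosmetic points: ``interior'' and ``boundary'' should be read relative to the affine hull of $P$, and the boundary induction is not strictly needed, since $q(P)$ is compact and contains the relative interior, hence equals $P$; conversely, your degree-theoretic variant works because $q|_{\partial P}$ already maps into $\partial P$, so composing with the radial retraction changes nothing.) This is essentially the standard topological proof of the lemma, in the same spirit as the argument given by Meunier and Su.
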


\begin{lemma}
Let $\of$ be as defined in the proof of \Cref{lem:main-coloring-to-assignment}. Then, $\of$ is surjective.
\end{lemma}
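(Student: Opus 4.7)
The plan is to invoke Lemma~2.8 of Meunier and Su (quoted in the appendix), which asserts that a continuous self-map of a polytope carrying each face into itself is surjective. To match its hypotheses, I would first identify the $(k-1)$-simplex $S_m$ with $\Delta^{k-1}$ via the unique affine bijection $\psi : S_m \to \Delta^{k-1}$ sending the $i$-th main vertex $\bfv_i$ of $S_m$ (the configuration at which only group $i$ receives goods) to $\bfe^i$. Under this identification, the face of $S_m$ spanned by any $\{\bfv_{i_1}, \dots, \bfv_{i_r}\}$ corresponds to the face of $\Delta^{k-1}$ spanned by $\{\bfe^{i_1}, \dots, \bfe^{i_r}\}$, so it suffices to verify that $\of$ is continuous and that $\of(F) \subseteq \psi(F)$ for every face $F$ of $S_m$.

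Continuity is routine. Each $\of^{(a)}$ is affine on every elementary simplex of $\ttT_\half$, and on the common face of two adjacent elementary simplices the two affine pieces are determined by the shared vertex values $\bfe^{\lambda_a(\cdot)}$, so they agree. The affine extension thus glues into a continuous map on $S_m$, and averaging continuous functions preserves continuity.

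The main step is the face-preservation property. Fix a face $F$ of $S_m$ spanned by $\{\bfv_{i_1}, \dots, \bfv_{i_r}\}$ and let $\bfx \in F$. Choose an elementary simplex $\langle \bfy_1, \dots, \bfy_k \rangle$ of $\ttT_\half$ containing $\bfx$ and write $\bfx = \sum_h \alpha_h \bfy_h$ with $\alpha_h \ge 0$. Because $F$ is obtained from $S_m$ by saturating certain of its defining linear inequalities, any vertex $\bfy_h$ with $\alpha_h > 0$ must itself lie in $F$: otherwise some inequality that $\bfx$ satisfies with equality would be strictly satisfied by the convex combination, a contradiction. For such a vertex, the minimal face of $S_m$ containing $\bfy_h$ is a sub-face of $F$, so properness of $\lambda_a$ forces $\bfv_{\lambda_a(\bfy_h)} \in \{\bfv_{i_1}, \dots, \bfv_{i_r}\}$; hence $f^{(a)}(\bfy_h) = \bfe^{\lambda_a(\bfy_h)} \in \psi(F)$. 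The convex combination $\of^{(a)}(\bfx) = \sum_h \alpha_h f^{(a)}(\bfy_h)$ therefore lies in $\psi(F)$, and averaging over $a \in N$ yields $\of(\bfx) \in \psi(F)$. Applying Meunier--Su then gives surjectivity.

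The argument does not involve any serious obstacle; the only care needed is in aligning two pieces of standard machinery---that a point on a face of a polytope can only be represented by vertices lying on that face, and the definitional constraint of proper colorings---with the combinatorial identification $\psi$ between the face lattice of $S_m$ and that of $\Delta^{k-1}$. Everything else is bookkeeping that follows from the structure of Kuhn's triangulation.
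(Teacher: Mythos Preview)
Your proposal is correct and follows essentially the same route as the paper's proof: both reduce to the Meunier--Su lemma via the affine identification between $S_m$ and $\Delta^{k-1}$, then argue that vertices of the elementary simplex appearing with positive coefficient in the barycentric expression of $\bfx$ must lie on the face $F$, so properness forces the color to land in the index set of $F$. The only cosmetic difference is that the paper composes $\of$ with $\psi:\Delta^{k-1}\to S_m$ and verifies the face condition coordinate-wise (checking $x_i=x_{i-1}$ for indices outside the face), whereas you run the identification in the opposite direction and phrase the same step as an abstract convexity argument.
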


\begin{proof}
Let $\psi: \Delta^{k-1} \to S_m$ denote the canonical mapping from $\Delta^{k-1}$ to $S_m$, i.e.,
\begin{align*}
\psi(\bfy)_i = \frac{1}{2} + m \cdot \sum_{j=1}^{i} y_j & &\forall i \in [k - 1].
\end{align*}

Thus, to show that $\of$ is surjective, it suffices to show that $q :=  \of \circ \psi$ is surjective. 
By \Cref{lem:surjective}, we only need to show that $q$ satisfies the condition of the lemma. 
For each $a \in N$, define $q^{(a)} := \of^{(a)} \circ \psi$. 
Since $q = \frac{1}{n} \sum_{a \in N} q^{(a)}$, it is in turn sufficient to show that each $q^{(a)}$ satisfies the condition of \Cref{lem:surjective}.

Let us now fix $a \in N$, a face $F$ of $\Delta^{k-1}$, and $\bfy \in F$. 
Consider $\bfx = \psi(\bfy)$. 
By definition of $\of^{(a)}$, there exist\footnote{Note that $k'$ can be strictly less than $k$ if $\bfx$ belong to a face of an elementary simplex (since we only consider those $\bfx^h$ with coefficient strictly greater than zero).} $\bfx^1, \dots, \bfx^{k'} \in V(\ttT_{\half})$ and $\alpha^1, \dots, \alpha^{k'} > 0$ such that
\begin{align*}
\sum_{h \in [k']} \alpha^{h} &= 1; \\
\sum_{h \in [k']} \alpha^{h} \bfx^h &= \bfx; \\
\sum_{h \in [k']} \alpha^h f^{(a)}(\bfx^h) &= \of^{(a)}(\bfx).
\end{align*}
Thus, to show that $q^{(a)}(\bfy) = \of^{(a)}(\bfx)$ belongs to $F$, it suffices to argue that $f^{(a)}(\bfx^h)$ belongs to $F$ for all $h \in [k']$. 

Since $F$ is a face of $\Delta^{k-1}$, there exists $S \subseteq [k]$ such that $F$ is the span of $\{\bfe^{j} \mid j \in S\}$. 
From the definition of $\psi$, this implies that $x_i = x_{i-1}$ for all $i \notin S$. Since $\bfx$ is a strictly positive convex combination of $\bfx^1, \dots, \bfx^h$, we also have that $x^h_i = x^h_{i - 1}$ for all $h \in [k']$ and $i \notin S$. 
This means that $\bfx^h$ belongs to the span of $\{\psi(\bfe^{j}) \mid j \in S\}$; note also that each $\psi(\bfe^{j})$ is a vertex of $V(\ttT_{\half})$. 
Since $\lambda_a$ is proper, we have that $\lambda_a(\bfx^h) \in S$. It follows that $f^{(a)}(\bfx^h) = \bfe^{\lambda_a(\bfx^h)}$ belongs to $F$, as desired.
\end{proof}

\section{Deferred Proofs from \Cref{sec:EF-asymptotic}}

\subsection{Additional Preliminaries}

We introduce some additional notation.
\begin{itemize}
\item For $p \in [0, 1]$ and $r \in \N$, let $\ber{p}$ and $\bino{r}{p}$ denote the Bernoulli and binomial distributions with success probability $p$, respectively. 
\item We use $\mathrm{D_{KL}}$ and $\mathrm{D_{TV}}$ to denote the KL divergence and the total variation (TV) distance of two distributions, respectively.
Recall that the TV distance of two distributions $\cP_1, \cP_2$ is defined by
\begin{align*}
\tv{\cP_1}{\cP_2} = \sup_{\Lambda \subseteq \supp(\cP_1) \cup \supp(\cP_2)} |\cP_1(\Lambda) - \cP_2(\Lambda)|.
\end{align*}
Since we will only use the KL divergence on discrete distributions, we only state the definition for such distributions:
\begin{align*}
\kl{\cP_1}{\cP_2} = \sum_{\lambda \in \supp(\cP_1)} \cP_1(\lambda) \cdot \ln\left(\frac{\cP_1(\lambda)}{\cP_2(\lambda)}\right).
\end{align*}
\end{itemize}

We will also use Pinsker’s inequality, which relates the TV distance and the KL divergence.
\begin{lemma}[Pinsker’s inequality]
\label{lem:pinsker}
For any distributions $\cP_1$ and $\cP_2$, it holds that $\tv{\cP_1}{\cP_2} \leq \sqrt{\frac{1}{2} \cdot \kl{\cP_1}{\cP_2}}$.
\end{lemma}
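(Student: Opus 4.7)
The plan is to prove Pinsker's inequality in two stages: first reduce the general inequality to the two-point (Bernoulli) case, and then dispatch the Bernoulli case with a short calculus argument.

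For the reduction, I would start from the definition of total variation distance. I would pick a measurable set $\Lambda$ such that $|\cP_1(\Lambda) - \cP_2(\Lambda)|$ realizes (or, up to an arbitrarily small error, approximates) the supremum $\tv{\cP_1}{\cP_2}$; concretely, $\Lambda = \{\lambda : \cP_1(\lambda) \geq \cP_2(\lambda)\}$ works in the discrete setting that is actually invoked in this paper. Setting $p = \cP_1(\Lambda)$, $q = \cP_2(\Lambda)$, $\cP_1' = \ber{p}$, and $\cP_2' = \ber{q}$, we have $\tv{\cP_1'}{\cP_2'} = |p - q| = \tv{\cP_1}{\cP_2}$ by construction. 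The key monotonicity step is
\[ \kl{\cP_1'}{\cP_2'} \leq \kl{\cP_1}{\cP_2}, \]
an instance of the data-processing inequality for the deterministic map $x \mapsto \mathbf{1}[x \in \Lambda]$. The cleanest way to verify this is the log-sum inequality, which implies that collapsing events of a partition into the two cells $\Lambda$ and $\Lambda^c$ can only decrease KL divergence. After the reduction, it suffices to prove the Bernoulli version: $2(p - q)^2 \leq p \ln(p/q) + (1-p)\ln((1-p)/(1-q))$.

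For the Bernoulli case, I would fix $p$ and define
\[ g(q) := p \ln\frac{p}{q} + (1-p)\ln\frac{1-p}{1-q} - 2(p-q)^2. \]
A direct calculation gives $g(p) = 0$ and
\[ g'(q) = (q - p)\left(\frac{1}{q(1-q)} - 4\right). \]
Since $q(1-q) \leq 1/4$ on $[0,1]$, the parenthesized factor is non-negative, so $g'(q)$ has the same sign as $q - p$. Hence $g$ attains its minimum at $q = p$, where it vanishes, giving $g(q) \geq 0$ throughout, which is the desired Bernoulli inequality. Combining the two stages yields $\tv{\cP_1}{\cP_2} = |p - q| \leq \sqrt{\tfrac{1}{2} \kl{\cP_1'}{\cP_2'}} \leq \sqrt{\tfrac{1}{2}\kl{\cP_1}{\cP_2}}$.

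The main obstacle is simply being careful with the reduction in full generality, since the supremum defining $\tv{\cP_1}{\cP_2}$ may not be attained for pathological non-discrete $\cP_1, \cP_2$; one then has to take a limiting sequence of sets or switch to Radon–Nikodym derivatives. In the discrete applications relevant to this paper the reduction is immediate, and the Bernoulli calculation is a self-contained one-variable exercise, so neither stage is difficult.
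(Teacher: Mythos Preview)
Your proof is correct and follows the standard textbook route (reduce to the Bernoulli case via the data-processing/log-sum inequality, then a one-variable calculus check). There is nothing to compare against: the paper does not prove this lemma at all---it simply states Pinsker's inequality as a known classical result and invokes it later in the proof of \Cref{lem:prob-single-alloc-ef}.
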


\subsection{Multinomial Probability Mass Estimate} \label{app:stirling-mult-equip}

The following estimate of the probability mass of the multinomial distribution
follows from Stirling’s approximation of factorials. 
It will be useful in the proof of both the existential and non-existential results.

\begin{lemma} \label{cor:lem:stirling-mult-equip}
For any $n, k \in \N$ such that $n$ is divisible by $k$ and any $\bp \in \Delta^{k-1}$, there exists $\tr_{n, k} \in [0, \frac{k^2}{n}]$ such that $$\multi{n}{\bp}\left\{\frac{n}{k} \cdot \bone\right\} = \frac{1}{e^{\tr_{n, k}}} \cdot \frac{k^{k/2}}{(2\pi n)^{\frac{k-1}{2}}} \cdot \frac{1}{e^{n \cdot \kl{\frac{\bone}{k}}{\bp}}}.$$
\end{lemma}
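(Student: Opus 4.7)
The plan is a direct computation starting from the definition of the multinomial coefficient and applying Stirling's approximation with its standard explicit error bounds. Specifically, I will use the two-sided inequality
\[
\sqrt{2\pi n}\,(n/e)^n\, e^{1/(12n+1)} < n! < \sqrt{2\pi n}\,(n/e)^n\, e^{1/(12n)},
\]
so that we may write $n! = \sqrt{2\pi n}\,(n/e)^n\, e^{s_n}$ for some $s_n \in \bigl(\tfrac{1}{12n+1}, \tfrac{1}{12n}\bigr)$ (and similarly $s_{n/k}$ for $(n/k)!$).

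First I would expand
\[
\multi{n}{\bp}\!\left\{\tfrac{n}{k}\cdot\bone\right\} \;=\; \frac{n!}{\bigl((n/k)!\bigr)^k}\,\prod_{i=1}^k p_i^{n/k}
\]
and substitute the Stirling expression into both the numerator and the denominator. The polynomial prefactors combine to give $\frac{(2\pi n)^{1/2}}{(2\pi n/k)^{k/2}} = \frac{k^{k/2}}{(2\pi n)^{(k-1)/2}}$, and the $(n/e)^n$ and $(n/(ke))^n$ factors combine to give exactly $k^n$. The remaining exponential is $e^{s_n - k\, s_{n/k}}$.

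Next I would rewrite the data-dependent factor in KL form. Using
\[
\kl{\tfrac{1}{k}\bone}{\bp} \;=\; \sum_{i=1}^k \frac{1}{k}\ln\!\frac{1/k}{p_i} \;=\; -\ln k - \frac{1}{k}\sum_i \ln p_i,
\]
one sees immediately that $k^n\prod_i p_i^{n/k} = e^{-n\,\kl{\bone/k}{\bp}}$. Combining with the prefactor,
\[
\multi{n}{\bp}\!\left\{\tfrac{n}{k}\cdot\bone\right\} \;=\; \frac{k^{k/2}}{(2\pi n)^{(k-1)/2}} \cdot \frac{1}{e^{n\,\kl{\bone/k}{\bp}}} \cdot e^{s_n - k\, s_{n/k}},
\]
so the natural choice is $\tr_{n,k} := k\,s_{n/k} - s_n$.

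The only remaining work is to verify that $\tr_{n,k} \in [0, k^2/n]$, which I expect to be the one place where a little care is needed (and the most likely source of an off-by-a-constant issue). The upper bound is clean: $k\,s_{n/k} < k \cdot \frac{1}{12(n/k)} = \frac{k^2}{12 n}$, and $-s_n < 0$, so $\tr_{n,k} < \frac{k^2}{12 n} \le \frac{k^2}{n}$. For non-negativity, I would use the lower side of Stirling to get $k\,s_{n/k} > \frac{k}{12(n/k)+1} = \frac{k^2}{12n + k}$ while $s_n < \frac{1}{12n}$; the inequality $\frac{k^2}{12n+k} \geq \frac{1}{12n}$ reduces to $12n(k^2 - 1) \geq k$, which is trivially true for $k \ge 2$ and $n \ge k$ (the only regime in which $n/k$ is a positive integer, as required for the hypothesis that $k \mid n$). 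This closes out both sides of the bound and completes the proof.
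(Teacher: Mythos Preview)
Your proof is correct and follows essentially the same route as the paper: both invoke Robbins' explicit Stirling bounds $n!=\sqrt{2\pi n}(n/e)^n e^{r_n}$ with $r_n\in(\tfrac1{12n+1},\tfrac1{12n})$, extract the prefactor $k^{k/2}/(2\pi n)^{(k-1)/2}$ and the KL term, and set $\tr_{n,k}=k\,r_{n/k}-r_n$; the paper simply routes this through a more general identity for arbitrary $\bx\in\Delta^{k-1}_n$ before specializing to $\bx=\tfrac{n}{k}\bone$, whereas you compute directly. One cosmetic point: your parenthetical that $k\ge 2$ is ``the only regime'' in which $n/k$ is a positive integer overlooks $k=1$, but there $\tr_{n,1}=r_n-r_n=0$ trivially, so nothing is lost.
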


To prove \Cref{cor:lem:stirling-mult-equip}, we will need the following precise version of Stirling's approximation.

\begin{lemma}[\citealt{Robbins55}] \label{lem:stirling}
For each positive integer $n$, it holds that 
\begin{align*}
n! = \sqrt{2\pi n} \left(\frac{n}{e}\right)^n e^{r_n},
\end{align*}
where $\frac{1}{12n + 1} < r_n < \frac{1}{12n}.$
\end{lemma}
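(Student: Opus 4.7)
The plan is to follow the classical Robbins-style telescoping argument. Define the error sequence
\[
r_n \;:=\; \ln(n!) \;-\; \bigl(n+\tfrac12\bigr)\ln n \;+\; n \;-\; \tfrac12\ln(2\pi),
\]
so that the statement is equivalent to $\frac{1}{12n+1} < r_n < \frac{1}{12n}$. The whole approach reduces the problem to controlling the first differences $r_n - r_{n+1}$, summing the resulting inequalities telescopically, and pinning down the additive constant via Wallis' product.

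First I would compute
\[
r_n - r_{n+1} \;=\; \bigl(n+\tfrac12\bigr)\ln\!\Bigl(\tfrac{n+1}{n}\Bigr) - 1.
\]
With the substitution $t = \frac{1}{2n+1}$ we have $\frac{n+1}{n} = \frac{1+t}{1-t}$ and $n+\tfrac12 = \frac{1}{2t}$, and the series $\frac{1}{2t}\ln\frac{1+t}{1-t} = \sum_{k\ge 0}\frac{t^{2k}}{2k+1}$ gives the clean identity
\[
r_n - r_{n+1} \;=\; \sum_{k=1}^\infty \frac{t^{2k}}{2k+1}, \qquad t=\tfrac{1}{2n+1}.
\]
For the upper bound I would use $\frac{1}{2k+1} \le \frac{1}{3}$ to sum the geometric tail and obtain $r_n - r_{n+1} \le \frac{t^2}{3(1-t^2)} = \frac{1}{12n} - \frac{1}{12(n+1)}$, exploiting that $1-t^2 = \frac{4n(n+1)}{(2n+1)^2}$. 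Telescoping from $n$ to infinity then yields $r_n \le \frac{1}{12n} + \lim_{N\to\infty} r_N$.

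The trickier direction is the matching lower bound $r_n - r_{n+1} \ge \frac{1}{12n+1} - \frac{1}{12(n+1)+1}$; here the clean geometric trick no longer works, and one needs a slightly more delicate comparison of the series coefficients $\frac{1}{2k+1}$ against a carefully chosen rational function in $t$. The cleanest packaging is to introduce the two auxiliary sequences
\[
\alpha_n := r_n - \tfrac{1}{12n}, \qquad \beta_n := r_n - \tfrac{1}{12n+1},
\]
prove that $\alpha_n$ is increasing and $\beta_n$ is decreasing using the two inequalities above, and then invoke $\alpha_n \to 0$ and $\beta_n \to 0$ to sandwich $r_n$. I expect this lower bound to be the main technical obstacle, and would spend most of the effort verifying the elementary but slightly tedious inequality it requires.

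Finally, to justify $r_n \to 0$ (equivalently, that the additive constant in the asymptotic expansion of $\ln(n!)$ really is $\tfrac12 \ln(2\pi)$), I would bring in an external ingredient: the monotonicity established above already shows $\ln(n!) = (n+\tfrac12)\ln n - n + C + o(1)$ for some constant $C$, and evaluating $\binom{2n}{n}$ in two ways — once via this asymptotic and once via Wallis' product $\prod_{k\ge 1}\frac{(2k)^2}{(2k-1)(2k+1)} = \frac{\pi}{2}$ — forces $C = \tfrac12 \ln(2\pi)$. Combining this identification of the constant with the telescoped sandwich from the previous step completes the proof.
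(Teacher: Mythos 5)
Your proposal is correct, but note that the paper itself does not prove this statement at all: it is imported verbatim as a known result of Robbins (1955), so there is no internal proof to compare against. What you have written is essentially Robbins' original argument --- define $r_n = \ln(n!) - (n+\tfrac12)\ln n + n - \tfrac12\ln(2\pi)$, derive the exact series $r_n - r_{n+1} = \sum_{k\ge 1} \frac{t^{2k}}{2k+1}$ with $t = \frac{1}{2n+1}$, sandwich $r_n$ between $\frac{1}{12n+1}$ and $\frac{1}{12n}$ via the monotone sequences $\alpha_n$ and $\beta_n$, and identify the constant through Wallis' product --- and all the computations you state (the telescoping identity, the geometric-series upper bound using $1-t^2 = \frac{4n(n+1)}{(2n+1)^2}$) check out. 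One remark: the step you flag as ``the main technical obstacle'' is in fact immediate. Dropping all terms of the series except $k=1$ gives $r_n - r_{n+1} > \frac{t^2}{3} = \frac{1}{3(2n+1)^2}$, and the inequality $\frac{1}{3(2n+1)^2} > \frac{1}{12n+1} - \frac{1}{12(n+1)+1} = \frac{12}{(12n+1)(12n+13)}$ reduces, after clearing denominators, to $144n^2 + 168n + 13 > 144n^2 + 144n + 36$, i.e., $24n > 23$, which holds for every $n \ge 1$; so $\beta_n = r_n - \frac{1}{12n+1}$ is strictly decreasing with no delicate coefficient comparison needed. With that observation, together with the strictness of the upper bound for $k \ge 2$ (which makes $\alpha_n$ strictly increasing and hence $r_n < \frac{1}{12n}$ strictly), your plan is a complete and correct proof of the cited lemma.
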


\Cref{lem:stirling} yields a relatively simple formula for the probability mass function of the multinomial distribution.
\begin{lemma} \label{lem:stirling-mult}
For any $n \in \N$, $\bp \in \Delta^{k-1}$, and $\bx \in \Delta^{k-1}_n$ such that $x_i > 0$ for all $i \in [k]$, the probability $\multi{n}{\bp}\{\bx\}$ is equal to
\begin{align*}
\frac{1}{e^{r_{\bx}}} \cdot \frac{\sqrt{n}}{(2\pi)^{\frac{k-1}{2}} \sqrt{\prod_{i \in [k]} x_i}} \cdot \frac{1}{e^{n \cdot \kl{\frac{\bx}{n}}{\bp}}},
\end{align*}
where
$$0 \leq r_{\bx} \leq \frac{k}{\min_{i \in [k]} x_i}.$$
\end{lemma}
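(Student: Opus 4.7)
The plan is to start from the closed form $\multi{n}{\bp}\{\bx\} = \frac{n!}{x_1!\cdots x_k!}\prod_{i\in[k]} p_i^{x_i}$ and substitute the Stirling expansion from \Cref{lem:stirling} into every factorial, then identify each resulting piece with a factor appearing in the target expression.

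Writing $t! = \sqrt{2\pi t}\,(t/e)^t e^{r_t}$ for $t = n$ and for each $t = x_i$, the $\sqrt{2\pi}$ terms immediately combine into $\sqrt{n}/\bigl[(2\pi)^{(k-1)/2}\sqrt{\prod_i x_i}\bigr]$. The $e^{-n}$ from the numerator cancels $\prod_i e^{-x_i}$ from the denominator since $\sum_i x_i = n$. What remains of the polynomial part is $n^n/\prod_i x_i^{x_i}$, which after multiplying by $\prod_i p_i^{x_i}$ becomes
\[
\prod_{i\in[k]} \left(\frac{n p_i}{x_i}\right)^{x_i} = \exp\!\left(-n\sum_{i\in[k]} \frac{x_i}{n}\ln\frac{x_i/n}{p_i}\right) = \exp\bigl(-n\cdot\kl{\bx/n}{\bp}\bigr),
\]
which is exactly the KL-divergence factor in the statement.

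The residual $e$-factors are collected into $e^{r_n - \sum_i r_{x_i}}$, so defining $r_\bx := \sum_{i\in[k]} r_{x_i} - r_n$ produces the stated formula with the $e^{-r_\bx}$ prefactor. It remains to verify the two-sided bound on $r_\bx$. The upper bound is easy: drop $-r_n$ (it is nonnegative by \Cref{lem:stirling}) and use $r_{x_i} < \frac{1}{12 x_i} \le \frac{1}{12\min_j x_j}$ to get $r_\bx \le \frac{k}{12\min_j x_j} \le \frac{k}{\min_j x_j}$.

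The main obstacle is the lower bound $r_\bx \ge 0$, since $r_n$ is subtracted. The idea is to exploit the tighter \emph{lower} bound $r_{x_i} > \frac{1}{12 x_i + 1}$ together with $x_i \le n$, which yields $\sum_i r_{x_i} > \frac{k}{12n + 1}$, while $r_n < \frac{1}{12n}$. For $k \ge 2$ this gives $\sum_i r_{x_i} > \frac{2}{12n+1} \ge \frac{1}{12n} > r_n$ (the middle inequality is $24n \ge 12n + 1$, valid for $n\ge 1$). The edge case $k=1$ forces $x_1 = n$, so $r_\bx = r_n - r_n = 0$ holds trivially. Apart from this careful bookkeeping on the error terms, the proof is a direct computation.
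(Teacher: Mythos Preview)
Your proof is correct and follows the same approach as the paper: substitute Stirling's approximation into each factorial, identify the resulting pieces, and define $r_{\bx} = \sum_{i\in[k]} r_{x_i} - r_n$. The paper simply asserts that the two-sided bound on $r_{\bx}$ ``follows easily'' from \Cref{lem:stirling}, whereas you spell out the verification (including the inequality $\frac{2}{12n+1}\ge\frac{1}{12n}$ needed for the lower bound); your added detail is a welcome elaboration of what the paper leaves implicit.
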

\begin{proof}
Letting $r_n$ be as in \Cref{lem:stirling}, we have
\begin{align*}
&\multi{n}{\bp}\{\bx\} = \binom{n}{\bx} \prod_{i \in [k]} p_i^{x_i} \\
&= \frac{\sqrt{2\pi n} \left(\frac{n}{e}\right)^n e^{r_n}}{\prod_{i \in [k]} (\sqrt{2\pi x_i} \left(\frac{x_i}{e}\right)^{x_i} e^{r_{x_i}})} \prod_{i \in [k]} p_i^{x_i} \\
&= \frac{1}{e^{r_{\bx}}} \cdot \frac{\sqrt{n}}{(2\pi)^{\frac{k-1}{2}} \sqrt{\prod_{i \in [k]} x_i}} \cdot \prod_{i \in [k]} \left(\frac{p_i}{x_i/n}\right)^{x_i} \\
&= \frac{1}{e^{r_{\bx}}} \cdot \frac{\sqrt{n}}{(2\pi)^{\frac{k-1}{2}} \sqrt{\prod_{i \in [k]} x_i}} \cdot e^{-n \cdot \kl{\frac{\bx}{n}}{\bp}},
\end{align*}
where $r_{\bx} := \sum_{i\in[k]} r_{x_i} - r_n $. From the bounds of $r_n$ in \Cref{lem:stirling}, it follows easily that $0 \leq r_{\bx} \leq \frac{k}{\min_{i \in [k]} x_i}$.
\end{proof}

\Cref{cor:lem:stirling-mult-equip} is then an immediate corollary of \Cref{lem:stirling-mult} (with $\bx = \frac{n}{k} \cdot \bone$ and $\tr_{n, k} = r_{\frac{n}{k} \cdot \bone}$).

\subsection{Proof of \Cref{lem:first-moment-main}}

The upper bound follows directly from \Cref{obs:event-prob-simplified}, \Cref{cor:lem:stirling-mult-equip}, and non-negativity of the KL-divergence.

For the lower bound, observe that due to the symmetry over bundles when $A \in \cAbal$, we have $p^{A}_1 = \cdots = p^{A}_k = \frac{1}{k}$, and so $\kl{\frac{\bone}{k}}{\bp} = 0$. 
Again, \Cref{obs:event-prob-simplified} and \Cref{cor:lem:stirling-mult-equip} then imply the bound.

\subsection{Proof of \Cref{lem:random-balanced}}
\label{app:random-balanced}

To prove \Cref{lem:random-balanced}, we will use the standard Hoeffding bound for sampling without replacement.
The following lemma is stated as Corollary 1.1 in the work of \citet{Serfling74}.\footnote{In fact, Serfling's version is slightly stronger because it contains the term $1-f_n^*$.}

\begin{lemma}[\citealt{Serfling74}] \label{lem:hoeffding}
Let $X_1, \dots, X_t$ be random variables drawn from some list $x_1, \dots, x_T \in \{0, 1\}$ without replacement.
Let $S = X_1 + \cdots + X_t$ and $\mu = \E[S]$. 
For any $\theta > 0$, we have
\begin{align*}
\Pr[S - \mu \geq \theta] &\leq \exp\left(-2\theta^2 / t\right); \\
\Pr[S - \mu \leq -\theta] &\leq \exp\left(-2\theta^2 / t\right).
\end{align*}
\end{lemma}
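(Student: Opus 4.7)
The plan is to follow Hoeffding's original strategy: apply a Chernoff-style exponential moment argument, reduce the without-replacement setting to the easier with-replacement case via a domination inequality on the moment generating function (MGF), and then invoke the standard Hoeffding MGF bound for bounded random variables.

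First, fix $s > 0$ and apply Markov's inequality to the exponential:
\[
\Pr[S - \mu \geq \theta] \;\leq\; e^{-s\theta}\,\E\!\left[e^{s(S - \mu)}\right].
\]
Next, let $Z_1, \dots, Z_t$ be i.i.d.\ samples drawn \emph{with} replacement uniformly from $\{x_1, \dots, x_T\}$, and let $S' = Z_1 + \cdots + Z_t$; note $\E[S'] = \mu$ as well. The central step is Hoeffding's domination lemma: for every convex function $\phi \colon \R \to \R$,
\[
\E[\phi(S)] \;\leq\; \E[\phi(S')].
\]
Applied to $\phi(y) = e^{s(y - \mu)}$, which is convex, this gives $\E[e^{s(S-\mu)}] \leq \E[e^{s(S'-\mu)}]$. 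I would prove the domination lemma by the classical exchange argument: write the without-replacement expectation as an average over ordered tuples with distinct indices, and show that any pair of ``merged'' configurations (replacing distinct entries by repeated ones) gives a larger value of $\E[\phi]$ by Jensen/convexity applied coordinatewise. Iterating reduces the uniform distribution on ordered $t$-tuples of distinct indices to the product distribution on $t$-tuples with replacement, each step only increasing $\E[\phi]$.

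Once we are in the with-replacement world, the sum $S' - \mu$ is a sum of $t$ i.i.d.\ centered bounded random variables taking values in $[-1, 1]$ (after centering, in an interval of length~$1$). By Hoeffding's lemma for a bounded, mean-zero random variable $Y \in [a,b]$, we have $\E[e^{sY}] \leq \exp(s^2(b-a)^2/8)$; with $b-a = 1$ (the original $x_i$ lie in $\{0,1\}$, so each centered summand has range~$1$) and independence across the $t$ terms,
\[
\E\!\left[e^{s(S' - \mu)}\right] \;\leq\; \exp\!\left(\tfrac{s^2 t}{8}\right).
\]
Combining everything yields $\Pr[S - \mu \geq \theta] \leq \exp(-s\theta + s^2 t/8)$; minimizing the right-hand side over $s > 0$ by setting $s = 4\theta/t$ produces $\exp(-2\theta^2/t)$, as required. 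The lower-tail inequality then follows by applying the identical argument to the complemented variables $\tilde{x}_i := 1 - x_i$, whose without-replacement sum equals $t - S$ with mean $t - \mu$.

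The hard step is clearly the MGF domination $\E[\phi(S)] \leq \E[\phi(S')]$: the Chernoff reduction and the with-replacement Hoeffding lemma are textbook, but pushing from without-replacement to with-replacement is the conceptual core and is where Hoeffding's (and later Serfling's) ingenuity resides. An alternative route I might take, if the exchange argument becomes cumbersome to write out, is the martingale approach: write $S$ as the terminal value of a Doob martingale with respect to the filtration generated by $X_1, \dots, X_t$, verify bounded increments in $[-1,1]$, and apply the Azuma--Hoeffding inequality directly; this avoids the domination lemma entirely and yields the same $\exp(-2\theta^2/t)$ bound, at the cost of a slightly more delicate increment analysis for sampling without replacement.
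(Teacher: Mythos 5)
The paper never proves this lemma: it is quoted (in slightly weakened form) from Serfling (1974), whose Corollary~1.1 is even a bit stronger, carrying an extra finite-population factor $1-f^*_n$ in the exponent. So you were reconstructing a classical result from scratch, and your reconstruction is essentially Hoeffding's original 1963 proof: Chernoff bounding, the convex-order domination $\E[\phi(S)]\le\E[\phi(S')]$ between the without-replacement and with-replacement sums (Hoeffding's Theorem~4), Hoeffding's lemma with range $b-a=1$ for each centered summand, optimization at $s=4\theta/t$ giving $\exp(-2\theta^2/t)$, and the lower tail via the complemented values $1-x_i$. This is correct, and you rightly single out the domination lemma as the only non-routine ingredient; your merge/Jensen exchange sketch is viable (for $t=2$ it reduces to $\phi(x_i+x_j)\le\frac{1}{2}\phi(2x_i)+\frac{1}{2}\phi(2x_j)$ summed over ordered pairs, and the general case follows by a similar but more laborious induction), though spelling it out in full is where the real work lies, which you acknowledge. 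Compared with Serfling's own martingale-based argument, your route gives the weaker bound without the $1-f^*_n$ factor, which is all the paper needs. One quantitative caveat on your fallback route: applying plain Azuma--Hoeffding with increments merely bounded by $|d_i|\le 1$ yields only $\exp(-\theta^2/(2t))$; to recover the constant $2$ in the exponent you need the refined version in which each Doob-martingale increment, conditionally on the past, lies in an interval of length at most $1$. This does hold here---revealing $X_i\in\{0,1\}$ changes $\E[S\mid X_1,\dots,X_i]$ by one of two values differing by $1-\frac{t-i}{T-i}\le 1$---but it is exactly the ``more delicate increment analysis'' you allude to, and it must be carried out rather than waved at if you take that path.
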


\begin{proof}[Proof of \Cref{lem:random-balanced}]
To compute the probability that $A^{(1)}, \dots, A^{(n^{2k})}$ are $\delta$-IB, let us fix $A^{(v)}$ for some $v \in [n^{2k}]$. 
Then, consider $A^{(w)}$ for $w \in [n^{2k}] \setminus \{v\}$ that is picked at random, and any $i, j \in [k]$. 
Once $A^{(v)}_i$ is fixed, we let $x_g := \bone[g \in A^{(v)}_i]$ for each $g \in M$, and view these values as in \Cref{lem:hoeffding}. 
Then, $A^{(w)}_j = \{g_1, \dots, g_{m/k}\}$ is chosen uniformly at random among all subsets of $M$ of size $m/k$; let $X_1 = x_{g_1}, \dots, X_{m/k} = x_{g_{m/k}}$. 
Observe that $S := X_1 + \cdots + X_{m/k}$ is equal to $|A^{(v)}_i \cap A^{(w)}_j|$ and $\E[S] = m/k^2$.
As such, we may apply \Cref{lem:hoeffding} with $\theta = \frac{\delta m}{k^2}$ to conclude that
\begin{align*}
&\Pr\left[|A^{(v)}_i \cap A^{(w)}_j| \notin \left[\frac{m(1 - \delta)}{k^2}, \frac{m(1 + \delta)}{k^2}\right] \right] \\
&\leq 2\exp\left(-2\delta^2 m / k^3\right) \\
&\leq \frac{2}{n^{8k}},
\end{align*}
where the last inequality is due to $m \geq \frac{4k^4}{\delta^2} \cdot \ln n$.

Taking the union bound over all distinct $v, w \in [n^{2k}]$ and all $i, j \in [k]$, we have $\Pr[A^{(1)}, \dots, A^{(n^{2k})} \text{ are } \delta\text{-IB}] \geq 1 - \frac{2k^2n^{4k}}{n^{8k}} = 1 - o(1)$ as desired.
\end{proof}

\subsection{Proof of \Cref{lem:second-moment-main}}

Let $\zeta = \frac{\beta}{4k}$ and $\gamma = \frac{\zeta}{k^2} = \frac{\beta}{4k^3}$. 
Applying \Cref{lem:cond-as-sum} yields
$\Pr[\cE_{A'} \mid \cE_{A}]$ 
$\leq \Pr\left[X^1 + \cdots + X^n = \frac{n}{k}\cdot\bone\right]$, where $X^1, \dots, X^n$ are as defined there.
By \Cref{lem:prob-close}, $X^1, \dots, X^n$ satisfy the condition of \Cref{lem:maxprob-main}; in particular, since we assume that $k$ is fixed and $n$ grows, the condition $n \geq \frac{k}{1 - 2\zeta}$ is met.
Hence,
\begin{align*}
\Pr[\cE_{A'} \mid \cE_{A}] 
&\leq \Pr\left[X^1 + \cdots + X^n = \frac{n}{k}\cdot\bone\right]\\
&\leq\exp\left(-2\zeta^2 n\right) + \frac{k^{k/2}}{(2\pi(1 - 2\zeta)n)^{\frac{k-1}{2}}} \\
&\leq o\left(\frac{1}{n^{\frac{k-1}{2}}}\right) + \frac{1}{(1 - 2\zeta)^k} \cdot \frac{k^{k/2}}{(2\pi n)^{\frac{k-1}{2}}} \\
&\leq o\left(\frac{1}{n^{\frac{k-1}{2}}}\right) + \frac{1}{1 - \beta/2} \cdot \frac{k^{k/2}}{(2\pi n)^{\frac{k-1}{2}}} \\
&\leq \frac{1}{1 - \beta/2 - o(1)} \cdot \frac{k^{k/2}}{(2\pi n)^{\frac{k-1}{2}}},
\end{align*}
where we use Bernoulli's inequality for the penultimate inequality. 

\subsection{Proof of \Cref{lem:prob-close}} 

To prove \Cref{lem:prob-close}, we need the following bound on the total variation distance between two centered Gaussians, stated as Theorem~1.1 in the work of \citet{DevroyeMeRe23}. 

\begin{lemma}[\citealt{DevroyeMeRe23}] \label{lem:tv-gaussian}
For positive definite matrices $\Sigma, \tSigma \in \R^{d \times d}$, let $\lambda_1, \dots, \lambda_d$ be the eigenvalues of $\tSigma^{-1}\Sigma - I_d$. Then, 
\begin{align*}
\tv{\cN(\bzero, \Sigma)}{\cN(\bzero, \tSigma)} \leq \frac{3}{2} \sqrt{\sum_{i=1}^d \lambda_i^2}.
\end{align*}
\end{lemma}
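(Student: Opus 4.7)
The plan is to prove this Gaussian TV bound by reducing to a comparison of a diagonal Gaussian against the standard Gaussian, and then applying Pinsker's inequality (Lemma~\ref{lem:pinsker}) together with the explicit Kullback-Leibler formula for centered Gaussians.

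The first step is a double reduction. Applying the linear change of variable $\bx \mapsto \tilde{\Sigma}^{-1/2}\bx$ preserves the TV distance and converts the comparison to $\tv{\cN(\bzero, M)}{\cN(\bzero, I_d)}$ with $M := \tilde{\Sigma}^{-1/2}\Sigma\tilde{\Sigma}^{-1/2}$. Since $M$ is similar to $\tilde{\Sigma}^{-1}\Sigma$, its eigenvalues are $1+\lambda_1,\dots,1+\lambda_d$, all strictly positive. A further orthogonal change of variable diagonalizes $M$; because the standard Gaussian is rotationally invariant, one is reduced to the product comparison $\tv{\cN(\bzero, D)}{\cN(\bzero, I_d)}$ with $D := \diag(1+\lambda_1,\dots,1+\lambda_d)$.

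The second step is a case split driven by the magnitude of $\sum_i \lambda_i^2$. If $\sum_i \lambda_i^2 \ge 4/9$, then $\tfrac{3}{2}\sqrt{\sum_i \lambda_i^2} \ge 1$ and the bound is trivial since TV is at most $1$. Otherwise $|\lambda_i| < 2/3$ for every $i$, and the classical formula for centered-Gaussian KL divergence gives
\[
\kl{\cN(\bzero, D)}{\cN(\bzero, I_d)} \;=\; \tfrac{1}{2}\sum_{i=1}^{d} \bigl[\lambda_i - \ln(1+\lambda_i)\bigr].
\]
A direct integral estimate on $\lambda - \ln(1+\lambda) = \int_0^\lambda \tfrac{t}{1+t}\,dt$ yields $\lambda - \ln(1+\lambda) \le \tfrac{3}{2}\lambda^2$ whenever $|\lambda| \le 2/3$. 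Summing and invoking Pinsker's inequality gives $\tv{\cN(\bzero, D)}{\cN(\bzero, I_d)} \le \sqrt{\tfrac{3}{8}\sum_i \lambda_i^2}$, which is comfortably below the claimed $\tfrac{3}{2}\sqrt{\sum_i \lambda_i^2}$, finishing the case split.

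The main obstacle is that the pointwise inequality $\lambda - \ln(1+\lambda) \le c\lambda^2$ fails as $\lambda \to -1$ (where the log blows up) and for large positive $\lambda$, so no single Taylor-style estimate handles every eigenvalue regime at once. The case split sidesteps this exactly because the regimes where the pointwise bound breaks are the regimes in which the claimed conclusion is vacuous; the slightly loose constant $\tfrac{3}{2}$ in the lemma leaves ample slack for this crude dichotomy. Tightening the constant to the sharp form used by \citet{DevroyeMeRe23} would require more delicate interpolation arguments, but for the downstream application in the proof of \Cref{lem:prob-close} only the general $\sqrt{\sum_i \lambda_i^2}$ scaling matters.
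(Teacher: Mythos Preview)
Your argument is correct. The paper does not actually prove \Cref{lem:tv-gaussian}; it simply imports the statement from \citet{DevroyeMeRe23} and uses it as a black box. So there is no ``paper's proof'' to compare against---you have supplied a self-contained elementary proof where the paper gives none.

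Your route (affine reduction to diagonal-vs-identity, then Pinsker plus the closed-form KL for centered Gaussians, with a case split to tame the $\lambda \to -1$ blow-up) is standard and clean. The only place to be slightly more careful in a write-up is the integral bound: for $\lambda \in (-2/3,0)$ you are using $\frac{1}{1+t} \le 3$ on $t \in (\lambda,0)$, which gives $\lambda - \ln(1+\lambda) = \int_\lambda^0 \frac{|t|}{1+t}\,dt \le \tfrac{3}{2}\lambda^2$; for $\lambda \in [0,2/3)$ the cruder bound $\frac{t}{1+t} \le t$ already gives $\tfrac{1}{2}\lambda^2$. Both fall under the stated $\tfrac{3}{2}\lambda^2$, and the resulting constant $\sqrt{3/8} \approx 0.61$ sits well below the lemma's $\tfrac{3}{2}$. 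The original Devroye--Mehrabian--Reddad argument works harder to get matching upper and lower bounds with sharper constants, but as you note, for the application in \Cref{lem:prob-close} only the $\sqrt{\sum_i \lambda_i^2}$ scaling is needed, so your shortcut loses nothing of substance.
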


We are now ready to show that the probabilities $p_{(i,i)}$ are all close to $1/k^2$.

\begin{proof}[Proof of \Cref{lem:prob-close}]
Let $\mu$ and $\sigma^2$ denote the mean and variance of $\cD$, respectively. 
For each $g \in M$, let $W^g \in \R^{k^2}$ be a random variable such that
\begin{align*}
W^g_{(j, j')} =
\begin{cases}
\left(u_a(g) - \mu\right) \cdot \frac{k}{\sigma\sqrt{m}} &\text{ if } g \in A_{j} \cap A'_{j'}; \\
0 &\text{ otherwise}
\end{cases}
\end{align*}
for all $j, j' \in [k]$; note that this definition is the same regardless of~$a$.
Observe also that the random variables $W^g$ are independent for different $g$ (since the utilities are independent) and have mean zero. 
Next, let $H = \sum_{g \in M} W^g$. 
It can be seen that\footnote{Recall that, for $\bu \in \R^d$, $\diag(\bu) \in \R^{d \times d}$ denotes the diagonal matrix whose diagonal entries are $\bu$ and remaining entries are zero.} $\cov{H} = \diag((\sigma^2_{(j,j')})_{j,j'\in[k]})$, where $\sigma_{(j, j')} = \sqrt{\frac{k^2}{m} \cdot |A_{j} \cap A'_{j'}|}$; we write $\Sigma$ as a shorthand for $\diag((\sigma^2_{(j,j')})_{j,j'\in[k]})$. 
Finally, let $\Lambda_{i,i'}$ denote all vectors $\bv \in \R^{k^2}$ that satisfy 
\begin{align*}
\sum_{\ell \in [k]} v_{(i,\ell)} \geq \sum_{\ell \in [k]} v_{(j,\ell)} \text{ and } \sum_{\ell \in [k]} v_{(\ell,i')} \geq \sum_{\ell \in [k]} v_{(\ell,j')} 
\end{align*}
for all $j,j'\in [k]$.
Since $\Lambda_{i, i'}$ is convex, we can apply \Cref{lem:mult-berry-esseen} to get
\begin{align*}
&|\Pr[H \in \Lambda_{i, i'}] - \cN(0, \Sigma)\{\Lambda_{i, i'}\}| \\
&\leq 60 k^{1/2} \sum_{g \in M} \E[\|\Sigma^{-1/2} W^g\|_2^3].
\end{align*}

Now, notice that for any $j,j'\in[k]$,
\begin{align*}
&(\Sigma^{-1/2} W^g)_{(j, j')}\\
&=\begin{cases}
\left(u_a(g) - \mu\right) \cdot \frac{1}{\sigma\sqrt{|A_{j} \cap A'_{j'}|}} &\text{ if } g \in A_{j} \cap A'_{j'}; \\
0 &\text{ otherwise}.
\end{cases}
\end{align*}
Thus, since $u_a(g) \in [0, 1]$, we have $\|\Sigma^{-1/2} W^g\|_2 \leq \frac{1}{\sigma\sqrt{|A_{j} \cap A'_{j'}|}}$ for $g \in A_j \cap A'_{j'}$. 
Plugging this into the inequality above yields
\begin{align}
&|\Pr[H \in \Lambda_{i, i'}] - \cN(0, \Sigma)\{\Lambda_{i, i'}\}| \nonumber \\
&\leq 60 k^{1/2} \sum_{j,j' \in [k]} \sum_{g \in A_j \cap A'_{j'}} \frac{1}{\sigma^3 |A_j \cap A'_{j'}|^{1.5}} \nonumber \\
&= 60 k^{1/2} \sum_{j,j' \in [k]} \frac{1}{\sigma^3 \sqrt{|A_j \cap A'_{j'}|}} \nonumber \\
&\leq 60 k^{1/2} \sum_{j,j' \in [k]} \frac{1}{\sigma^3 \sqrt{\frac{m(1 - \delta)}{k^2}}} \nonumber \\
&= \sqrt{\frac{3600 k^7}{\sigma^6} \cdot \frac{1}{m(1 - \delta)}} \nonumber \\
&\leq 0.5\gamma, \label{eq:mult-berry-esseen-app}
\end{align}
where the second inequality is due to our assumption that $A, A'$ are $\delta$-IB and the last inequality follows from our assumption on $m$ and $k$.

Next, we bound the TV distance between $\cN(0, \Sigma)$ and $\cN(0, I_{k^2})$ via \Cref{lem:tv-gaussian}. 
To this end, observe that $\Sigma - I_{k^2} = \diag\left(\left(\frac{k^2}{m} \cdot |A_j \cap A'_{j'}| - 1\right)_{j,j'\in[k]}\right)$. 
Thus, the eigenvalues of $\Sigma - I_d$ are $\frac{k^2}{m} \cdot |A_j \cap A'_{j'}| - 1$ for all $j, j' \in [k]$. Since $A, A'$ are $\delta$-IB, we have
\begin{align*}
\left(\frac{k^2}{m} \cdot |A_j \cap A'_{j'}| - 1\right)^2 
&\leq \delta^2.
\end{align*}
Plugging this into \Cref{lem:tv-gaussian}, we conclude that
\begin{align} \label{eq:tv-gaussian-spherical}
\tv{\cN(\bzero, \Sigma)}{\cN(\bzero, I_{k^2})} \leq \frac{3}{2} \cdot k\delta \leq 0.5\gamma.
\end{align}

Therefore, by the triangle inequality, we have
\begin{align*}
&\left|\Pr[H \in \Lambda_{i, i'}] - \cN(\bzero, I_{k^2})\{\Lambda_{i, i'}\}\right| \\
&\leq \left|\Pr[H \in \Lambda_{i, i'}] - \cN(\bzero, \Sigma)\{\Lambda_{i, i'}\}\right| \\
&\qquad+ \left|\cN(\bzero, \Sigma)\{\Lambda_{i, i'}\} - \cN(\bzero, I_{k^2})\{\Lambda_{i, i'}\}\right|
&\overset{\eqref{eq:mult-berry-esseen-app}, \eqref{eq:tv-gaussian-spherical}}{\leq} \gamma.
\end{align*}
Observe that by definition of $\Lambda_{i, i'}$ and non-atomicity of $\cD$, it holds that $p_{(i, i')} = \Pr[H \in \Lambda_{i, i'}]$. 
Moreover, due to symmetry, we have
\begin{align*}
\cN(\bzero, I_{k^2})\{\Lambda_{i, i'}\} = \cN(\bzero, I_{k^2})\{\Lambda_{j, j'}\}
\end{align*}
for all $j, j' \in [k]$. 
Note that $\bigcup_{j,j' \in [k]} \Lambda_{j, j'} = \R^{k^2}$ and the intersection of any two distinct sets $\Lambda_{j, j'}$ has measure zero (with respect to $\cN(\bzero, I_{k^2})$). 
As such, we have that
$\cN(\bzero, I_{k^2})\{\Lambda_{i, i'}\} = \frac{1}{k^2}.$
Combining this with the inequality above concludes our proof.
\end{proof}

\subsection{Proof of \Cref{lem:maxprob-main}}

We introduce an additional notation.
For any random variable~$X$, let $\maxprob(X) = \max_{x \in \supp(X)} \Pr[X = x]$. 
Sometimes we also write the distribution of $X$ instead of $X$ itself. 
We use the following properties of $\maxprob$, whose proofs we defer to \Cref{subsec:properties-of-maxprob}.
\begin{lemma} \label{lem:max-prob-conv}
For independent $X, Y \in \Z^d$, it holds that $\maxprob(X + Y) \leq \maxprob(X)$.
\end{lemma}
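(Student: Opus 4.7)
The plan is to prove this by a direct conditioning argument on $Y$, using independence to reduce the convolution to a simple averaging bound.

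Concretely, fix any $z \in \Z^d$. By the law of total probability and independence of $X$ and $Y$,
\begin{align*}
\Pr[X + Y = z]
&= \sum_{y \in \supp(Y)} \Pr[Y = y]\cdot \Pr[X = z - y \mid Y = y] \\
&= \sum_{y \in \supp(Y)} \Pr[Y = y]\cdot \Pr[X = z - y].
\end{align*}
For each fixed $y$, the value $\Pr[X = z - y]$ is bounded above by $\maxprob(X)$ by definition. Pulling that bound out of the sum and using $\sum_y \Pr[Y = y] = 1$, we conclude $\Pr[X + Y = z] \leq \maxprob(X)$.

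Since this bound holds for every $z \in \Z^d$, taking the supremum over $z$ on the left side yields $\maxprob(X + Y) \leq \maxprob(X)$, which is the desired inequality.

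There is essentially no obstacle here: the argument is a one-line convolution bound. The only care needed is to invoke independence cleanly when rewriting the conditional $\Pr[X = z - y \mid Y = y]$ as $\Pr[X = z - y]$, and to note that $\maxprob(X)$ does not depend on $y$, so it factors out of the sum over $y$.
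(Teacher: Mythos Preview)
Your proof is correct and follows essentially the same approach as the paper: decompose $\Pr[X+Y=z]$ as a convolution over $y$, bound each term $\Pr[X=z-y]$ by $\maxprob(X)$, and use that the weights $\Pr[Y=y]$ sum to one. The only cosmetic difference is that you make the use of independence explicit via conditioning, whereas the paper writes the convolution identity directly.
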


\begin{lemma} \label{lem:multi-max-prob}
For any positive integer $r \geq k$, the following inequality holds:
$\maxprob\left(\multi{r}{\frac{1}{k} \cdot \bone}\right) \leq \frac{k^{k/2}}{(2\pi r)^{\frac{k-1}{2}}}$.
\end{lemma}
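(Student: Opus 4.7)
The plan is to compute $\maxprob(\multi{r}{\bone/k})$ explicitly at the mode using Lemma~\ref{lem:stirling-mult}, and then reduce the desired bound to a single application of Jensen's inequality.

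Writing $r = kq + s$ with $q = \lfloor r/k \rfloor \ge 1$ (using $r \ge k$) and $0 \le s \le k-1$, a standard swap argument identifies the mode. Observing that replacing $\bx$ with $\bx + \be_i - \be_j$ multiplies the multinomial coefficient by $x_j/(x_i + 1)$, the coefficient is maximized exactly when no two coordinates differ by more than one. Hence the mode is any $\bx^* \in \Delta^{k-1}_r$ whose coordinates consist of $s$ copies of $q+1$ and $k-s$ copies of $q$. All coordinates of $\bx^*$ are positive, so I may apply Lemma~\ref{lem:stirling-mult} with $\bp = \bone/k$; combined with $e^{r_{\bx^*}} \ge 1$, the desired bound reduces to proving
\[
\frac{\sqrt{r}}{(2\pi)^{(k-1)/2} \sqrt{(q+1)^s q^{k-s}}} \cdot e^{-r \kl{\bx^*/r}{\bone/k}} \le \frac{k^{k/2}}{(2\pi r)^{(k-1)/2}}.
\]
Set $a = q$, $b = q+1$, and $\rho = r/k = (sb + (k-s)a)/k$. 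Expanding $r \kl{\bx^*/r}{\bone/k} = sb \ln(b/\rho) + (k-s) a \ln(a/\rho)$ and taking logarithms, elementary algebra (using $sb + (k-s)a = r$) rewrites the above inequality as
\[
k \bigl( \rho + \tfrac{1}{2} \bigr) \ln \rho \le s \bigl( b + \tfrac{1}{2} \bigr) \ln b + (k-s) \bigl( a + \tfrac{1}{2} \bigr) \ln a.
\]

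The key observation is that this is precisely Jensen's inequality for the function $h(x) = (x + \tfrac{1}{2}) \ln x$ applied to the two-point distribution on $\{b, a\}$ with weights $\{s/k, (k-s)/k\}$, whose mean is $\rho$. A direct computation gives $h''(x) = (2x - 1)/(2x^2)$, which is non-negative for $x \ge 1/2$, so $h$ is convex on $[1/2, \infty)$. Because $r \ge k$ forces $q \ge 1$, the three points $a, b, \rho$ all lie in $[1, \infty) \subseteq [1/2, \infty)$, and Jensen's inequality completes the proof (and handles the boundary case $s = 0$ with equality, recovering Corollary~\ref{cor:lem:stirling-mult-equip}). The main step of the plan was locating the function $h$ that collapses the seemingly technical Stirling estimate into one convexity statement; once $h$ is identified, every remaining step is routine manipulation.
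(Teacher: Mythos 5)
Your argument is correct, but it takes a genuinely different route from the paper's. The paper bounds the probability mass at an \emph{arbitrary} point $\bx$ of the discrete simplex: it rewrites the multinomial coefficient through the Gamma function, uses logarithmic convexity of $\Gamma$ to get $\prod_{i}\Gamma(x_i+1)\ge\Gamma(r/k+1)^k$ (valid even when some $x_i=0$ and at the non-integer argument $r/k$), and finishes with the expansion $\Gamma(1+x)=\sqrt{2\pi x}\,(x/e)^x e^{\mu(x)}$ with $\mu$ decreasing, i.e., it needs no mode analysis at all, only the quoted properties in \Cref{lem:gamma-property}. You instead locate the mode by the swap argument (all coordinates within one of each other, so $s$ copies of $q+1$ and $k-s$ copies of $q$), apply \Cref{lem:stirling-mult} there (legitimate and non-circular, and the positivity hypothesis holds since $r\ge k$ gives $q\ge 1$), and reduce the resulting estimate to Jensen's inequality for $h(x)=(x+\tfrac12)\ln x$ on $[1/2,\infty)$; I verified the algebra, including the identities $sb+(k-s)a=r$ and $r+\tfrac{k}{2}=k(\rho+\tfrac12)$, and the convexity computation $h''(x)=(2x-1)/(2x^2)$, so the chain is sound. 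What the paper's route buys is brevity and uniformity: working with $\Gamma$ at real arguments sidesteps both the mode identification and any case distinction involving zero coordinates or $r\bmod k$. What your route buys is elementariness and reuse: beyond Robbins' bound already packaged in \Cref{lem:stirling-mult}, the only analytic input is one explicit second derivative, rather than log-convexity of $\Gamma$ and monotonicity of the Stirling correction taken as black boxes. One minor caveat: your parenthetical that the $s=0$ case ``recovers'' \Cref{cor:lem:stirling-mult-equip} is loose, since you discard the factor $e^{-r_{\bx^*}}$ and so recover only the upper-bound direction; nothing in the proof rests on this remark.
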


We also need Hoeffding's inequality, which is stated in terms of the binomial distribution for convenience.

\begin{lemma}[Hoeffding's inequality] \label{lem:chernoff}
For $\psi \in (0, 1)$, $t > 0$, and $n \in \N$,
$\Pr_{S \sim \bino{n}{\psi}}[S < n \psi - t] \leq \exp\left(-2t^2/n\right)$.
\end{lemma}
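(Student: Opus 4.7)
The plan is to give the classical Chernoff-style derivation. Write $S = X_1 + \cdots + X_n$ where the $X_i$ are i.i.d.\ $\ber{\psi}$ random variables, so $\E[X_i] = \psi$ and $X_i - \psi \in [-\psi, 1-\psi]$, an interval of length $1$. For any $\lambda > 0$, apply Markov's inequality to the non-negative random variable $e^{-\lambda S}$ to obtain
\begin{align*}
\Pr[S < n\psi - t]
&\le \Pr\bigl[e^{-\lambda S} \ge e^{-\lambda(n\psi - t)}\bigr] \\
&\le e^{\lambda(n\psi - t)} \cdot \E\bigl[e^{-\lambda S}\bigr] \\
&= e^{\lambda(n\psi - t)} \cdot \prod_{i \in [n]} \E\bigl[e^{-\lambda X_i}\bigr].
\end{align*}

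The main step is to bound each factor $\E[e^{-\lambda X_i}]$ sharply. I would prove Hoeffding's lemma for a bounded random variable $Y \in [a,b]$ with $\E[Y] = 0$: by convexity of $x \mapsto e^{-\lambda x}$, for $Y \in [a,b]$ we have $e^{-\lambda Y} \le \frac{b-Y}{b-a} e^{-\lambda a} + \frac{Y-a}{b-a} e^{-\lambda b}$; taking expectations and writing the resulting expression as $e^{\varphi(\lambda)}$, one checks via differentiation that $\varphi(0) = \varphi'(0) = 0$ and $\varphi''(\lambda) \le (b-a)^2/4$ uniformly, hence $\varphi(\lambda) \le \lambda^2(b-a)^2/8$ by a Taylor expansion with remainder. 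Applied to $Y_i := X_i - \psi \in [-\psi, 1-\psi]$ (range $1$) with sign $-\lambda$, this yields $\E[e^{-\lambda(X_i - \psi)}] \le e^{\lambda^2/8}$, i.e., $\E[e^{-\lambda X_i}] \le e^{-\lambda \psi + \lambda^2/8}$.

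Plugging back in, the bound becomes
\begin{align*}
\Pr[S < n\psi - t] \le \exp\bigl(\lambda(n\psi - t) - n\lambda\psi + n\lambda^2/8\bigr) = \exp\bigl(-\lambda t + n\lambda^2/8\bigr).
\end{align*}
Finally, optimize over $\lambda > 0$: the right-hand side is minimized at $\lambda = 4t/n$, giving exponent $-2t^2/n$, which is exactly the claimed bound. The only nontrivial step is Hoeffding's lemma itself (the convexity-plus-Taylor computation); everything else is a mechanical application of Markov's inequality, independence, and a one-variable optimization.
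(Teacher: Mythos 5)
Your derivation is the standard exponential-moment (Chernoff) proof of Hoeffding's inequality, and all steps check out: the Markov bound, Hoeffding's lemma applied to $X_i - \psi$ with range $1$, and the optimization $\lambda = 4t/n$ giving exponent $-2t^2/n$. The paper simply states this lemma as the classical Hoeffding inequality without proof, so your argument is exactly the expected textbook justification and nothing further is needed.
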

 
We wish to upper bound $\Pr\left[\tX^1 + \cdots + \tX^n = \frac{n}{k}\cdot\bone\right]$. By definition, this is at most $\maxprob\left(\tX^1 + \cdots + \tX^n\right)$. 
To bound this quantity, we reinterpret $\tX^a$ so that it is a mixture distribution between the uniform distribution on $\{\be_1, \dots, \be_k\}$ and a ``leftover'' distribution. 
In other words, for each $\tX^a$, we can toss a (biased) coin and, based on the outcome, sample $\tX^a$ either from the uniform distribution or from the leftover distribution.
Using \Cref{lem:chernoff}, we show that $\tX^a$ is drawn from the uniform distribution for most $a$. 
For these $\tX^a$, their sum exactly follows the multinomial distribution, so we can apply \Cref{lem:multi-max-prob}. 
Finally, \Cref{lem:max-prob-conv} allows us to handle the leftover.

\begin{proof}[Proof of \Cref{lem:maxprob-main}]
It suffices to bound $\maxprob(\tX^1 + \cdots + \tX^n)$.

Instead of sampling each $\tX^a$ directly,
we can alternatively sample $\tX^a$ as follows.
\begin{itemize}
\item Sample $Z^a \sim \ber{1 - \zeta}$.
\item Sample $U^a$ uniformly at random from $\{\be_1, \dots, \be_k\}$.
\item Sample $V^a$ from $\{\be_1, \dots, \be_k\}$ where $\Pr[V^a = \be_i] = \frac{1}{\zeta}\left(\Pr[\tX^a = \be_i] - (1 - \zeta)\frac{1}{k}\right)$.
\item If $Z^a = 1$, let $\tX^a = U^a$; if $Z^a = 0$, let $\tX^a = V^a$.
\end{itemize}
Observe that $V^a$ is a valid probability distribution. 
Also, one can check that the process defined above results in the correct probability distribution for $\tX^a$.

Therefore, $\maxprob(\tX^1 + \cdots + \tX^n)$ can be written as
\begin{align*}
&\max_{\bx \in \Delta_n^{k-1}} \Pr[\tX^1 + \cdots + \tX^n = \bx] \\
&= \max_{\bx \in \Delta_n^{k-1}} \sum_{S \subseteq N} \Pr\left[\sum_{a \in S} U^a + \sum_{a' \notin S} V^{a'} = \bx\right] \\ &\qquad \qquad \qquad \cdot \Pr\left[S = \{a \in N \mid Z^a = 1\} \right] \\
&\leq \sum_{S \subseteq N} \maxprob\left(\sum_{a \in S} U^a + \sum_{a' \notin S} V^{a'}\right) \\ &\qquad \qquad \qquad 
\cdot \Pr\left[S = \{a \in N \mid Z^a = 1\} \right] \\
&\overset{\text{(\Cref{lem:max-prob-conv})}}{\leq}  \sum_{S \subseteq N} \maxprob\left(\sum_{a \in S} U^a\right) \\ &\qquad \qquad \qquad  \cdot \Pr\left[S = \{a \in N \mid Z^a = 1\} \right].
\end{align*}
Observe that $\sum_{a \in S} U^a$ is distributed as $\multi{|S|}{\frac{1}{k} \cdot \bone}$ and that $|S| = |\{a \in N \mid Z^a = 1\}|$ is distributed as $\bino{n}{1 - \zeta}$. 
Thus, we can write the above inequality as
\begin{align*}
&\maxprob(\tX^1 + \cdots + \tX^n) \\
&\leq \E_{r \sim \bino{n}{1 - \zeta}}\left[\maxprob\left(\multi{r}{\frac{1}{k}\cdot\bone}\right)\right] \\
&\leq \Pr_{r \sim \bino{n}{1 - \zeta}}[r \leq (1 - 2\zeta)n] \\& + \E_{r \sim \bino{n}{1 - \zeta}}\left[\maxprob\left(\multi{r}{\frac{1}{k}\cdot\bone}\right) \middle\vert r > (1 - 2\zeta)n\right],
\end{align*}
where the last inequality follows from the law of total expectation.
Applying Lemmas~\ref{lem:chernoff} and \ref{lem:multi-max-prob} respectively to bound the two remaining terms completes the proof.
\end{proof}

\subsection{Properties of $\maxprob$: Proofs of \Cref{lem:max-prob-conv,lem:multi-max-prob}}
\label{subsec:properties-of-maxprob}

\begin{proof}[Proof of \Cref{lem:max-prob-conv}]
For any $z \in \supp(X + Y)$, we have
\begin{align*}
\Pr[X + Y = z] &= \sum_{y \in \supp(Y)} \Pr[X = z - y] \Pr[Y = y] \\ &\leq \sum_{y \in \supp(Y)} \maxprob(X) \Pr[Y = y] = \maxprob(X).
\end{align*}
This implies that $\maxprob(X + Y) \leq \maxprob(X)$.
\end{proof}

For \Cref{lem:multi-max-prob}, we will use the following properties of the gamma function. 
These properties are well-known and their proofs can be found in standard textbooks on the topic (e.g., \citealt{Artin64}).
Recall that a function is said to be logarithmically convex if its logarithm is convex.
\begin{lemma} \label{lem:gamma-property}
$\Gamma: \R \to \R$ is a function that satisfies
\begin{enumerate}[(i)]
\item $\Gamma(1 + r) = r!$ for all $r \in \Z_{\geq 0}$;
\item $\Gamma$ is logarithmically convex on $(0, \infty)$;
\item For $x \geq 1$, $\Gamma(1 + x) = \sqrt{2\pi x}(x/e)^x e^{\mu(x)}$, where $\mu: [1, \infty) \to \R_{\ge 0}$ is decreasing.
\end{enumerate}
\end{lemma}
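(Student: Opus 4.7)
The plan is to derive all three properties from the integral representation $\Gamma(s) = \int_0^\infty t^{s-1} e^{-t}\,dt$, valid for $s > 0$, which I would take as the definition.

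Property (i) follows from integration by parts: with $u = t^s$ and $dv = e^{-t}\,dt$, the integral for $\Gamma(s+1)$ reduces to $s\,\Gamma(s)$, establishing the functional equation $\Gamma(s+1) = s\,\Gamma(s)$. Combining this with the direct evaluation $\Gamma(1) = \int_0^\infty e^{-t}\,dt = 1$ and inducting on $r$ yields $\Gamma(1+r) = r!$.

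For property (ii), I would invoke Hölder's inequality. For $x,y > 0$ and $\alpha \in (0,1)$, factor the integrand as
\[
t^{\alpha x + (1-\alpha) y - 1} e^{-t} = \bigl(t^{x-1} e^{-t}\bigr)^{\alpha} \bigl(t^{y-1} e^{-t}\bigr)^{1-\alpha},
\]
then apply Hölder with conjugate exponents $1/\alpha$ and $1/(1-\alpha)$ to obtain $\Gamma(\alpha x + (1-\alpha) y) \le \Gamma(x)^\alpha \Gamma(y)^{1-\alpha}$, which is exactly log-convexity on $(0,\infty)$.

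Property (iii) is the main obstacle, since it demands not just the asymptotic Stirling expansion but also the specific multiplicative constant $\sqrt{2\pi}$ together with a remainder $\mu$ that is both non-negative and monotone. My plan is to use Binet's first integral formula
\[
\mu(x) = \int_0^\infty \left(\frac{1}{e^t - 1} - \frac{1}{t} + \frac{1}{2}\right) \frac{e^{-xt}}{t}\,dt,
\]
which packages all these properties simultaneously. The kernel $h(t) := \frac{1}{e^t-1} - \frac{1}{t} + \frac{1}{2}$ is strictly positive on $(0,\infty)$, which I would check by comparing the Taylor coefficients of $t/(e^t-1)$ to the corresponding geometric bounds. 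Positivity of $h$ immediately gives $\mu(x) \ge 0$, and differentiating under the integral sign shows $\mu'(x) < 0$, so $\mu$ is strictly decreasing on $[1,\infty)$. The identity $\Gamma(1+x) = \sqrt{2\pi x}\,(x/e)^x\,e^{\mu(x)}$ itself follows by exponentiating Binet's representation of $\log\Gamma$; the only genuinely delicate step is pinning down the constant $\sqrt{2\pi}$, which I would do in the standard way by taking $x \to \infty$ and matching against Wallis's product (or, equivalently, by invoking the Legendre duplication formula $\Gamma(x)\Gamma(x+\tfrac12) = 2^{1-2x}\sqrt{\pi}\,\Gamma(2x)$ in the limit).
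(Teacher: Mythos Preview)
Your proposal is correct and follows the standard textbook development: integration by parts for (i), H\"older for log-convexity in (ii), and Binet's first integral for (iii), with the positivity of the kernel $h(t)=\frac{1}{e^t-1}-\frac{1}{t}+\frac{1}{2}$ delivering both $\mu\ge 0$ and $\mu$ decreasing. These arguments are sound.

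The paper, however, does not prove this lemma at all. It simply states that the properties are well known and points to Artin's monograph on the gamma function as a reference. So there is no approach to compare against: you have supplied a genuine proof where the paper only supplies a citation. If anything, your write-up is more self-contained than what the paper offers; the only caveat is that fully establishing Binet's formula (including the constant $\sqrt{2\pi}$) is itself a nontrivial computation, so in a paper one would typically just cite it, which is effectively what the authors do.
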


\begin{proof}[Proof of \Cref{lem:multi-max-prob}]
Consider any $\bx \in \Delta_r^{k-1}$, we have
\begin{align*}
\multi{r}{\frac{1}{k} \cdot \bone}\{\bx\} &= \frac{r!}{\prod_{i \in [k]} x_i!} \cdot \frac{1}{k^r} \\
&= \frac{\Gamma(r+1)}{\prod_{i \in [k]} \Gamma(x_i+1)} \cdot \frac{1}{k^r} \\
&\leq \frac{\Gamma(r+1)}{\Gamma(r/k+1)^k} \cdot \frac{1}{k^r} \\
&\leq \frac{\sqrt{2 \pi r}(r/e)^r}{\left(\sqrt{2 \pi r/k}\left(\frac{r}{ke}\right)^{r/k}\right)^k} \cdot \frac{1}{k^r} \\
&= \frac{k^{k/2}}{(2\pi r)^{\frac{k-1}{2}}},
\end{align*}
where we use the properties of the gamma function from \Cref{lem:gamma-property} in the respective order.
\end{proof} 

We remark that bounds in the same vein as \Cref{lem:maxprob-main} have been shown, e.g., by \citet{Ushakov86} and \citet{PostnikovYu88}. 
However, these are not sufficient for our purposes, since either they require the distributions to be identical or the provided probability bound is not tight for $k > 2$ (up to a constant factor). 

\subsection{Proof of \Cref{lem:prob-smallest-group}}

The proof of \Cref{lem:prob-smallest-group} requires a few intermediate lemmas. 
To state these lemmas, we need an additional notation. 
For any non-atomic real-valued distribution $\cP$ and any $p \in (0, 1)$, let $Q_\cP(p)$ denote its $p$-th quantile, i.e., the (unique) point $x$ such that $\Pr_{X \sim \cP}[X < x] = p$. 

We start with the following inequality, which is an immediate consequence of Chebyshev's inequality.
\begin{lemma} \label{lem:quantile-bound}
For any non-atomic real-valued distribution $\cP$ with mean $\mu$ and standard deviation $\sigma_{\cP}$ and any $p \in (0, 1)$,
\begin{align*}
\mu - \frac{\sigma_{\cP}}{\sqrt{p}} \leq Q_\cP(p) \leq \mu + \frac{\sigma_{\cP}}{\sqrt{1- p}}.
\end{align*}
\end{lemma}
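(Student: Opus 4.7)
The plan is to deduce both bounds from a direct one-sided application of Chebyshev's inequality, using the definition of the quantile together with non-atomicity.

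Let $X \sim \cP$. Recall Chebyshev's inequality: for any $t > 0$, $\Pr[|X - \mu| \ge t\sigma_\cP] \le 1/t^2$. In particular, since $\{X - \mu \ge t\sigma_\cP\} \subseteq \{|X-\mu| \ge t\sigma_\cP\}$ and similarly $\{X - \mu \le -t\sigma_\cP\} \subseteq \{|X-\mu| \ge t\sigma_\cP\}$, both one-sided tails are bounded by $1/t^2$.

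For the upper bound on $Q_\cP(p)$, first I would set $t = 1/\sqrt{1-p}$ and apply the one-sided Chebyshev tail to get
\[
\Pr\!\left[X \ge \mu + \frac{\sigma_\cP}{\sqrt{1-p}}\right] \le 1-p.
\]
By non-atomicity the CDF is continuous, so this rearranges to $\Pr[X < \mu + \sigma_\cP/\sqrt{1-p}] \ge p$. Since $Q_\cP(p)$ is the (unique) point $x$ with $\Pr[X < x] = p$ and the CDF is nondecreasing, this forces $Q_\cP(p) \le \mu + \sigma_\cP/\sqrt{1-p}$.

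For the lower bound, I would symmetrically set $t = 1/\sqrt{p}$ and apply the other one-sided tail:
\[
\Pr\!\left[X \le \mu - \frac{\sigma_\cP}{\sqrt{p}}\right] \le p,
\]
so $\Pr[X < \mu - \sigma_\cP/\sqrt{p}] \le p$, which by the same quantile-monotonicity argument yields $Q_\cP(p) \ge \mu - \sigma_\cP/\sqrt{p}$. There is no real obstacle here: the entire lemma is a two-line consequence of Chebyshev once one is careful about which side of the tail and which value of $t$ to use; the only minor bookkeeping is the appeal to non-atomicity to convert between $\Pr[X < x]$ and $\Pr[X \le x]$ and to guarantee that the quantile is well-defined.
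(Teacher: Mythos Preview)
Your proof is correct and follows exactly the approach the paper indicates: the lemma is stated as ``an immediate consequence of Chebyshev's inequality,'' and your argument spells out precisely that one-line deduction, applying the one-sided tail with $t = 1/\sqrt{1-p}$ and $t = 1/\sqrt{p}$ respectively and using non-atomicity to pass between strict and weak inequalities.
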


Next is a bound on the variance of a distribution after conditioning it to be larger than a certain value.
\begin{lemma} \label{lem:cond-var}
For any non-atomic real-valued distribution $\cP$ with standard deviation $\sigma_{\cP}$ and any $p \in (0, 1)$, let $\cP_{\geq x}$ denote the conditional distribution of $\cP$ on $[x, \infty)$. 
Then, we have $\var(\cP_{\geq Q_\cP{(p)}}) \leq \sigma_{\cP}^2 / (1 - p)$.
\end{lemma}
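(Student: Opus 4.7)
The plan is to bound the conditional variance by choosing a well-placed (but suboptimal) center and then inflating the unconditional second moment by the conditioning factor. Concretely, I would use the elementary identity/inequality that $\var(Y) \leq \E[(Y - c)^2]$ holds for any constant $c$, and apply it with $c$ equal to the mean $\mu$ of the unconditioned distribution $\cP$ itself rather than the (different) mean of $\cP_{\geq q}$. Let $X \sim \cP$ and set $q := Q_\cP(p)$; since $\cP$ is non-atomic, the quantile is well-defined and $\Pr[X \geq q] = 1 - p$ exactly.

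The argument is then the short chain
\begin{align*}
\var(\cP_{\geq q})
&\leq \E\!\left[(X - \mu)^2 \,\middle|\, X \geq q\right] \\
&= \frac{\E\!\left[(X - \mu)^2 \cdot \bone[X \geq q]\right]}{\Pr[X \geq q]} \\
&\leq \frac{\E\!\left[(X - \mu)^2\right]}{1 - p}
= \frac{\sigma_\cP^2}{1 - p},
\end{align*}
where the first step applies the variational characterization of variance with center $\mu$, the second is the definition of conditional expectation, and the third drops the non-negative indicator $\bone[X \geq q]$ and substitutes $\Pr[X \geq q] = 1 - p$.

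I do not anticipate any real obstacle: the argument reduces to two standard facts, namely that variance is minimized at the mean and that conditioning on an event of probability $1-p$ scales non-negative expectations by exactly $1/(1-p)$. Non-atomicity of $\cP$ enters only to guarantee that $q$ is unambiguously defined and that $\Pr[X \geq q]$ equals precisely $1-p$ (with no point mass at $q$ to quibble over).
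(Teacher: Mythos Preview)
Your proof is correct and is essentially identical to the paper's own argument: both use the variational characterization $\var(Y)\le\E[(Y-c)^2]$ with $c=\mu_\cP$, rewrite the conditional expectation as $\E[(X-\mu_\cP)^2\cdot\bone[X\ge q]]/(1-p)$, and drop the indicator. The only cosmetic difference is that you spell out why non-atomicity gives $\Pr[X\ge q]=1-p$ exactly.
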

\begin{proof}
For any distribution~$\cY$ with mean~$\mu_\cY$ and any real number~$c$, one can check that $\E_{Y\sim \cY}[(Y-c)^2] = \var(\cY) + (c-\mu_Y)^2 \ge \var(\cY)$.

Let $\mu_\cP$ denote the mean of $\cP$. 
We have
\begin{align*}
&\var(\cP_{\geq Q_\cP(p)}) \\
&\leq \E_{X \sim \cP_{\geq Q_\cP(p)}}[(X - \mu_\cP)^2] \\
&= \frac{\E_{X \sim \cP}[(X - \mu_\cP)^2 \cdot \bone[X \geq Q_{\cP}(p)]]}{1 - p} \\
&\leq \frac{\E_{X \sim \cP}[(X - \mu_\cP)^2]}{1 - p} \\
&= \frac{\sigma_{\cP}^2}{1 - p},
\end{align*}
as desired.
\end{proof}

We next bound the probability that two independent and identically distributed (i.i.d.) random variables differ by at most a certain amount.

\begin{lemma} \label{lem:bounded-diff}
For any non-atomic real-valued distribution $\cP'$ with standard deviation $\sigma'$ and any $\xi > 0$, we have
\begin{align*}
\Pr_{Y_1, Y_2 \overset{\iid}{\sim} \cP'}[|Y_1 - Y_2| < \xi] \geq \min\left\{\frac{1}{8}, \frac{\xi}{32\sigma'}\right\}.
\end{align*}
\end{lemma}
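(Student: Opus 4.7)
My plan is to localize most of the mass of $\cP'$ to an interval of length $O(\sigma')$ using its quantiles, then partition that interval into sub-intervals of length $\xi$ and apply a Cauchy-Schwarz (QM--AM) argument to lower bound the probability that $Y_1$ and $Y_2$ land in the same sub-interval.

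Concretely, I would first define $I := [Q_{\cP'}(1/4), Q_{\cP'}(3/4)]$ and apply \Cref{lem:quantile-bound} with $p = 1/4$ and $p = 3/4$ to conclude that $|I| \le 4\sigma'$. By definition $\Pr_{Y \sim \cP'}[Y \in I] = 1/2$, so by independence $\Pr[Y_1, Y_2 \in I] = 1/4$.

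I would then split into two cases matching the minimum in the statement. If $\xi \ge 4\sigma'$, then $Y_1, Y_2 \in I$ forces $|Y_1 - Y_2| \le |I| \le 4\sigma' \le \xi$. Since $\cP'$ is non-atomic, $\Pr[|Y_1 - Y_2| = \xi] = 0$, so
$$\Pr[|Y_1 - Y_2| < \xi] \ge \Pr[Y_1, Y_2 \in I] = \tfrac{1}{4} \ge \tfrac{1}{8}.$$
Otherwise, $\xi < 4\sigma'$, and I would partition $I$ into $t := \lceil |I|/\xi \rceil$ consecutive half-open sub-intervals $I_1, \dots, I_t$, each of length at most $\xi$. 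Since $\xi \le 4\sigma'$, this gives $t \le 4\sigma'/\xi + 1 \le 8\sigma'/\xi$. Letting $q_j := \Pr_{Y \sim \cP'}[Y \in I_j]$, we have $\sum_{j=1}^t q_j = 1/2$, and whenever $Y_1, Y_2$ lie in the same $I_j$, $|Y_1 - Y_2| < \xi$ almost surely (again by non-atomicity). Hence, by Cauchy-Schwarz,
$$\Pr[|Y_1 - Y_2| < \xi] \ge \sum_{j=1}^t q_j^2 \ge \frac{1}{t}\Bigl(\sum_{j=1}^t q_j\Bigr)^2 \ge \frac{\xi}{8\sigma'} \cdot \frac{1}{4} = \frac{\xi}{32\sigma'}.$$

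The only genuine subtlety is passing from the weak inequality $|Y_1 - Y_2| \le \xi$ to the strict one, which the non-atomicity assumption cleanly handles. Beyond that, the argument is bookkeeping with the quantile bound and a standard convexity inequality, so I would not expect any real obstacles.
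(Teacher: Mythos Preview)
Your proof is correct and follows essentially the same approach as the paper: localize half the mass to the interquartile interval of length at most $4\sigma'$ via \Cref{lem:quantile-bound}, partition it into $O(\sigma'/\xi)$ sub-intervals of length at most $\xi$, and apply Cauchy--Schwarz. The only cosmetic differences are that the paper takes $T = \lceil 4\sigma'/\xi\rceil$ rather than your $t = \lceil |I|/\xi\rceil$, and handles the case $\xi > 4\sigma'$ by noting the right-hand side stops increasing rather than arguing it directly.
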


\begin{proof}
We may assume that $\xi \leq 4\sigma'$, as the right-hand side does not increase further if $\xi > 4\sigma'$.
\Cref{lem:quantile-bound} implies that $|Q_{\cP'}(0.25) - Q_{\cP'}(0.75)| \le 4\sigma'$. 
Let $T = \lceil 4\sigma'/\xi \rceil \ge 1$ and $I_1, \dots, I_T$ be any partition of  $[Q_{\cP'}(0.25), Q_{\cP'}(0.75)]$ into disjoint intervals of length at most $\xi$ each. 
We have
\begin{align*}
&\Pr_{Y_1, Y_2 \overset{\iid}{\sim} \cP'}[|Y_1 - Y_2| < \xi] \\
&\geq \sum_{t \in [T]} \Pr_{Y_1, Y_2 \overset{\iid}{\sim} \cP'}[Y_1, Y_2 \in I_t] \\
&= \sum_{t \in [T]} \Pr_{Y \sim \cP'}[Y \in I_t]^2 \\
&\overset{(\star)}{\geq} \frac{1}{T} \left(\sum_{t \in [T]} \Pr_{Y \sim \cP'}[Y \in I_t]\right)^2 \\
&= \frac{1}{T} \cdot \Pr_{Y \sim \cP'}[Y \in [Q_{\cP'}(0.25), Q_{\cP'}(0.75)]]^2 \\
&= \frac{1}{4T} \\
&\geq \frac{\xi}{32\sigma'},
\end{align*}
where $(\star)$ follows from the Cauchy--Schwarz inequality and the last inequality from our choice of $T$.
\end{proof}

Using the two bounds above, we now derive an inequality concerning the difference between the maximum and a non-maximum among $k$ i.i.d.~random variables. 

\begin{lemma} \label{lem:diff-main-lb}
Let $\cP$ be any non-atomic real-valued distribution with standard deviation $\sigma_{\cP}$.
Let $Z_1, Z_2, \dots, Z_k \overset{\iid}{\sim} \cP$. Then, for any $\xi > 0$,
\begin{align*}
&\Pr[Z_1 > Z_2, \dots, Z_k \text{ and } Z_1 - Z_k \leq \xi] \\
&\geq \frac{1}{16k^2} \cdot \min\left\{\frac{1}{8}, \frac{\xi}{64\sigma_{\cP}\sqrt{k}}\right\}.
\end{align*}
\end{lemma}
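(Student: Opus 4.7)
The plan is to lower-bound the target probability by restricting attention to a carefully chosen event $A$ on which $Z_1$ is automatically larger than $Z_2, \dots, Z_{k-1}$, while $Z_1$ and $Z_k$ are both forced into a narrow right tail of $\cP$ where \Cref{lem:bounded-diff} can be invoked. Concretely, set $t := Q_\cP(1 - \tfrac{1}{2k})$ and let
\[
A \;:=\; \{Z_1 \ge t\} \cap \{Z_k \ge t\} \cap \bigcap_{j=2}^{k-1}\{Z_j < t\}.
\]
By independence, $\Pr[A] = (1/(2k))^2 \cdot (1 - 1/(2k))^{k-2}$, and the inequality $\ln(1-x) \ge -x/(1-x)$ yields $(k-2)\ln(1 - 1/(2k)) \ge -(k-2)/(2k-1) \ge -1/2$ for $k \ge 2$, so $\Pr[A] \ge \tfrac{1}{8k^2}$.

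On $A$ we deterministically have $Z_j < t \le Z_1$ for each $j \in \{2, \dots, k-1\}$, so the target event reduces to $\{Z_1 > Z_k\} \cap \{Z_1 - Z_k \le \xi\}$. Conditional on $A$, the variables $Z_1, Z_k$ are iid draws from the non-atomic truncated distribution $\cP_{\ge t}$, so by symmetry
\[
\Pr[Z_1 > Z_k,\ Z_1 - Z_k \le \xi \mid A] \;=\; \tfrac{1}{2} \cdot \Pr_{Y_1, Y_2 \overset{\iid}{\sim} \cP_{\ge t}}[|Y_1 - Y_2| < \xi].
\]
\Cref{lem:cond-var} at level $p = 1 - 1/(2k)$ bounds the variance of $\cP_{\ge t}$ by $2k\,\sigma_\cP^2$, giving $\sigma' := \sqrt{\var(\cP_{\ge t})} \le \sqrt{2k}\,\sigma_\cP \le 2\sqrt{k}\,\sigma_\cP$. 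Then \Cref{lem:bounded-diff} applied to $\cP' = \cP_{\ge t}$ yields $\Pr[|Y_1 - Y_2| < \xi] \ge \min\{1/8,\ \xi/(32\sigma')\} \ge \min\{1/8,\ \xi/(64\sigma_\cP\sqrt{k})\}$. Chaining everything, the target probability is at least
\[
\Pr[A] \cdot \tfrac{1}{2} \cdot \min\!\Big\{\tfrac{1}{8},\ \tfrac{\xi}{64\sigma_\cP\sqrt{k}}\Big\} \;\ge\; \tfrac{1}{16 k^2} \cdot \min\!\Big\{\tfrac{1}{8},\ \tfrac{\xi}{64\sigma_\cP\sqrt{k}}\Big\},
\]
which is exactly the claim.

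The only nontrivial step is calibrating the quantile level: the choice $p = 1 - 1/(2k)$ is what simultaneously keeps $\Pr[A]$ at scale $\Theta(1/k^2)$ \emph{and} keeps the variance blow-up of $\cP_{\ge t}$ at scale $\Theta(k\sigma_\cP^2)$, so that $32\sigma'$ fits under $64\sigma_\cP\sqrt{k}$. A tail that is too thin would shrink $\Pr[A]$ below $\Theta(1/k^2)$, while a tail that is too wide would make the conditional variance too large, breaking the desired constants in either case. Everything else is a routine chaining of the three supporting lemmas together with symmetry and non-atomicity.
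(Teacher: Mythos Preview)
Your proof is correct and essentially identical to the paper's: both fix the quantile level $p = 1 - \tfrac{1}{2k}$, force $Z_1, Z_k$ above $Q_\cP(p)$ and $Z_2, \dots, Z_{k-1}$ below it, then apply \Cref{lem:cond-var} and \Cref{lem:bounded-diff} to the truncated distribution. The only cosmetic difference is that the paper performs the $Z_1 \leftrightarrow Z_k$ symmetrization before restricting to the quantile event (and bounds $(1 - 1/(2k))^{k-2}$ via Bernoulli's inequality), whereas you restrict first and symmetrize conditionally (using $\ln(1-x) \ge -x/(1-x)$); the resulting constants are identical.
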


\begin{proof}
By symmetry, we have
\begin{align*}
&\Pr[Z_1 > Z_2, \dots, Z_k \text{ and } Z_1 - Z_k \leq \xi] \\
&= \frac{1}{2}\big(\Pr[Z_1 > Z_2, \dots, Z_k \text{ and } Z_1 - Z_k \leq \xi] \\
&\qquad + \Pr[Z_k > Z_1, \dots, Z_{k-1} \text{ and } Z_k - Z_1 \leq \xi]\big) \\
&\geq \frac{1}{2} \cdot \Pr[Z_1, Z_k > Z_2, \dots, Z_{k - 1} \text{ and } |Z_1 - Z_k| \leq \xi].
\end{align*}
To bound this term further, let $p = 1 - \frac{1}{2k}$ and $q = Q_{\cP}(p)$. We have
\begin{align*}
&\Pr[Z_1, Z_k > Z_2, \dots, Z_{k - 1} \text{ and } |Z_1 - Z_k| \leq \xi] \\
&\geq \Pr[Z_1, Z_k \geq q > Z_2, \dots, Z_{k - 1} \text{ and } |Z_1 - Z_k| \leq \xi] \\
&= (1 - p)^2 p^{k - 2} \cdot \Pr[|Z_1 - Z_k| \leq \xi \mid Z_1, Z_k \geq q] \\
&= (1 - p)^2 p^{k - 2} \cdot \Pr_{Z'_1, Z'_k \overset{\iid}{\sim} \cP_{\geq q}}[|Z'_1 - Z'_k| \leq \xi] \\
&\geq (1 - p)^2 p^{k - 2} \cdot \min\left\{\frac{1}{8}, \frac{\xi\sqrt{1 - p}}{32\sigma_{\cP}}\right\},
\end{align*}
where the last inequality follows from \Cref{lem:cond-var,lem:bounded-diff}.

Finally, combining the two inequalities above and using the definition of $p$, we have
\begin{align*}
&\Pr[Z_1 > Z_2, \dots, Z_k \text{ and } Z_1 - Z_k \leq \xi] \\
&\geq \frac{1}{2} \cdot \frac{1}{4k^2} \cdot \left(1 - \frac{1}{2k}\right)^{k - 2} \cdot \min\left\{\frac{1}{8}, \frac{\xi}{64\sigma_{\cP}\sqrt{k}}\right\} \\
&\geq \frac{1}{16k^2} \cdot \min\left\{\frac{1}{8}, \frac{\xi}{64\sigma_{\cP}\sqrt{k}}\right\},
\end{align*}
where the last inequality is due to Bernoulli's inequality.
\end{proof}

We are now ready to prove \Cref{lem:prob-smallest-group}. 
At a high level, we first recall that in the case where $A$ is balanced, each $p_i^{A}$ is equal to $1/k$. 
We then analyze the ``probability deficit'' that $p_i^{A}$ incurs when a good is added to some other bundle $A_{i'}$. 
\Cref{lem:diff-main-lb} ensures that, before the addition, the difference between $u_a(A_i)$ and $u_a(A_{i'})$ is small. 
This allows us to lower-bound the probability deficit, which is incurred when the additional value exceeds the difference.

\begin{proof}[Proof of \Cref{lem:prob-smallest-group}]
Assume without loss of generality that $|A_1| \leq |A_2| \leq \dots \leq |A_k|$. 
We will show that the desired inequality holds for $i = 1$. 
The case $|A_1| = 0$ is trivial, so we henceforth assume $|A_1| \geq 1$.
Let $A'_2, \dots, A'_k$ be arbitrary subsets of $A_2, \dots, A_k$ of size $|A_1|$, respectively, and let $Z_1 = u_a(A_1), Z_2 = u_a(A'_2), \dots, Z_k = u_a(A'_k)$. 
Since $Z_1,\dots,Z_k$ are i.i.d.~and non-atomic, we have
\begin{align*}
\Pr[Z_1 \geq Z_2, \dots, Z_k] = \frac{1}{k}.
\end{align*}
Thus, we can bound the desired probability as
\begin{align*}
&\Pr[u_a(A_1) \geq u_a(A_2), \dots, u_a(A_k)] \\
&\leq \Pr[u_a(A_1) \geq u_a(A'_2), \dots, u_a(A'_{k - 1}), u_a(A_k)] \\
&= \Pr[Z_1 \geq Z_2, \dots, Z_k \text{ and } u_a(A_k) \le Z_1] \\
&= \Pr[Z_1 \geq Z_2, \dots, Z_k] \\&\qquad - \Pr[Z_1 \geq Z_2, \dots, Z_k \text{ and } u_a(A_k) > Z_1] \\
&= \frac{1}{k} - \Pr[Z_1 \geq Z_2, \dots, Z_k \text{ and } u_a(A_k) > Z_1].
\end{align*}

Now, observe that $u_a(A_k) = Z_k + u_a(A_k \setminus A'_k)$. 
Since $m$ is not divisible by $k$, we have $A_k \setminus A'_k \ne \emptyset$. 
By definition of quantiles, it holds that $\Pr[u_a(A_k \setminus A'_k) > Q_{\cD}(1/2)] \geq 1/2$. 
Moreover, $u_a(A_k \setminus A'_k)$ is independent of $Z_1, \dots, Z_k$. 
Hence, we have
\begin{align*}
&\Pr[Z_1 \geq Z_2, \dots, Z_k \text{ and } u_a(A_k) > Z_1] \\
&= \Pr[Z_1 \geq Z_2, \dots, Z_k \text{ and } Z_1 - Z_k < u_a(A_k\setminus A_k')] \\
&\geq \Pr[Z_1 \geq Z_2, \dots, Z_k \text{ and }  \\
&\qquad\qquad Z_1 - Z_k < Q_{\cD}(1/2) < u_a(A_k\setminus A_k')] \\
&\geq \frac{1}{2} \cdot \Pr[Z_1 \geq Z_2, \dots, Z_k \text{ and } Z_1 - Z_k < Q_{\cD}(1/2)].
\end{align*}

For any random variable $Y$ supported on $[0,1]$, it holds that 
\[
\var(Y) = \E[Y^2] - \E[Y]^2 \le \E[Y] - \E[Y]^2 \le 1/4. 
\]
Since each of $Z_1,\dots,Z_k$ is a sum of $|A_1|$ independent random variables supported on $[0,1]$, we have 
\[
\var(Z_1), \dots, \var(Z_k) \leq \frac{|A_1|}{4} \leq \frac{m}{4k}.
\]
Moreover, $Q_{\cD}(1/2) > 0$ since $\cD$ is non-atomic and supported on $[0, 1]$. 
Thus, we may apply \Cref{lem:diff-main-lb} to get
\begin{align*}
&\Pr[Z_1 \geq Z_2, \dots, Z_k \text{ and } Z_1 - Z_k < Q_{\cD}(1/2)] \\
&\geq \frac{Q_{\cD}(1/2)}{512k^2\sqrt{m}}.
\end{align*}
Finally, putting everything together yields the desired bound for $\alpha = \frac{Q_{\cD}(1/2)}{1024}$. 
\end{proof}

\subsection{Proof of \Cref{lem:prob-single-alloc-ef}}

From \Cref{obs:event-prob-simplified}, \Cref{cor:lem:stirling-mult-equip}, and the fact that $n \geq k$, we have $\Pr[\cE_A] \leq \frac{\sqrt{k}}{e^{n \cdot \kl{\frac{\bone}{k}}{\bp^{A}}}}$. 
Also, Lemmas~\ref{lem:pinsker} and~\ref{lem:prob-smallest-group} respectively imply that $\kl{\frac{\bone}{k}}{\bp^{A}} \geq 2 \cdot \tv{\frac{\bone}{k}}{\bp^{A}}^2 \geq \frac{\alpha^2}{k^4 m}$. Combining these yields the claimed bound.

\end{document}